\long\def\comment#1{}
\newcommand{\nop}[1]{}
\newtheorem{definition}{\bf Definition}
\newtheorem{lemma}{\bf Lemma}[section]
\newcommand\abs[1]{\left\lvert #1 \right\rvert}
\title{Reliable Diversity-Based Spatial Crowdsourcing by Moving Workers}
\author{
	{Peng Cheng{\small $~^{\#}$}, Xiang Lian{\small $~^{*}$},  Zhao Chen{\small $~^{\#}$}, Rui Fu{\small $~^{\#}$}, Lei Chen{\small $~^{\#}$}, Jinsong Han{\small $~^{\dagger}$}, Jizhong Zhao{\small $~^{\dagger}$} } %
	\vspace{1.6mm}\\
	\fontsize{10}{10}\selectfont\itshape
	$^{\#}$\,Hong Kong University of Science and Technology, Hong Kong, China\\
	\fontsize{9}{9}\selectfont\ttfamily\upshape
	\{pchengaa, zchenah,  leichen\}@cse.ust.hk, rfu@ust.hk
	\vspace{1.2mm}\\
	\fontsize{10}{10}\selectfont\rmfamily\itshape
	$^{*}$\,University of Texas Rio Grande Valley, Texas, USA\\
	\fontsize{9}{9}\selectfont\ttfamily\upshape
	xiang.lian@utrgv.edu
	\vspace{1.2mm}\\
	\fontsize{10}{10}\selectfont\rmfamily\itshape
	$^{\dagger}$\,Xi'an Jiaotong University, Shaanxi, China\\
	\fontsize{9}{9}\selectfont\ttfamily\upshape
	\{hanjinsong, zjz\}@mail.xjtu.edu.cn
}
\begin{document}

\maketitle

\begin{abstract}
	With the rapid development of mobile devices and the crowdsourcing
	platforms, the spatial crowdsourcing has attracted much attention
	from the database community, specifically,  spatial crowdsourcing
	refers to sending a location-based request to workers according to
	their positions. In this paper, we consider an important spatial
	crowdsourcing problem, namely \textit{reliable diversity-based
		spatial crowdsourcing} (RDB-SC), in which spatial tasks (such as
	taking videos/photos of a landmark or firework shows, and checking
	whether or not parking spaces are available) are time-constrained,
	and workers are moving towards some directions. Our RDB-SC problem
	is to assign workers to spatial tasks such that the completion
	reliability and the spatial/temporal diversities of spatial tasks
	are maximized. We prove that the RDB-SC problem is NP-hard and
	intractable. Thus, we propose three effective approximation
	approaches, including greedy, sampling, and divide-and-conquer
	algorithms. In order to improve the efficiency, we also design an
	effective cost-model-based index, which can dynamically maintain
	moving workers and spatial tasks with low cost, and efficiently
	facilitate the retrieval of RDB-SC answers. Through extensive
	experiments, we demonstrate the efficiency and effectiveness of our
	proposed approaches over both real and synthetic datasets.
\end{abstract}

\section{Introduction}
\label{sec:introduction}

Recently, with the ubiquity of smart mobile devices and high-speed
wireless networks, people can now easily work as moving sensors to
conduct sensing tasks, such as taking photos and recording
audios/videos. While data submitted by mobile users often contain
spatial-temporal-related information, such as real-world scenes
(e.g., street view of Google Maps \cite{GoogleMapStreetView}), video
clips (e.g., MediaQ \cite{mediaq}), local hotspots (e.g., Foursquare
\cite{foursquare}), and traffic conditions (e.g., Waze \cite{waze}),
the \textit{spatial crowdsourcing} platform
\cite{deng2013maximizing, kazemi2012geocrowd} has nowadays drawn
much attention from both academia (e.g., the data\-base community) and
industry (e.g., Amazon's AMT \cite{amt}).

\begin{figure}[ht]\vspace{-2ex}\centering
	\scalebox{0.28}[0.28]{\includegraphics{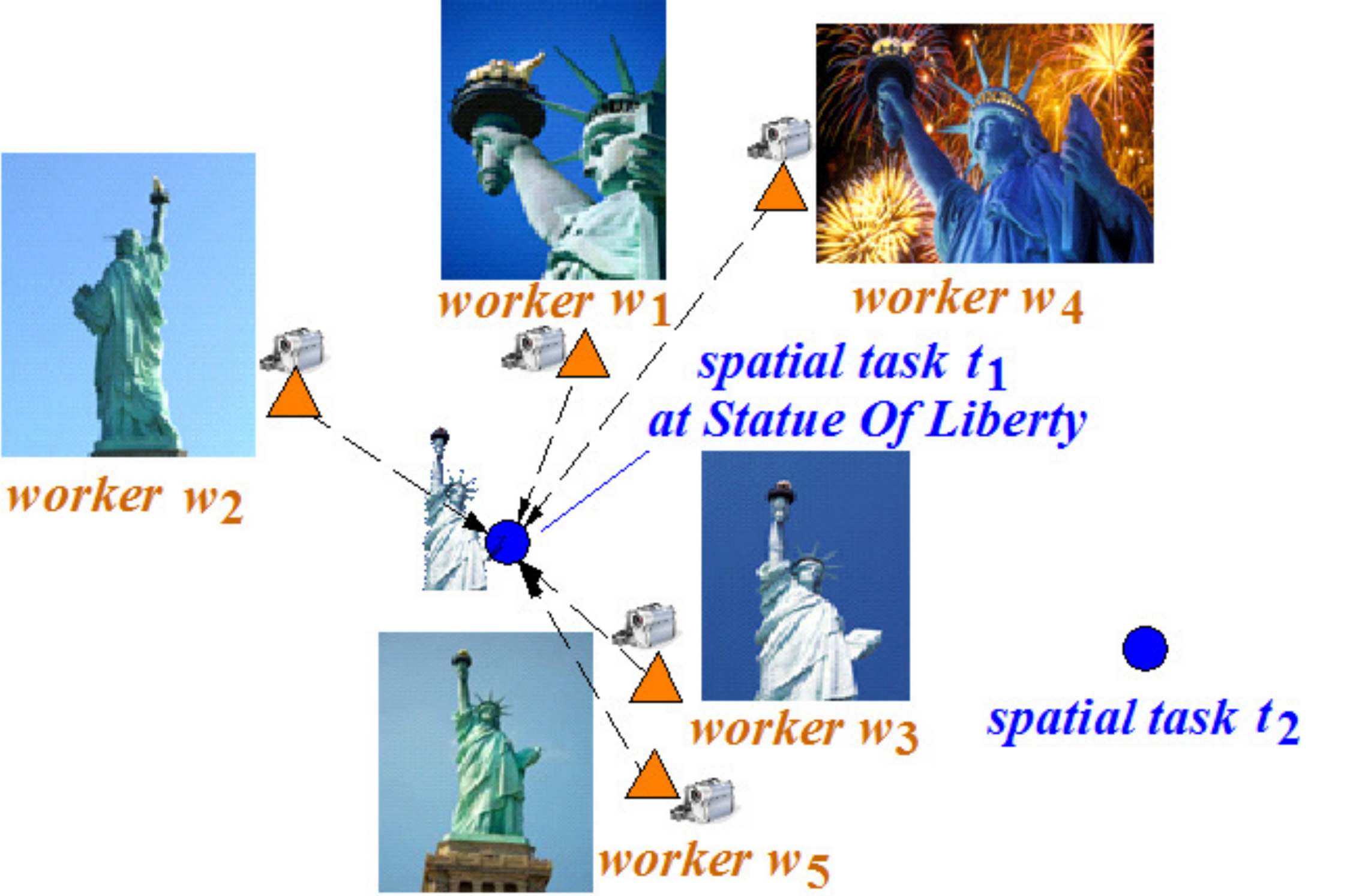}}\vspace{-2ex}
	\caption{\small An Example of Taking Photos/Videos of a Landmark
		(Statue of Liberty) in the Spatial Crowdsourcing
		System.}\vspace{-5ex}
	\label{fig:SC_example}
\end{figure}

Specifically, a spatial crowdsourcing platform
\cite{deng2013maximizing,kazemi2012geocrowd} is in charge of
assigning a number of workers to nearby \textit{spatial tasks}, such
that workers need to physically move towards some specified
locations to finish tasks (e.g., taking photos/videos).

\vspace{0.5ex}\noindent {\bf Example 1 (Taking Photos/Videos of a
	Landmark).} {\it    Consider a scenario of the spatial crowdsourcing
	in Figure \ref{fig:SC_example}, in which there are two spatial tasks
	at locations $t_1$ and $t_2$, and 5 workers, $w_1 \sim w_5$. In
	particular, the spatial task $t_1$ is for workers to take 2D
	photos/videos of a landmark, the Statue of Liberty, while walking
	from ones' current locations towards it. The resulting 2D
	photos/videos are useful for real applications such as virtual tours
	and 3D landmark reconstruction. Therefore, the task requester is
	usually interested in obtaining a full view of the landmark from
	diverse directions (e.g., photos from the back of the statue that
	not many people saw before).
	
	As shown in Figure \ref{fig:SC_example}, workers $w_1$ and $w_4$ can
	take photos from left hand side of the statue, worker $w_2$ can take
	photos from the back of the statue, and workers $w_3$ and $w_5$ can
	take photos from the front of the statue. Since photos/videos from
	similar directions are not informative for the 3D reconstruction or
	virtual tours, the spatial crowdsourcing system needs to select
	those workers who can take photos of the statue with as diverse
	directions as possible, and then assign task $t_1$ to them.
	
	Note that, when worker $w_4$ arrives at the location of $t_1$,
	he can take a photo of the landmark at night with fireworks. As
	a result, by assigning task $t_1$ to worker $w_4$, we can obtain a
	quite diverse photo at night for virtual tours, compared with that
	taken by worker $w_1$ in the daytime (even though they are taken
	from similar angles). Thus, it is also important to consider the
	temporal diversity of taking the photos, in terms of the arrival
	times of workers at $t_1$. \qquad $\blacksquare$
	
}

\vspace{0.5ex}\noindent {\bf Example 2 (Available Parking Space
	Monitoring over a Region).} {\it    In the application of monitoring
	free parking spaces in a spatial region, it is important to analyze
	the photos taken from diverse directions and at different time
	periods of the day, and predict the trend of available parking
	spaces in the future. This is because some available parking spaces
	might be hidden by other cars for photos just from one single
	direction (or multiple similar directions), and moreover, photos
	taken at different timestamps have richer information than those
	taken at a single time point, for the purpose of predicting the
	availability of parking spaces. Therefore, in this case, the spatial
	crowdsourcing system needs to assign such a task to those workers
	with diverse walking directions to it and arrival times. \qquad
	$\blacksquare$
	
}

In this paper, we will investigate a realistic scenario of spatial
crowdsourcing, where workers are dynamically moving towards some
directions, and spatial tasks are constrained by valid time periods.
For example, a worker may want to do spatial
tasks on the way home, and thus he/she tends to accept tasks only along
the direction to home, rather than an opposite direction. Similarly,
a spatial task of taking photos of the statue, together with
fireworks, is restricted by the period of the firework show time.
Therefore, a worker can only conduct a task within the constrained
time range and prefer to accepting the tasks close to his/her moving
direction. In this paper, we characterize features of moving workers
and time-constrained spatial tasks, which have not been studied
before.

Under the aforementioned realistic scenario, we propose the problem
of dynamic task-and-worker assignment, by considering the answer 
quality of spatial tasks, in terms
of two measures: \textit{spatial-temporal diversities} and
\textit{reliability}. In particular, inspired by the two applications above (Examples 1
and 2), for spatial tasks, photos/videos from diverse angles or
timestamps can provide a comprehensive view of the landmark, which
is more preferable than those taken from a single boring
direction/time; similarly, photos taken by different directions and
periods are more useful for the trend prediction of available
parking spaces. Therefore, in this paper, we will introduce the
concepts of \textit{spatial diversity} and \textit{temporal
	diversity} to spatial crowdsourcing, which capture the diversity
quality of the returned answers (e.g., photos taken from different
angles and at different timestamps) to spatial tasks.

Furthermore, in reality, it is possible that answers provided by
workers are not always correct, for example, the uploaded
photos/videos might be fake ones, or workers may deny the assigned
tasks. Thus, we will model the confidence of each worker, and in
turn, guarantee high \textit{reliability} of each spatial task, which is defined as the confidence
that at least one worker assigned to this task can give a high quality answer.

Note that, while existing works on spatial crowdsourcing
\cite{deng2013maximizing, kazemi2012geocrowd} focused on the
assignment of workers and tasks to maximize the total number of
completed tasks, they did not consider much about the constrained features
of workers/tasks (workers' moving directions and tasks' time
constraints). Most importantly, they did not take into
account the quality of the returned answers.

By considering both quality measures of spatial-temporal diversities
and reliability, in this paper, we will formalize the problem of
\textit{reliable diversity-based spatial crowdsourcing} (RDB-SC),
which aims to assign moving workers to time-constrained spatial
tasks such that both reliability and diversity are maximized. To the
best of our knowledge, there are no previous works that study
reliability and spatial/temporal diversities in the spatial
crowdsourcing. However, efficient processing of the RDB-SC problem
is quite challenging. In particular, we will prove that the RDB-SC
problem is NP-hard, and thus intractable. Therefore, we propose
three approximation approaches, that is, the greedy, sampling, and
divide-and-conquer algorithms, in order to efficiently tackle the
RDB-SC problem. Furthermore, to improve the time efficiency, we design a
cost-model-based index to dynamically maintain moving workers and
time-constrained spatial tasks, and efficiently facilitate the
dynamic assignment in the spatial crowdsourcing system. Finally,
through extensive experiments, we demonstrate the efficiency and
effectiveness of our approaches.

To summarize, we make the following contributions.\vspace{-2ex}

\begin{itemize}
	\item We formally propose the problem of reliable diversity-based spatial
	crowdsourcing (RDB-SC) in Section \ref{sec:problem_def}, by
	introducing the reliability and diversity to guarantee the quality
	of spatial tasks.\vspace{-1.5ex}
	
	\item We prove that the RDB-SC problem is NP-hard, and thus
	intractable in Section \ref{sec:reduction}.\vspace{-1.5ex}
	
	\item We propose three approximation approaches, greedy, sampling, and divide-and-conquer algorithms, in Sections \ref{sec:greedy}, \ref{sec:sampling} and
	\ref{sec:D&C}, respectively, to tackle the RDB-SC
	problem.\vspace{-1.5ex}

	\item We conduct extensive experiments in Section \ref{sec:exper} on both real and synthetic datasets and show the efficiency and effectiveness of our
	approaches.\vspace{-1.5ex}
\end{itemize}

We design a cost-model-based index structure to dynamically 
maintain workers and tasks in Section \ref{sec:gridIndexCostModel}. 
Section \ref{sec:related} overviews previous works on (spatial)
crowdsourcing. Finally, Section \ref{sec:conclusion} concludes this paper.

\vspace{-1ex}
\section{Problem Definition}
\label{sec:problem_def}\vspace{-1ex}

\subsection{Time-Constrained Spatial Tasks} \vspace{-1ex}

We first define the time-constrained spatial tasks in the
crowdsourcing applications.

\begin{definition}
	$($Time-Constrained Spatial Tasks$)$ Let $T=\{t_1,$ $t_2, ...,
	t_m\}$ be a set of $m$ time-constrained spatial tasks. Each spatial
	task $t_i$ ($1\leq i\leq m$) is located at a specific location
	$l_i$, and associated with a valid time period $[s_i, e_i]$. \qquad
	$\blacksquare$ \label{definition:task}\vspace{-2ex}
\end{definition}

In this paper, we consider spatial and time-constrained tasks, such
as ``taking 2D photos/videos for the Statue of Liberty together with
fireworks'', or ``taking photos of parking places during open hours
of the parking area in a region''. Therefore, in such scenarios,
each task can only be accomplished at a specific location, and,
moreover, satisfy the time constraint. For example, photos should be
taken by people in person and within the period of the firework
show. Therefore, in Definition \ref{definition:task}, we require
each spatial task $t_i$ be accomplished at a spatial location $l_i$
(for $1\leq i\leq m$), and within a valid period $[s_i, e_i]$.

The set of spatial tasks is dynamically changing. That is, the newly
created tasks keep on arriving, and those completed (or
expired) tasks are removed from the crowdsourcing system.

\subsection{Dynamically Moving Workers} 

Next, we consider dynamically moving workers.\vspace{-2ex}

\begin{definition} $($Dynamically Moving Workers$)$ Let
	$W=\{w_1,$ $w_2,$ $..., w_n\}$ be a set of $n$ workers. Each worker
	$w_j$ ($1\leq j\leq n$) is currently located at position $l_j$,
	moving with velocity $v_j$, and towards the direction with angle
	$\alpha_j\in [\alpha_j^-, \alpha_j^+]$. Each worker $w_j$ is
	associated with a confidence $p_j\in [0, 1]$, which indicates the
	reliability of the worker that can do the task.
	
	\qquad $\blacksquare$ \label{definition:worker}\vspace{-2ex}
\end{definition}

Intuitively, a worker $w_j$ ($1\leq j\leq n$) may want to do tasks
on the way to some place during the trip. Thus, as mentioned in
Definition \ref{definition:worker}, the worker can pre-register the
angle range, $[\alpha_j^-, \alpha_j^+]$, of one's current moving
direction. In other words, the worker is only willing to accomplish
tasks that do not deviate from his/her moving direction
significantly. For other (inconvenient) tasks (e.g., opposite to the
moving direction), the worker tends to ignore/reject the task
request, thus, the system would not assign such tasks to this
worker. In the case that the worker has no targeting destinations
(i.e., free to move), he/she can set $[\alpha_j^-, \alpha_j^+]$ to
$[0, 2 \pi]$.

After being assigned with a spatial task, a worker sometimes may not
be able to finish the task. For example, the worker might reject the
task request (e.g., due to other tasks with higher prices), do the
task incorrectly (e.g., taking a wrong photo), or miss the deadline
of the task. Thus, as given in Definition \ref{definition:worker},
each worker $w_j$ is associated with a confidence $p_j\in [0, 1]$,
which is the probability (or reliability) that $w_j$ can
successfully finish a task (inferred from historical data of this
worker). In this paper, we consider the model of server assigned tasks (SAT) \cite{kazemi2012geocrowd}. That is, we assume that once a worker accepts the 
assigned task, the worker will voluntarily do the task. Similar to spatial 
tasks, workers can freely register or leave the
crowdsourcing system. Thus, the set of workers is also dynamically
changing.

\subsection{Reliable Diversity-Based Spatial Crowdsourcing}
\label{subsec:RDB-SC}

With the definitions of tasks and workers, we are now ready to
formalize our spatial crowdsourcing problem (namely, RDB-SC), which
assigns dynamically moving workers to time-constrained spatial tasks
with high accuracy and quality.

Before we provide the formal problem definition, we first quantify
the criteria of our task assignment during the crowdsourcing, in
terms of two measures, the \textit{reliability} and
\textit{spatial/temporal diversity}. The reliability indicates the
confidence that at least some worker can successfully complete the
task, whereas the spatial/tem\-poral diversity reflects the quality of
the task accomplishment by a group of workers, in both spatial and
temporal dimensions (e.g., taking photos from diverse angles and at
diverse timestamps).

\vspace{0.5ex}\noindent {\bf Reliability}. Since not all workers are
trustable, we should consider the reliability, $p_j$, of each workers,
$w_j$, during the task assignment. For example, some workers might
take a wrong photo, or fail to reach the task location before the
valid period. In such cases, the goal of our task assignment is to
guarantee that tasks $t_i$ can be accomplished by those assigned
workers with high confidence.\vspace{-2ex}

\begin{definition} $($Reliability$)$ Given a spatial task $t_i$ and
	its assigned set, $W_i$, of workers, the \textit{reliability},
	$rel(t_i, W_i)$, of a worker assignment w.r.t. $t_i$ is given
	by:\vspace{-2ex}
	
	{\scriptsize
		\begin{eqnarray}
		rel(t_i, W_i) = 1-\prod_{\forall w_j \in W_i} (1-p_j).
		\label{eq:eq1}
		\end{eqnarray}\vspace{-3ex}
	}
	
	\noindent where $p_j$ is the probability that worker $w_j$ can
	reliably complete task $t_i$. \qquad $\blacksquare$
	\label{definition:reliability}\vspace{-2.5ex}
\end{definition}

Intuitively, Eq.~(\ref{eq:eq1}) gives the probability (reliability)
that there exists some worker who can accomplish the task $t_i$
reliably (e.g., taking the right photo, or providing a reliable
answer). In particular, the second term (i.e., $\prod_{\forall w_j
	\in W_i} (1-p_j)$) in Eq.~(\ref{eq:eq1}) is the probability that all
the assigned workers in $W_i$ cannot finish the task $t_i$. Thus,
$1-\prod_{\forall w_j \in W_i} (1-p_j)$ is the probability that task
$t_i$ can be completed by at least one assigned worker in $W_i$.

High reliability usually leads to good confidence of the task
completion. In this paper, we aim to maximize the reliability for
each individual task.

\underline{\it Possible Worlds of the Task Completion}. Since not
all the assigned workers in $W_i$ can complete the task $t_i$, it is
possible that only a subset of workers in $W_i$ can succeed in
accomplishing the task $t_i$ in the real world. In practice, there
are an exponential number (i.e., $O(2^{|W_i})$) of such possible
subsets.

Following the literature of probabilistic databases \cite{Dalvi07},
we call each possible subset in $W_i$ a \textit{possible world},
denoted as $pw(W_i)$, which contains those workers who may finish
the task $t_i$ in reality. Each possible world, $pw(W_i)$, is
associated with a probability confidence,\vspace{-3ex}

{\scriptsize
	\begin{eqnarray}
	Pr\{pw(W_i)\} = \prod_{\forall w_j\in pw(W_i)} \hspace{-2ex}p_j \cdot
	\prod_{\forall w_j\in (W_i-pw(W_i))} \hspace{-4ex}(1-p_j),\label{eq:eq2}
	\end{eqnarray}\vspace{-3ex}
}

\noindent which is given by multiplying probabilities that workers
in $W_i$ appear or do not appear in the possible world $pw(W_i)$.

\begin{figure}[ht]\vspace{-2ex}\centering
	\subfigure[][{\scriptsize Spatial Diversity}]{
		\scalebox{0.28}[0.28]{\includegraphics{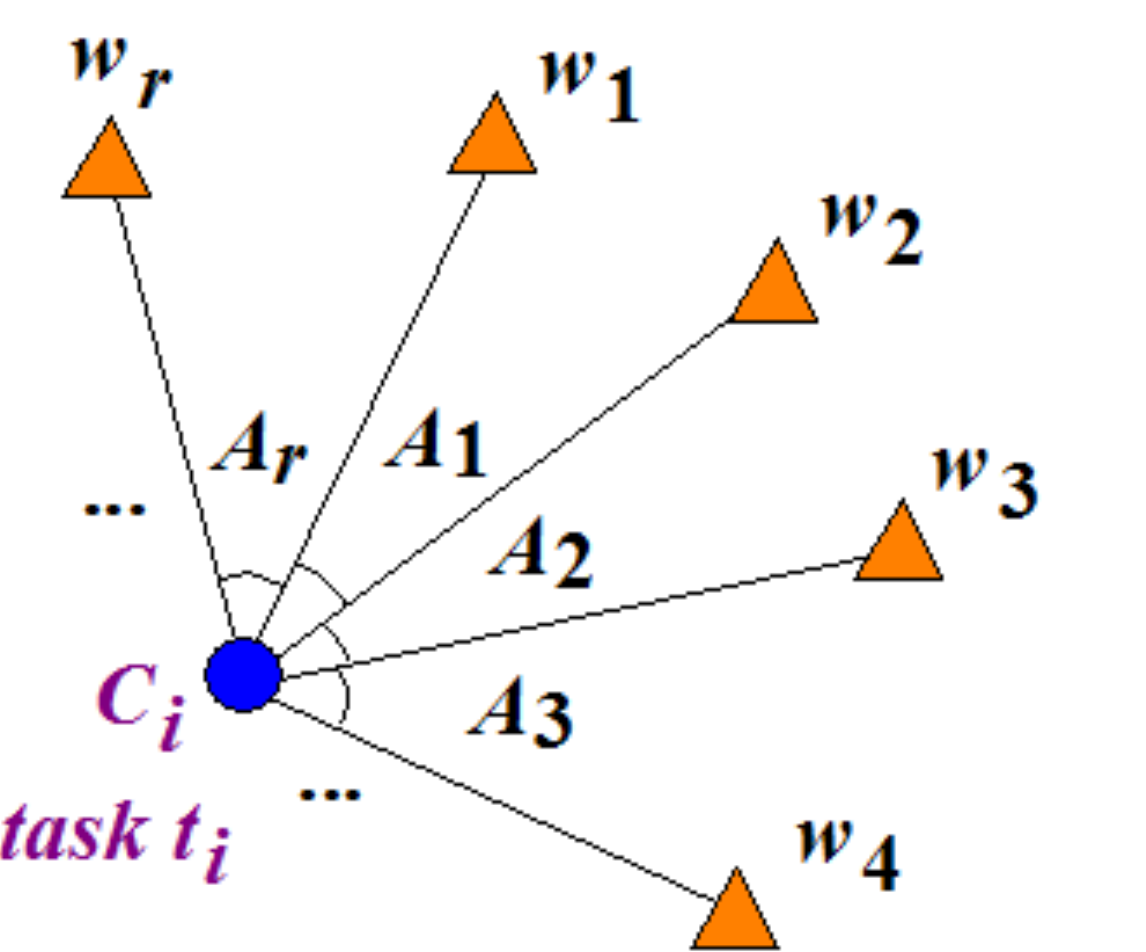}}
		\label{subfig:SD}}
	\subfigure[][{\scriptsize Temporal Diversity}]{
		\scalebox{0.28}[0.28]{\includegraphics{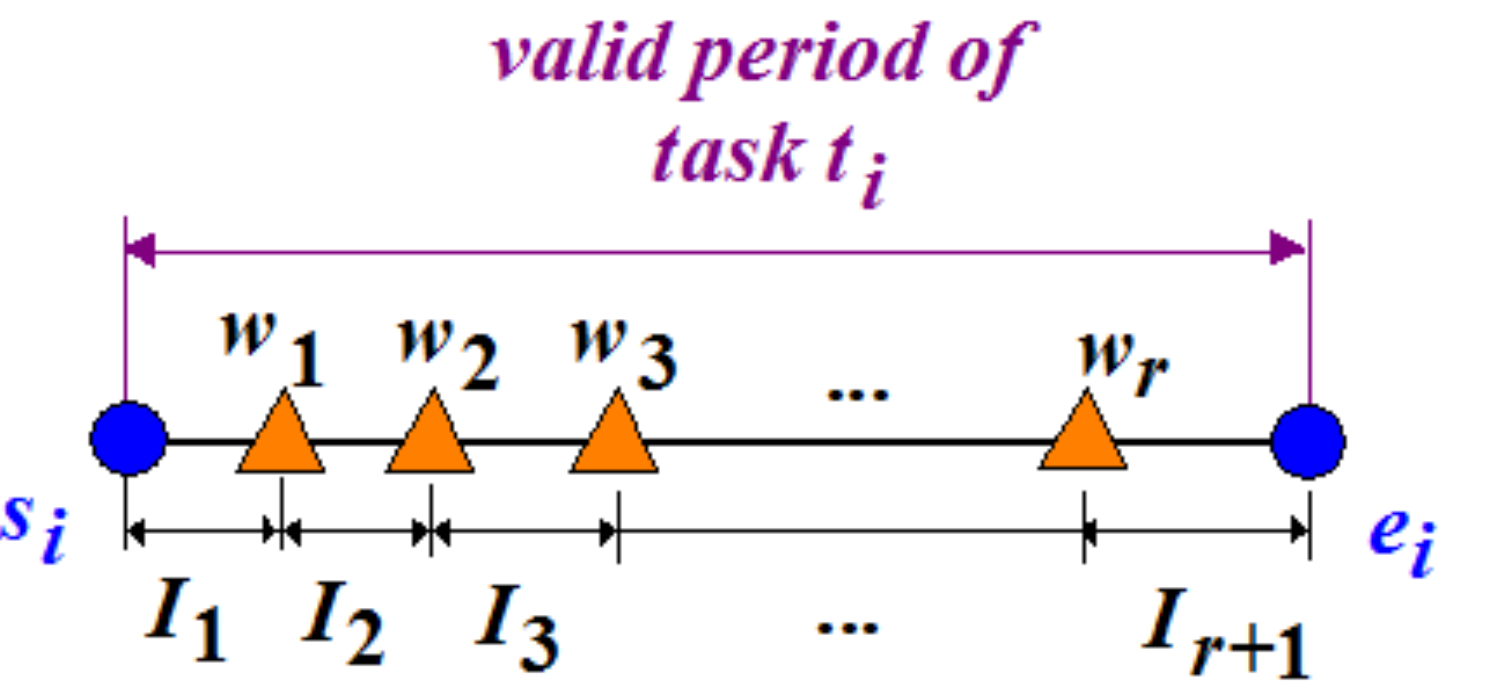}}
		\label{subfig:TD}}\vspace{-3ex}
	\caption{\small Illustration of Spatial/Temporal
		Diversity.}\vspace{-3ex}
	\label{fig:diversity}
\end{figure}

\vspace{0.5ex}\noindent {\bf Spatial/Temporal Diversity}. As
mentioned in Example 1 of Section \ref{sec:introduction} (i.e., take
photos of a statue), it would be nice to obtain photos from
different angles and at diverse times of the day, and get the full
picture of the statue for virtual tours. Similarly, in Example 2
(Section \ref{sec:introduction}), it is desirable to obtain photos
of parking areas from different directions at diverse times, in
order to collect/analyze data of available parking spaces during
open hours. Thus, we want workers to accomplish spatial tasks (e.g.,
taking photos) from different angles and over timestamps as diverse
as possible. We quantify the quality of the task completion by
\textit{spatial/temporal diversities}.

Specifically, the \textit{spatial diversity} (SD) is defined as
follows. As illustrated in Figure \ref{subfig:SD}, let $C_i$ be a
point at the location $l_i$ (or the centroid of a region) of task
$t_i$. Assume that $r$ workers $w_j \in W_i$ ($1\leq j \leq r$) do
tasks (i.e., take pictures) at $l_i$ from different angles. We draw
$r$ rays from $C_i$ to the directions of $r$ workers who take
photos. Then, with these $r$ rays, we can obtain $r$ angles, denoted
as $A_1$, $A_2$, ..., and $A_r$, where $\sum_{j=1}^r A_j=2\pi$.
Intuitively, the entropy was used as an expression of the disorder,
or randomness of a system. Here, higher diversity indicates more
discorder, which can be exactly captured by the entropy. That is,
when the answers come from diverse angles and timestamps, the
answers have high entropy. Thus, we use the entropy to define the
spatial diversity (SD) as follows:\vspace{-4ex}

{\scriptsize
	\begin{eqnarray}
	SD(t_i) = - \sum_{j=1}^r \frac{A_j}{2\pi} \cdot log
	\left(\frac{A_j}{2\pi}\right)\label{eq:eq3}
	\end{eqnarray}\vspace{-3ex}
}

Similarly, we can give the \textit{temporal diversity} for the
arrival times of workers (to do tasks), by using the entropy of time
intervals. As shown in Figure \ref{subfig:TD}, assume that the
arrival times of $r$ workers divide the valid period $[s_i, e_i]$ of
task $t_i$ into $(r+1)$ sub-intervals of lengths $I_1$, $I_2$, ...,
and $I_{r+1}$. We define the temporal diversity (TD)
below:\vspace{-4ex}

{\scriptsize
	\begin{eqnarray}
	TD(t_i) = - \sum_{j=1}^{r+1} \frac{I_j}{e_i-s_i} \cdot log
	\left(\frac{I_j}{e_i-s_i}\right)\label{eq:eq4}
	\end{eqnarray}\vspace{-3ex}
	
}

Intuitively, larger SD or TD value indicates higher diversity of
spatial angles or time distribution, which is more desirable by task
requesters. We combine these 2 diversity types, and obtain the
spatial/temporal diversity (STD) w.r.t. $W_i$, below:\vspace{-3.5ex}

{\scriptsize
	\begin{eqnarray}
	STD(t_i, W_i)=\beta \cdot SD(t_i) +(1-\beta) \cdot
	TD(t_i),\label{eq:eq5}
	\end{eqnarray}\vspace{-5ex}
}

where parameter $\beta \in [0, 1]$. Here, $\beta$ is a
weight balancing between SD and TD, which depends on the application
requirement specified by the task requester of $t_i$. When
$\beta=0$, we consider TD only; when $\beta = 1$, we require SD only
in the spatial task $t_i$.

Therefore, under possible worlds semantics, in this paper, we will
consider the \textit{expected spatial/temporal diversity} (defined
later) in our crowdsourcing problem.

\vspace{0.5ex}\noindent {\bf The RDB-SC Problem.} We define our
RDB-SC problem below.\vspace{-2ex}

\begin{definition} $($Reliable Diversity-Based Spatial Crowdsourcing, RDB-SC$)$ Given $m$
	time-constrained spatial tasks in $T$, and $n$ dynamically moving
	workers in $W$, the problem of \textit{reliable diversity-based
		spatial crowdsourcing} (RDB-SC) is to assign each task $t_i\in T$
	with a set, $W_i$, of workers $w_j \in W$, such that: \vspace{-1ex}
	\begin{enumerate}
		\item each worker $w_j\in W$ is assigned with a spatial
		task $t_i\in T$ such that his/her arrival time at location $l_i$
		falls into the valid period $[s_i, e_i]$,\vspace{-1.5ex}
		\item the minimum reliability, $\min_{i=1}^m rel(t_i, W_i)$, of all tasks
		$t_i$ is maximized, and\vspace{-1.5ex}
		\item the summation, $total\_STD$, of the expected spatial/temporal diversities, $E(STD(t_i))$, for all tasks $t_i$, is
		maximized,\vspace{-1ex}
	\end{enumerate}
	
	\noindent where the expected spatial/temporal diversity
	$E(STD(t_i))$ is:\vspace{-2ex}
	
	{\scriptsize
		\begin{eqnarray}
		E(STD(t_i)) = \sum_{\forall pw(W_i)} Pr\{pw(W_i)\} \cdot STD(t_i,
		pw(W_i)),
		\text{ and}\label{eq:eq6} \vspace{-3ex}
		\end{eqnarray}\vspace{-6ex}
	}
	{\scriptsize
		\begin{eqnarray}
		total\_STD = \sum_{i=1}^m E(STD(t_i)). \label{eq:eq7}
		\end{eqnarray}\vspace{-4ex}
	}

	\label{definition:crowdsourcing}\vspace{-2ex}
\end{definition}

The RDB-SC problem is to assign workers to collaboratively
accomplish each task with two optimization goals: (1) the smallest
reliability among all tasks is maximized (intuitively, if the
smallest reliability of tasks is maximized, then the reliability of
all tasks must be high), and (2) the summed expected diversity for
all tasks is maximized. These two goals aim to guarantee the
confidence of the task completion and the diversity quality of the
tasks, respectively.

\nop{
	\subsection{Answer Aggregation}
	\label{subsec:aggregation}
	
	Up to now, we only describe the problem of assigning tasks to
	optimal workers, which is the main work of this paper, but how to
	return the result to a task requester is also an important problem.
	Here we just propose a simple method to aggregate the answers and
	guarantee the diversity and reliability of the task at the same
	time.
	
	We first divide the answers based on their angles and timestamps
	into $g$ groups. For example, we divide the angles into 4 groups
	with angle range of $[\frac{i\cdot\pi}{2},
	\frac{(i+1)\cdot\pi}{2})$, where $i = \{0,1,2,3\}$. Each group
	contains answers within the direction range. Similarly, we divide
	the time period into several groups and classify the answers to
	those groups. Then, we return to the requester one answer from each
	angle and time group with the highest confidence in that group.
	
}

\vspace{0.5ex}\noindent {\bf Answer Aggregation for a Spatial Task.}
After assigning a set, $W_i$, of workers to spatial task $t_i$, we
can obtain a set of answers, for example, photos in Example 1 of
Section \ref{sec:introduction} with different angles and at diverse
timestamps. It is also important to present these photos to the task
requester. Due to too many photos, we can apply aggregation
techniques to group those photos with similar spatial/temporal
diversities, and show the task requester only one representative
photo from each group. Moreover, since we consider taking 
photos as tasks, the task requester can choose 
the photos with high quality (e.g., resolution or sharpness) from all the answers if he/she wants to. 
This is, however, not the focus of this
paper, and we would like to leave it as future work.

\subsection{Challenges}
\label{subsec:challenges}

According to Definition \ref{definition:crowdsourcing}, the RDB-SC
problem is an optimization problem with two objectives. The
challenges of tackling the RDB-SC problem are threefold. First, with
$m$ time-constrained spatial tasks and $n$ moving workers, in the
worst case, there are an exponential number of possible task-worker
assignment strategies, that is, with time complexity $O(m^n)$. In
fact, we will later prove that the RDB-SC problem is NP-hard. Thus,
it is inefficient or even infeasible to enumerate all possible
assignment strategies.

Second, the second objective in the RDB-SC problem considers maximizing the summed
expected spatial/temporal diversities, which involves an exponential
number of possible worlds for diversity computations. That is, the 
time complexity of enumerating
possible worlds is $O(2^{|W_i|})$, where $W_i$ is a set of assigned
workers to task $t_i$. Therefore, it is not efficient to
compute the spatial/temporal diversity by taking into account all
possible worlds.

Third, in the RDB-SC problem, workers move towards some directions,
whereas spatial tasks are restricted by time constraints (i.e.,
valid period $[s_i, e_i]$). Both workers and tasks can enter/quit
the spatial crowdsourcing system dynamically. Thus, it is also
challenging to dynamically decide the task-worker assignment.

Inspired by the challenges above, in this paper, we first prove that
the RDB-SC problem is NP-hard, and design three efficient
approximation algorithms, which are based on greedy, sampling, and
divide-and-conquer approaches. To tackle the second challenge, we
reduce the computation of the expected diversity under possible
worlds semantics to the one with cubic cost, which can greatly
improve the problem efficiency. Finally, to handle the cases that
workers and tasks can dynamically join and leave the system freely,
we design effective grid index to enable dynamic updates of
worker/tasks, as well as the retrieval of assignment pairs.

Table \ref{table0} summarizes the commonly used symbols.

\begin{table}
	\begin{center}\vspace{-2ex}
		\caption{Symbols and descriptions.} \label{table0}\vspace{-2ex}
		{\small\scriptsize
			\begin{tabular}{l|l} 
				{\bf Symbol} & {\bf \qquad \qquad \qquad Description} \\ \hline \hline
				$T$   & a set of $m$ time-constrained spatial tasks $t_i$ \\ 
				$W$   & a set of $n$ dynamically moving workers $w_j$ \\ 
				$[s_i, e_i]$   & the time constraint of accomplishing a task $t_i$\\ 
				$l_i$ (or $l_j$)   & the position of task $t_i$ (or worker $w_j$) \\ 
				$v_j$   & the velocity of moving worker $w_j$ \\ 
				$p_j$   & the confidence of worker $w_j$ that reliably do the task \\ 
				$[\alpha_j^-, \alpha_j^+]$   & the interval of moving direction (angle)\\ 
				$rel(t_i, W_i)$ & the reliability that task $t_i$ can be
				completed by workers
				in $W_i$ \\
				$R(t_i, W_i)$ & a (equivalent) variant of reliability $rel(t_i, W_i)$ \\
				$STD(t_i, W_i)$ & the spatial/temporal diversity of task $t_i$\\
				$E(STD(t_i))$ & the expected spatial/temporal diversity of task $t_i$\\
				$total\_STD$ & the sum of the expected spatial/temporal diversities for all  tasks\\
				$K$   & the sample size\\ \hline
				\hline
			\end{tabular}
		}
	\end{center}\vspace{-6ex}
	
\end{table}

\section{Problem Reduction}
\label{sec:reduction}\vspace{-1ex}

\subsection{Reduction of Reliability}
\label{subsec:reduction1}

As mentioned in Definition \ref{definition:crowdsourcing}, the first
optimization goal of the RDB-SC problem is to maximize the minimum
reliability among $m$ tasks. Since it holds that the reliability
$rel(t_i, W_i) = 1-\prod_{w_j\in W_i} (1-p_j)$ (given in
Eq.~(\ref{eq:eq1})), we can rewrite it as:\vspace{-2.5ex}

{\scriptsize
	\begin{eqnarray}
	R(t_i, W_i) = - ln (1- rel(t_i, W_i)) = \sum_{w_j\in W_i} -ln
	(1-p_j).\label{eq:eq8}
	\end{eqnarray}\vspace{-3ex}
}

From Eq.~(\ref{eq:eq8}), our goal (w.r.t. reliability) of maximizing
the smallest $rel(t_i, W_i)$ for all tasks $t_i$ is equivalent to
maximizing the smallest $- ln (1- rel(t_i, W_i))$ (i.e., LHS of
Eq.~(\ref{eq:eq8})) for all $t_i$. In turn, we can maximize the
smallest $\sum_{w_j\in W_i} -ln (1-p_j)$ (i.e., RHS of
Eq.~(\ref{eq:eq8})) among all tasks.

Intuitively, we associate each worker $w_j$ with a positive
constant $-ln(1-p_j)$. Then, we divide $n$ workers into $m$
disjoint partitions (subsets) $W_i$ ($1\leq i\leq n$), and each
subset $W_i$ has a summed value $\sum_{w_j\in W_i} -ln (1-p_j)$
(i.e., RHS of Eq.~(\ref{eq:eq8})). As a result, our equivalent
reliability goal is to find a partitioning strategy such that the
smallest summed value among $m$ subsets is maximized.

\subsection{Reduction of  Diversity}
\label{subsec:reduction2}

As discussed in Section \ref{subsec:challenges}, the direct
computation of the expected diversity involves an exponential number
of possible worlds $pw(W_i)$ (see Eq.~(\ref{eq:eq6})). It is thus
not efficient to enumerate all possible worlds. In this subsection,
we will reduce such a computation to the problem with polynomial cost.

Specifically, in order to compute the expected spatial diversity,
$E(SD(t_i))$, of a task $t_i$, we introduce a spatial diversity
matrix, $M_{SD}$, in which each entry $M_{SD} [j] [k]$ stores a
value, given by the multiplication of the probability that an angle
$A_{j,k} = (\sum_{x=j}^{k+r} A_{(x \% r)}) \% 2\pi$ exists (in possible worlds) and
entropy $-(\frac{A_{j,k}}{2\pi}) \cdot log(\frac{A_{j,k}}{2\pi})$, where $x \% y$ is
$x$ mod $y$.

In particular, we have:\vspace{-4.5ex}

{\scriptsize
	\begin{eqnarray}
	M_{SD} [j] [k] = -(\frac{A_{j,k}}{2\pi}) \cdot
	log(\frac{A_{j,k}}{2\pi}) \cdot p_j \cdot p_k   \cdot
	\prod_{x=j+1}^{(k+r-1)\% r}(1-p_x). \label{eq:eq9}
	\end{eqnarray}\vspace{-2ex}
}

The time complexity of computing $M_{SD} [j] [k]$ is $O(r)$. Thus,
the total cost of spatial diversity matrix is $O(r^3)$.

Similarly we can compute the temporal diversity matrix, $M_{TD}$, in
which each entry $M_{TD}[j][k]$ ($j\leq k$) is given by the
multiplication of the probability that a time interval
$I_{j,k} = \bigcup_{x=j}^k I_x$ exists in possible worlds and entropy
$-(\frac{I_{j,k}}{e_i-s_i}) \cdot log (\frac{I_{j,k}}{e_i-s_i})$,
for $j\leq k$; moreover, $M_{TD}[j][k] = 0$,if $j > k$.

Formally, for $j\leq k$, we have:\vspace{-3ex}

{\scriptsize
	\begin{eqnarray}
	M_{TD} [j] [k] = -(\frac{I_{j,k}}{e_i-s_i}) \cdot log
	(\frac{I_{j,k}}{e_i-s_i}) \cdot p_k \cdot \prod_{x=j+1}^{(k-1)}
	(1-p_x) \label{eq:eq10}.
	\end{eqnarray}\vspace{-3ex}
}

To compute $E(STD(t_i))$, the expected spatial/temporal diversity,
we have the following lemma. \vspace{-2ex}

\begin{lemma} (Expected Spatial/Temporal Diversity)
	The expected spatial/temporal diversity, $E(STD(t_i))$, of task
	$t_i$ is given by:\vspace{-2ex}
	
	{\scriptsize
		\begin{eqnarray}
		E(STD(t_i))&=& \beta \cdot E(SD(t_i)) + (1-\beta) \cdot E(TD(t_i))\label{eq:eq11}\\
		&=& \beta \cdot \sum_{\forall j, k} M_{SD} [j] [k] + (1-\beta) \cdot
		\sum_{\forall j, k}  M_{TD} [j] [k].\notag
		\end{eqnarray}\vspace{-2ex}
	}
	
	\label{lemma:lem1}
\end{lemma}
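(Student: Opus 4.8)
The plan is to establish the claim in two stages: first the linear split into spatial and temporal parts, and then the identification of each expected diversity with its corresponding matrix sum.

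The first equality is immediate. Since $STD(t_i,pw(W_i))=\beta\cdot SD(t_i)+(1-\beta)\cdot TD(t_i)$ by Eq.~(\ref{eq:eq5}) for every possible world $pw(W_i)$, and the expectation in Eq.~(\ref{eq:eq6}) is a probability-weighted sum over possible worlds, linearity of expectation gives $E(STD(t_i))=\beta\cdot E(SD(t_i))+(1-\beta)\cdot E(TD(t_i))$, where $E(SD(t_i))$ and $E(TD(t_i))$ are the same possible-world average applied to $SD$ and $TD$ respectively. It therefore remains to show $E(SD(t_i))=\sum_{j,k}M_{SD}[j][k]$ and $E(TD(t_i))=\sum_{j,k}M_{TD}[j][k]$.

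For the spatial term I would argue by exchanging the order of summation. Fix the $r$ workers ordered circularly by the angle of their rays from $C_i$. In any possible world the surviving workers cut the circle into arcs, and $SD$ (Eq.~(\ref{eq:eq3})) is precisely the sum of the entropy contributions $-\frac{a}{2\pi}\log\frac{a}{2\pi}$ of these arcs; crucially, every arc present in that world is the gap between two \emph{consecutive} surviving workers, so it corresponds to a unique ordered pair $(j,k)$ in which $j$ and $k$ survive and all workers strictly between them (going around the circle) fail. Writing $SD$ in each world as a sum of indicators over candidate arcs $(j,k)$ and pushing the possible-world average inside, each candidate arc contributes $\bigl(-\frac{A_{j,k}}{2\pi}\log\frac{A_{j,k}}{2\pi}\bigr)$ times the probability that arc $(j,k)$ appears. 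That probability is exactly $p_j\cdot p_k\cdot\prod_{x}(1-p_x)$ over the in-between workers, because the arc's presence depends only on $j,k$ surviving and the interior workers failing, while the remaining workers are free and their success/failure probabilities sum to one; this reproduces $M_{SD}[j][k]$ in Eq.~(\ref{eq:eq9}). Summing over all $(j,k)$ yields $E(SD(t_i))$.

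The temporal term follows the same exchange-of-summation template applied to Eq.~(\ref{eq:eq4}), with the circle replaced by the linear interval $[s_i,e_i]$. The surviving arrivals partition $[s_i,e_i]$ into sub-intervals, and a merged sub-interval $I_{j,k}=\bigcup_{x=j}^k I_x$ appears exactly when its bounding arrivals survive and the interior arrivals fail, contributing its entropy times this appearance probability, i.e.\ $M_{TD}[j][k]$ in Eq.~(\ref{eq:eq10}). The main bookkeeping difference from the spatial case is that the two extreme endpoints $s_i$ and $e_i$ are always present (they are not worker arrivals), so an interval touching an endpoint needs one fewer surviving-worker factor, which is why the temporal appearance probability carries a single $p_k$ rather than the symmetric $p_j\cdot p_k$ of the spatial case. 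I expect the main obstacle to be this appearance-probability bookkeeping rather than anything conceptually deep: one must verify that, for each fixed pair $(j,k)$, the set of possible worlds in which the arc/interval appears is precisely those fixing $j,k$ as consecutive survivors with all interior workers failing, so that summing $Pr\{pw(W_i)\}$ (Eq.~(\ref{eq:eq2})) over the free workers collapses to $1$ and leaves exactly the stated product. The circular wraparound in $A_{j,k}$ (handled by the modular indexing) and the endpoint handling in the temporal case are the two places where an off-by-one or a miscounted factor could creep in, so I would check those boundary cases, including the single-survivor arc of measure $2\pi$, explicitly.
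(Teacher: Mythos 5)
Your proposal is correct and takes essentially the same route as the paper's proof: after the (trivial) $\beta$-split by linearity of expectation, it exchanges the order of summation and identifies each entry $M_{SD}[j][k]$ (resp.\ $M_{TD}[j][k]$) with the total contribution, over all possible worlds in which the corresponding arc (resp.\ merged interval) appears, of its entropy term weighted by the appearance probability $p_j\cdot p_k\cdot\prod_x(1-p_x)$ (resp.\ the temporal analogue), the remaining workers' outcomes marginalizing to one. Your write-up is in fact more explicit than the paper's own argument, which compresses this ``combine the related elements for every angle'' step into two sentences and handles the temporal case---including the endpoint bookkeeping you flag as the delicate point---with a bare ``similarly.''
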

\begin{proof}
	Please refer to Appendix A of the technical report \cite{aixivReport}. 
\end{proof}
\text{}\vspace{-3ex}

In this subsection, we prove that the hardness of our RDB-SC problem
is NP-hard. Specifically, we can reduce the problem of the
\textit{number partition problem} \cite{mertens2006easiest} (which
is known to be an NP-hard problem) to our RDB-SC problem. This way,
our RDB-SC problem is also an NP-hard problem: \vspace{-2ex}

\begin{lemma} (Hardness of the RDB-SC Problem)
	The problem of the reliable diversity-based spatial crowdsourcing
	(RDB-SC) is NP-hard. \label{lemma:lem2}\vspace{-1ex}
\end{lemma}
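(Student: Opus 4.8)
The plan is to reduce from the \emph{number partition problem} \cite{mertens2006easiest}: given $n$ positive integers $a_1,\ldots,a_n$ with total sum $2B$, decide whether they can be split into two subsets each summing to exactly $B$. This problem is known to be NP-complete, so exhibiting a polynomial-time construction that turns a partition instance into an RDB-SC instance, together with an equivalence between their solutions, establishes NP-hardness of RDB-SC.

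First I would build the RDB-SC instance. Create $m=2$ time-constrained tasks $t_1,t_2$ and $n$ workers $w_1,\ldots,w_n$, where worker $w_j$ is assigned confidence $p_j = 1 - e^{-a_j}$, so that the transformed reliability weight of Eq.~(\ref{eq:eq8}) satisfies $-\ln(1-p_j) = a_j$. To neutralize the spatial and temporal constraints, I would co-locate $t_1$ and $t_2$ and set both valid periods $[s_i,e_i]$ wide enough that every worker, under its velocity and direction range, can reach either task within its period. Consequently constraint~(1) of Definition~\ref{definition:crowdsourcing} is satisfied by \emph{every} assignment of the workers to the two tasks, so the feasible assignments are precisely the partitions of $\{w_1,\ldots,w_n\}$ into $W_1$ and $W_2$. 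Since $p_j$ is computable from $a_j$, the construction is polynomial in the size of the partition instance.

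Next I would invoke the reliability reduction of Section~\ref{subsec:reduction1}. Maximizing $\min_{i} rel(t_i,W_i)$ is equivalent to maximizing $\min_i \sum_{w_j\in W_i} -\ln(1-p_j) = \min_i \sum_{w_j \in W_i} a_j$. Because the two subset sums always total $2B$, the smaller of them is at most $B$, with equality exactly when both sums equal $B$. Hence an optimal RDB-SC assignment attains minimum reliability value $R(t_i,W_i)=B$ (equivalently $rel=1-e^{-B}$) if and only if the integers $a_1,\ldots,a_n$ admit a balanced partition. A polynomial algorithm that maximizes the minimum reliability would therefore decide the partition instance, contradicting $\mathrm{P}\neq\mathrm{NP}$.

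The main obstacle is not the reliability argument itself but making the reduction clean in the presence of the second objective (total expected diversity) and the hard reachability constraint. I would handle the constraint by the co-location/wide-window trick above, which removes constraint~(1) as an obstruction. For the diversity objective, it suffices to render it inert for the decision at hand: fixing the worker angles and arrival pattern so that $STD(t_i,\cdot)$ takes the same value across the partitions of interest (or fixing $\beta$ and the geometry so diversity is constant) ensures $total\_STD$ does not distinguish balanced from unbalanced partitions. With these degrees of freedom pinned down, the equivalence between balanced partitions and maximum min-reliability assignments is exact, so solving RDB-SC solves number partition, which yields the NP-hardness claim.
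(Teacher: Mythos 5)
Your proposal is correct and follows essentially the same route as the paper's own proof: a reduction from number partition in which two tasks are created, the reachability and diversity objectives are neutralized by the geometric/temporal construction (the paper places all workers and tasks on one line with late-starting tasks so that $total\_STD$ is identically zero), the integers are encoded via $p_j = 1-e^{-a_j}$ so that $-\ln(1-p_j)=a_j$, and maximizing the minimum reliability becomes equivalent to balancing the two subset sums. The only cosmetic differences are that you use the decision version of partition rather than discrepancy minimization and skip the paper's normalization by $a_{\max}$ (and your sign convention for $p_j$ is actually the correct one, fixing a typo in the paper's appendix); neither changes the substance of the argument.
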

\begin{proof}
	Please refer to Appendix B of the technical report \cite{aixivReport}.\vspace{-1ex}
\end{proof}

From Lemma \ref{lemma:lem2}, we can see that the RDB-SC problem is
not tractable. Therefore, in the sequel, we aim to propose
approximation algorithms to find suboptimal solution efficiently.

\vspace{-1ex}
\section{The Greedy Approach}
\label{sec:greedy}\vspace{-1ex}

\subsection{Properties of Optimization Goals}
\label{subsec:properties}\vspace{-1ex}

In this subsection, we provide the properties about the reliability
and the expected spatial/temporal diversity. Specifically, assume
that a task $t_i$ is assigned with a set, $W_i$, of $r$ workers
$w_j$ ($1\leq j\leq r$). Let $w_{r+1}$ be a new worker (with
confidence $p_{r+1}$ who is also assigned to task $t_i$.

\vspace{0.5ex}\noindent {\bf Reliability.} We first give the
property of the reliability upon a newly assigned worker.\vspace{-2.5ex}

\begin{lemma} (Property of the Reliability)
	Let $R(t_i, W_i)$ be the reliability of task $t_i$ (given in
	Eq.~(\ref{eq:eq8}), in the reduced goal of Section
	\ref{subsec:reduction1}), associated with a set, $W_i$, of $r$
	workers. If a new worker $w_{r+1}$ is assigned to $t_i$, then we
	have: \vspace{-3.5ex}
	
	{\scriptsize
		\begin{eqnarray}
		R(t_i, W_i\cup \{w_{r+1}\}) = R(t_i, W_i) - ln
		(1-p_{r+1}).\label{eq:eq12}
		\end{eqnarray}\vspace{-7ex}
	}
	
	\label{lemma:lem3}
\end{lemma}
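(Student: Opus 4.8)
The plan is to derive Eq.~(\ref{eq:eq12}) directly from the reduced form of the reliability established in Section~\ref{subsec:reduction1}, since that reduction has already turned the reliability into an \emph{additive} quantity over the assigned workers. Recall from Eq.~(\ref{eq:eq8}) that $R(t_i, W_i) = \sum_{w_j \in W_i} -\ln(1-p_j)$, i.e., a plain sum of the per-worker contributions $-\ln(1-p_j)$. The whole point of the lemma is to express how this score changes when one more worker is appended, so I expect the argument to be a one-line consequence of additivity.

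First I would observe that, assuming the new worker $w_{r+1}$ is not already in $W_i$, the assigned set $W_i \cup \{w_{r+1}\}$ consists of exactly the workers of $W_i$ together with the single extra worker $w_{r+1}$. Substituting this set into Eq.~(\ref{eq:eq8}) and splitting the sum over $W_i \cup \{w_{r+1}\}$ into the sum over $W_i$ plus the isolated term for $w_{r+1}$ immediately gives $R(t_i, W_i \cup \{w_{r+1}\}) = \sum_{w_j \in W_i} -\ln(1-p_j) - \ln(1-p_{r+1}) = R(t_i, W_i) - \ln(1-p_{r+1})$, which is the claim.

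As an alternative derivation and a sanity check, I would verify the same identity starting from the original product form in Eq.~(\ref{eq:eq1}). Since $1 - rel(t_i, W_i) = \prod_{w_j \in W_i}(1-p_j)$, adding $w_{r+1}$ multiplies this product by the extra factor $(1-p_{r+1})$, so $1 - rel(t_i, W_i \cup \{w_{r+1}\}) = (1-p_{r+1}) \cdot (1 - rel(t_i, W_i))$. Taking $-\ln$ of both sides and using $\ln(ab) = \ln a + \ln b$ converts this product into the additive expression of Eq.~(\ref{eq:eq12}), confirming the result independently of the reduction.

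There is essentially no hard step here: the difficulty was already absorbed by the logarithmic reduction in Section~\ref{subsec:reduction1}, which is precisely what converts the multiplicative reliability into an additive score so that appending a worker contributes a single extra summand. The only point requiring minor care is the implicit assumption $w_{r+1} \notin W_i$, which ensures the union introduces a genuinely new term; this is natural in the greedy setting where workers are assigned to a task one at a time.
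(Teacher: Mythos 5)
Your proposal is correct and follows essentially the same route as the paper's own proof: the paper likewise applies Eq.~(\ref{eq:eq8}) to both $W_i$ and $W_i\cup\{w_{r+1}\}$ and reads off the extra summand $-\ln(1-p_{r+1})$. Your secondary check via the product form of Eq.~(\ref{eq:eq1}) and your explicit remark that $w_{r+1}\notin W_i$ are harmless additions, but the core argument is identical.
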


\begin{proof}
	Please refer to Appendix C of the technical report \cite{aixivReport}. 
\end{proof}
\text{}\vspace{-5ex}

\nop{
	\begin{proof} According to the equivalent definition of reliability in
		Eq.~(\ref{eq:eq8}), it holds that: $$R(t_i, W_i) =  \sum_{w_j\in
			W_i} -ln (1-p_j).$$ Moreover, for the new set
		$(W_i\cup\{w_{r+1}\})$, we have: $$R(t_i, W_i\cup \{w_{r+1}\}) =
		\sum_{w_j\in (W_i\cup \{w_{r+1}\})} \hspace{-4ex}-ln (1-p_j).$$
		
		By combining the two formulae above, we can infer that: $R(t_i,
		W_i\cup \{w_{r+1}\}) = R(t_i, W_i) - ln (1-p_{r+1})$. Hence, the
		lemma holds.
	\end{proof}
	
}

From Eq.~(\ref{eq:eq12}) in Lemma \ref{lemma:lem3}, we can see that
the second term (i.e., $- ln (1-p_{r+1})$) is a positive value.
Thus, it indicates that when we assign more workers (e.g., $w_{r+1}$
with confidence $p_{r+1}\leq 0$) to task $t_i$, the reliability,
$R(t_i, \cdot)$, of the task $t_i$ is always increasing (at least
non-decreasing).

\vspace{0.5ex}\noindent {\bf Diversity.} Next, we give the property
of the expected spatial/temporal diversity, upon a newly assigned
worker.\vspace{-2.5ex}

\begin{lemma} (Property of the Expected Spatial/Temporal Diversity)
	Let $E(STD(t_i))$ be the expected spatial/temporal diversity of task
	$t_i$. Upon a newly assigned worker $w_{r+1}$ with confidence
	$p_{r+1}$, the expected diversity $E(STD(t_i))$ of task $t_i$ is
	always non-decreasing, that is, $E(STD(t_i, W_i\cup\{w_{r+1}\}))\geq
	E(STD(t_i, W_i))$. \label{lemma:lem4}\vspace{-4ex}
\end{lemma}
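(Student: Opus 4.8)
The plan is to prove the two diversity components separately and then combine them. Since $E(STD(t_i)) = \beta \cdot E(SD(t_i)) + (1-\beta)\cdot E(TD(t_i))$ with $\beta \in [0,1]$ (Lemma~\ref{lemma:lem1}), it suffices to show that both $E(SD(t_i))$ and $E(TD(t_i))$ are non-decreasing when the new worker $w_{r+1}$ is added, because a non-negative combination of non-decreasing quantities is itself non-decreasing. I will therefore focus on the spatial term and note that the temporal term is completely analogous.

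First I would reorganize the possible worlds of $W_i \cup \{w_{r+1}\}$ around those of $W_i$. Because worker appearances are independent (cf.\ Eq.~(\ref{eq:eq2})), every possible world $pw' \subseteq W_i \cup \{w_{r+1}\}$ arises from a unique world $pw \subseteq W_i$ in exactly one of two forms: either $pw' = pw$ (with $w_{r+1}$ absent, contributing a factor $(1-p_{r+1})$) or $pw' = pw \cup \{w_{r+1}\}$ (with $w_{r+1}$ present, contributing a factor $p_{r+1}$), and $Pr\{pw'\}$ factors accordingly into $Pr\{pw\}$ times that extra factor. Substituting this into Eq.~(\ref{eq:eq6}) and cancelling the common $Pr\{pw\}\cdot SD(t_i,pw)$ contributions, the difference collapses to
\[
E(SD(t_i, W_i \cup \{w_{r+1}\})) - E(SD(t_i, W_i)) = p_{r+1}\sum_{pw \subseteq W_i} Pr\{pw\}\,\big(SD(t_i, pw\cup\{w_{r+1}\}) - SD(t_i, pw)\big).
\]
Since $p_{r+1}\ge 0$ and each $Pr\{pw\}\ge 0$, it then remains only to establish the per-world inequality $SD(t_i, pw\cup\{w_{r+1}\}) \ge SD(t_i, pw)$.

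The per-world step is the crux of the argument. Geometrically, adding the ray of $w_{r+1}$ to a fixed configuration $pw$ splits exactly one existing angular sector $A$ into two sub-sectors $A'$ and $A''$ with $A'+A''=A$, while leaving all other sectors untouched. Writing $f(x) = -x\log x$ (with $f(0):=0$), the resulting change in spatial diversity equals $f(A'/2\pi) + f(A''/2\pi) - f(A/2\pi)$. The key fact I would establish is the superadditivity (grouping property) of the entropy term: $f(a)+f(b) \ge f(a+b)$ for $a,b\ge 0$, which follows from the identity $f(a)+f(b)-f(a+b) = a\log\frac{a+b}{a} + b\log\frac{a+b}{b} \ge 0$, both logarithms being non-negative. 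This yields $SD(t_i, pw\cup\{w_{r+1}\}) \ge SD(t_i, pw)$, with equality only when the split produces a zero-length piece (e.g., the new ray coincides with an existing one). The temporal case is identical: the new arrival time splits one sub-interval $I$ into $I'+I''=I$, and the same superadditivity gives $TD(t_i, pw\cup\{w_{r+1}\}) \ge TD(t_i, pw)$.

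I expect the main obstacle to lie in the careful handling of the per-world geometry rather than in the algebra: one must verify that a single added ray (respectively arrival time) affects exactly one sector (respectively interval), that all remaining entropy terms are unchanged, and that the index relabelling used in the matrices $M_{SD}$ and $M_{TD}$ (Eqs.~(\ref{eq:eq9})--(\ref{eq:eq10})) is consistent with this split, including boundary and degenerate cases such as coincident rays, zero-length intervals, or the empty world. Once this per-world monotonicity is secured, the possible-world pairing of the second paragraph and the convex combination over $\beta$ finish the proof immediately.
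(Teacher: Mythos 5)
Your proposal is correct and follows essentially the same route as the paper's proof: both reduce the claim, via the pairing of possible worlds with and without $w_{r+1}$, to the per-world inequalities $SD(t_i, pw\cup\{w_{r+1}\})\geq SD(t_i,pw)$ and $TD(t_i, pw\cup\{w_{r+1}\})\geq TD(t_i,pw)$, and both settle these by observing that the new ray (or arrival time) splits exactly one sector (or sub-interval) and invoking the superadditivity of $f(x)=-x\log x$. The only differences are cosmetic: you prove the key inequality $f(a)+f(b)\geq f(a+b)$ via the identity $a\log\frac{a+b}{a}+b\log\frac{a+b}{b}\geq 0$, whereas the paper bounds $-\log x$ and $-\log(c-x)$ by $-\log c$; these are the same elementary fact.
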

\begin{proof}
	Please refer to Appendix D of the technical report \cite{aixivReport}.\vspace{-2.5ex}
\end{proof}

\nop{
	\begin{proof} Based on Eq.~(\ref{eq:eq6}), we have the
		expected spatial/temporal diversity as follows:\vspace{-2ex}
		
		{\scriptsize
			\begin{eqnarray}
			&&\hspace{-2ex}E(STD(t_i, W_i\cup\{w_{r+1}\}))\notag\\
			&\hspace{-4ex}=& \hspace{-5ex}\sum_{\forall pw(W_i\cup \{w_{r+1}\})}
			\hspace{-4ex}(Pr\{pw(W_i\cup \{w_{r+1}\})\}\cdot STD(t_i, pw(W_i\cup
			\{w_{r+1}\})).\notag
			\end{eqnarray}\vspace{-3ex}
		}
		
		Next, we divide all possible worlds of the worker set $W_i\cup
		\{w_{r+1}\}$ into two disjoint parts $PW_1$ and $PW_2$, where all
		possible worlds in $PW_1$ contain the new worker $w_{r+1}$, and
		other possible worlds in $PW_2$ do not contain $w_{r+1}$. Therefore,
		we can rewrite $E(STD(t_i, W_i\cup\{w_{r+1}\}))$ in the formula
		above as:\vspace{-2ex}
		
		{\scriptsize
			\begin{eqnarray}
			&&E(STD(t_i, W_i\cup\{w_{r+1}\}))\notag\\
			&=&\sum_{\forall pw_1\in PW_1} (Pr\{pw_1\}\cdot STD(t_i, pw_1))\notag\\
			&& + \sum_{\forall pw_2\in PW_2} (Pr\{pw_2\}\cdot
			STD(t_i, pw_2))\notag\\
			&=& p_{r+1} \cdot \hspace{-4ex}\sum_{\forall pw_1\in PW_1-\{w_{r+1}\}} \hspace{-4ex}(Pr\{pw_1\}\cdot STD(t_i, pw_1\cup \{w_{r+1}\}))\notag\\
			&& + (1-p_{r+1}) \cdot  \sum_{\forall pw_2\in PW_2}
			(Pr\{pw_2\}\cdot STD(t_i, pw_2))\notag
			\end{eqnarray}\vspace{-3ex}
		}
		
		From the formula above, it is sufficient to prove that $STD(t_i,
		pw_1\cup \{w_{r+1}\})\geq STD(t_i, pw_1)$ under a single possible
		world $pw_1$. Due to Eq.~(\ref{eq:eq5}), alternatively, we prove
		that $SD(t_i, pw_1\cup \{w_{r+1}\})\geq SD(t_i, pw_1)$ and $TD(t_i,
		pw_1\cup \{w_{r+1}\})\geq TD(t_i, pw_1)$.
		
		For the spatial diversity $SD(t_i, \cdot)$, without loss of
		generality, we assume that worker $w_{r+1}$ divides the angle $A_r$
		into two angles $A_{r1}$ and $A_{r2}$, where $A_r = A_{r1}+A_{r2}$.
		Then, from Eq.~(\ref{eq:eq3}), we have:\vspace{-2ex}
		
		{\scriptsize
			\begin{eqnarray}
			&&SD(t_i, pw_1\cup \{w_{r+1}\})\notag\\
			&=& - \sum_{j=1}^{r-1} \frac{A_j}{2\pi} \cdot log
			\left(\frac{A_j}{2\pi}\right) - \frac{A_{r1}}{2\pi} \cdot log
			\left(\frac{A_{r1}}{2\pi}\right)\notag\\
			&& - \frac{A_r - A_{r1}}{2\pi} \cdot log \left(\frac{A_r -
				A_{r1}}{2\pi}\right)\notag
			\end{eqnarray}\vspace{-3ex}
		}
		
		We next prove that $-x\cdot log x - (c-x)\cdot log (c-x) \geq -
		c\cdot log c$ (for positive constant $c$). Since it holds that
		$0\leq x\leq c\leq 1$ and $c-x\leq c$, we have $- log x \geq -log c$
		and $- log (c-x) \geq -log c$. Thus, $-x\cdot log x - (c-x)\cdot log
		(c-x) \geq - (x + (c-x)) \cdot log c = - c \cdot log c$. Hence, let
		$c = \frac{A_r}{2\pi}$ in the formula above. We can
		obtain:\vspace{-2ex}
		
		{\scriptsize
			\begin{eqnarray}
			&&SD(t_i, pw_1\cup \{w_{r+1}\})\notag\\
			&\geq& - \sum_{j=1}^{r-1} \frac{A_j}{2\pi} \cdot log
			\left(\frac{A_j}{2\pi}\right) - \frac{A_{r}}{2\pi} \cdot log
			\left(\frac{A_{r}}{2\pi}\right)\notag\\
			&=& SD(t_i, pw_1).\notag
			\end{eqnarray}\vspace{-3ex}
		}
		
		The case of temporal diversity can be proved similarly, and we omit
		it due to the space limitations. Therefore, the lemma holds.
	\end{proof}
	
}

Lemma \ref{lemma:lem4} indicates that when we assign a new worker to
a spatial task $t_i$, the expected spatial/temporal diversity is
non-decreasing.

\begin{figure}[ht]
	\begin{center}
		\begin{tabular}{l}
			\parbox{3.1in}{
				\begin{scriptsize}
					\begin{tabbing}
						12\=12\=12\=12\=12\=12\=12\=12\=12\=12\=12\=\kill
						{\bf Procedure {\sf RDB-SC\_Greedy}} \{ \\
						\> {\bf Input:} $m$ time-constrained spatial tasks in $T$ and $n$ workers in $W$\\
						\> {\bf Output:} a task-and-worker assignment strategy, $\mathbb{S}$, \\
						\>\>\>\>\hspace{2ex}with high reliability and diversity\\
						\> (1) \> \> $\mathbb{S} = \emptyset$\\
						\> (2) \> \> compute all the valid task-and-worker pairs $(t_i, w_j)$\\
						\> (3) \> \> for $i=1$ to $n$ \\
						\>\>\>\textit{// in each round, select one best task-and-worker pair}\\
						\> (4) \> \> \> for each pair $(t_i, w_j)$ ($w_j\in W$)\\
						\> (5) \> \> \> \> compute the increase pair ($\Delta R(t_i, w_j), \Delta STD(t_i, w_j)$) \\
						\> (6) \> \> \> prune ($\Delta R(t_i, w_j), \Delta STD(t_i, w_j)$) dominated by others\\
						\> (7) \> \> \> rank the remaining pairs by their scores (i.e., the number \\
						\>\>\>\>of dominated pairs) \\
						\> (8) \> \> \> select a pair, $(t_i, w_j)$, with the highest score and add it to $\mathbb{S}$\\
						\> (9) \> \> \> $W = W-\{w_j\}$\\
						\> (10) \> \> return $\mathbb{S}$\\
						\}
					\end{tabbing}
				\end{scriptsize}
			}
		\end{tabular}
	\end{center}\vspace{-8ex}
	\caption{\small RDB-SC Greedy Algorithm.}
	\label{alg:greedy}\vspace{-5ex}
\end{figure}

\subsection{The Greedy Algorithm}
\label{subsec:greedy_algorithm}

As mentioned in Section \ref{subsec:properties}, when we assign more
workers to a spatial task, the reliability and diversity of the
assignment strategy is always non-decreasing. Based on these
properties, we propose a greedy algorithm, which iteratively assigns
workers to spatial tasks that can always achieve high ranks (w.r.t.
reliability and diversity).

Figure \ref{alg:greedy} illustrates the pseudo code of our RDB-SC
greedy algorithm, namely {\sf RDB-SC\_Greedy}, which returns one
best strategy, $\mathbb{S}$, containing task-and-worker assignments
with high reliability and diversity. Specifically, our greedy
algorithm iteratively finds one pair of task and worker such that
the assignment with this pair can increase the reliability and
diversity most.

Initially, there is no task-and-worker assignment, thus, we set
$\mathbb{S}$ to empty (line 1). Next, we identify all the valid
task-and-worker pairs $(t_i, w_j)$ in the crowdsourcing system (line
2). Here, the validity of pair $(t_i, w_j)$ means that worker $w_j$
can reach the location of task $t_i$, under the constraints of both
moving directions and valid period. Then, among these pairs, we want
to incrementally select $n$ best task-and-worker assignments such
that the increases of reliability and diversity are always maximized
(lines 3-9).

In particular, in each iteration, for every task-and-worker pair
$(t_i, w_j)$ ($w_j \in W$ is a worker who has no task), if we allow the assignment of worker $w_j$ to $t_i$, we
can calculate the increases of the reliability and diversity
($\Delta R(t_i, w_j), \Delta STD(t_i,$ $w_j)$) (lines 4-5), where
$\Delta R(t_i, w_j) = R(\mathbb{S}\cup \{(t_i,
w_j\})-R(\mathbb{S})$, and $\Delta STD(t_i, w_j) =
STD(\mathbb{S}\cup \{(t_i, w_j\})-STD(\mathbb{S})$. Note that, as
guaranteed by Lemmas \ref{lemma:lem3} and \ref{lemma:lem4}, here the
two optimization goals are always non-decreasing (i.e., $\Delta
R(\cdot)$ and $\Delta STD(\cdot)$ are positive).

Since some increase pairs may be \textit{dominated}
\cite{Borzsonyi01} by others, we can safely filter out such false
alarms with both lower reliability and diversity (line 6). We say
assignment $\mathbb{S}_i$ dominates $\mathbb{S}_j$ when
$R(\mathbb{S}_i) > R(\mathbb{S}_j)$ and $STD(\mathbb{S}_i) \geq
STD(\mathbb{S}_j)$, or $R(\mathbb{S}_i) \geq R(\mathbb{S}_j)$ and
$STD(\mathbb{S}_i) > STD(\mathbb{S}_j)$. If there are more than one
remaining pair, we rank them according to the number of pairs that
they are dominating \cite{Yiu07b} (line 7). Intuitively, the pair
(i.e., assignment) with higher rank indicates that this assignment
is better than more other assignments. Thus, we add a pair $(t_i,
w_j)$ with the highest rank to $\mathbb{S}$, and remove the worker
$w_j$ from $W$ (lines 8-9).

The selection of pairs (assignments) repeats for $n$ rounds (line
3). In each round, we find one assignment pair that can locally
increase the maximum reliability and diversity. Finally, we return
$\mathbb{S}$ as the best RDB-SC assignment strategy (line 10).

\vspace{0.5ex}\noindent {\bf The Time Complexity.} The time
complexity of computing the best task-and-worker pair in each
iteration is given by $O(m\cdot n)$ in the worst case (i.e., each of
$n$ worker can be assigned to any of the $m$ tasks). Since we only
need to select $n$ task-and-worker pairs (each worker can only be
assigned to one task at a time), the total time complexity of our
greedy algorithm is given by $O(m\cdot n^2)$.

\text{}\vspace{-3ex}
\subsection{Pruning Strategies}
\label{subsec:pruning}

Note that, to compute the exact increase of the reliability $R(t_i,
W_i)$, we can immediately obtain the reliability increase of task
$t_i$: $\Delta R(t_i,$ $w_j) = -ln(1-p_j)$. For the diversity,
however, it is not efficient to compute the exact increase, $\Delta
STD(t_i, w_j)$, since we need to update the
diversity matrices (as mentioned in Section \ref{subsec:reduction2})
before/after the worker insertion. Therefore, in this subsection, we
present an effective pruning method to reduce the search space
without calculating the expected diversity for every $(t_i, w_j)$
pair.

Our basic idea of the pruning method is as follows. For any
task-and-worker pair $(t_i, w_j)$, assume that we can quickly
compute its lower and upper bounds of the increase for the expected
spatial/temporal diversity, denoted as $lb\_\Delta D(t_i, w_j)$ and
$ub\_\Delta D(t_i,$ $w_j)$, respectively.

Then, for two pairs $(t_i, w_j)$ and $(t_i', w_j')$, if it holds
that $lb\_\Delta D(t_i,$ $w_j)> ub\_\Delta D(t_i', w_j')$, then the
diversity increase of pair $(t_i', w_j')$ is inferior to that of
pair $(t_i, w_j)$.

We have the pruning lemma below.\vspace{-2ex}

\begin{lemma} (Pruning Strategy) Assume that $lb\_\Delta D(t_i, w_j)$ and $ub\_\Delta D(t_i,$ $w_j)$ are lower
	and upper bounds of the increase for the expected spatial/temporal
	diversity, respectively. Similarly, let $\Delta min\_R(t_i, w_j)$ be
	the increase of the smallest reliability among $m$ tasks after assigning worker $w_i$ to $t_i)$,
	respectively. Then, given two pairs $(t_i, w_j)$ and $(t_i', w_j')$,
	if it holds that: (1) $\Delta min\_R(t_i, w_j) \\ 
	\geq \Delta min\_R(t_i', w_j')$, and (2) $lb\_\Delta D(t_i, w_j) > ub\_\Delta
	D(t_i', w_j')$, then we can safely prune the pair $(t_i', w_j')$.
	\label{lemma:lem5}
\end{lemma}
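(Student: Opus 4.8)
The plan is to reduce the claim ``$(t_i', w_j')$ can be safely pruned'' to the single statement that the bi-objective \emph{increase pair} of $(t_i, w_j)$ dominates that of $(t_i', w_j')$ in exactly the sense used in line~6 of {\sf RDB-SC\_Greedy}, and then to invoke the standard dominance-based pruning principle: a dominated increase pair can never attain a higher ranking score than the pair dominating it, so discarding it cannot remove an eventual winner.

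First I would compare the two objectives coordinate by coordinate. For reliability, condition~(1) is immediate: $\Delta min\_R(t_i, w_j) \geq \Delta min\_R(t_i', w_j')$, so the reliability coordinate of $(t_i, w_j)$ is at least that of $(t_i', w_j')$. For diversity, I would combine the defining property of the bounds with condition~(2). By definition of a lower bound, the true increase of the expected spatial/temporal diversity satisfies $\Delta D(t_i, w_j) \geq lb\_\Delta D(t_i, w_j)$, and by definition of an upper bound, $\Delta D(t_i', w_j') \leq ub\_\Delta D(t_i', w_j')$. Chaining these with condition~(2) yields the sandwich $\Delta D(t_i, w_j) \geq lb\_\Delta D(t_i, w_j) > ub\_\Delta D(t_i', w_j') \geq \Delta D(t_i', w_j')$, hence $\Delta D(t_i, w_j) > \Delta D(t_i', w_j')$ strictly.

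Combining the two comparisons, the increase pair of $(t_i, w_j)$ has a reliability coordinate $\geq$ and a diversity coordinate strictly $>$ those of $(t_i', w_j')$, which is precisely the domination relation declared for line~6 (the case ``$R(\cdot)\geq$ and $STD(\cdot)>$''). Therefore $(t_i, w_j)$ dominates $(t_i', w_j')$. To finish, I would argue safety of removal directly from transitivity of the dominance order: every pair dominated by $(t_i', w_j')$ is also dominated by $(t_i, w_j)$, so the ranking score (number of dominated pairs) of $(t_i, w_j)$ is at least that of $(t_i', w_j')$; the pruned pair can thus never be the uniquely top-ranked pair, and removing it leaves the greedy choice unchanged.

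I expect the only delicate step to be this last one, namely justifying that pruning a dominated pair is harmless for the subsequent \emph{ranking by dominance count} rather than for a plain skyline selection; the substance there is that the score function is monotone under the dominance order, which follows from transitivity. The two coordinate comparisons are pure bookkeeping through the bound definitions. One caveat I would flag explicitly is the tie case: if the reliability coordinates happen to be equal, condition~(2) still supplies a strict diversity gap, and it is the strict-in-one-coordinate branch of the domination definition that licenses the pruning, so I would make sure that branch is the one invoked.
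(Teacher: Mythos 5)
Your proposal is correct and follows essentially the same route as the paper's own proof: condition (1) settles the reliability coordinate directly, condition (2) combined with the defining property of the bounds gives a strict gap in the true diversity increases, and the two together show $(t_i', w_j')$ is dominated and hence safely prunable. The only difference is one of rigor, not substance—you make explicit the chain $\Delta D(t_i, w_j) \geq lb\_\Delta D(t_i, w_j) > ub\_\Delta D(t_i', w_j') \geq \Delta D(t_i', w_j')$ and the monotonicity of the dominance-count ranking under transitivity, both of which the paper's proof leaves implicit.
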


\begin{proof}
	Please refer to Appendix E of the technical report \cite{aixivReport}.\vspace{-1ex}
\end{proof}

\vspace{0.5ex}\noindent {\bf The Computation of Lower/Upper Bounds
	for the Diversity Increase.} From Eq.~(\ref{eq:eq6}), we can
alternatively compute the lower/upper bounds of $E(STD(t_i))$ before
and after assigning worker $w_j$ to $t_i$. From Lemma
\ref{lemma:lem3}, we know that the maximum diversity is achieved
when the maximum number of workers are assigned to task $t_i$. Thus,
in each possible world $pw(W_i)$, the upper bound of diversity
$STD(t_i, pw(W_i))$ is given by $STD(t_i, W_i)$. Thus, we have
$ub\_E(STD(t_i)) = STD(t_i, W_i)$.

Moreover, from Eq.~(\ref{eq:eq6}), the lower bound,
$lb\_E(STD(t_i))$, of the diversity is given by the probability that
$STD(t_i, pw(W_i))$ is not zero in possible worlds times the minimum
possible non-zero diversity. Note that, $STD(t_i, pw(W_i))$ is zero,
when none or one worker is reliable. Thus, the minimum possible
non-zero spatial diversity is achieved when we assign two workers to
task $t_i$. In this case, two angles, $\min_{j=1}^r A_j$ and
$(1-\min_{j=1}^r A_j)$ can achieve the smallest diversity (i.e.,
entropy), which can be computed with $O(r)$ cost. The smallest
non-zero temporal diversity is achieved when one worker is assigned
to the task. The computation cost is also $O(r)$, where $r$ is
the maximum number of workers for task $t_i$.

After obtaining lower/upper bounds of the expected diversity, we can
thus compute bounds of the diversity increase. We use subscript
``b'' and ``a'' to indicate the measures before/after the worker
assignment, respectively. The bounds are:\vspace{-3ex}

{\scriptsize
	$$lb\_\Delta D(t_i, w_j) = lb\_E_a(STD(t_i)) -
	ub\_E_b(STD(t_i)),$$\vspace{-5.5ex}
	$$ub\_\Delta D(t_i, w_j) = ub\_E_a(STD(t_i)) - lb\_E_b(STD(t_i)).$$\vspace{-5ex}
}

Therefore, instead of computing the exact diversity values for all
task-and-worker pairs with high cost, we now can utilize their
lower/upper bounds to derive bounds of their increases, and in turn
filter out false alarms by Lemma \ref{lemma:lem5}.

\begin{figure}[ht]
	\centering \vspace{-2ex}
	\scalebox{0.32}[0.28]{\includegraphics{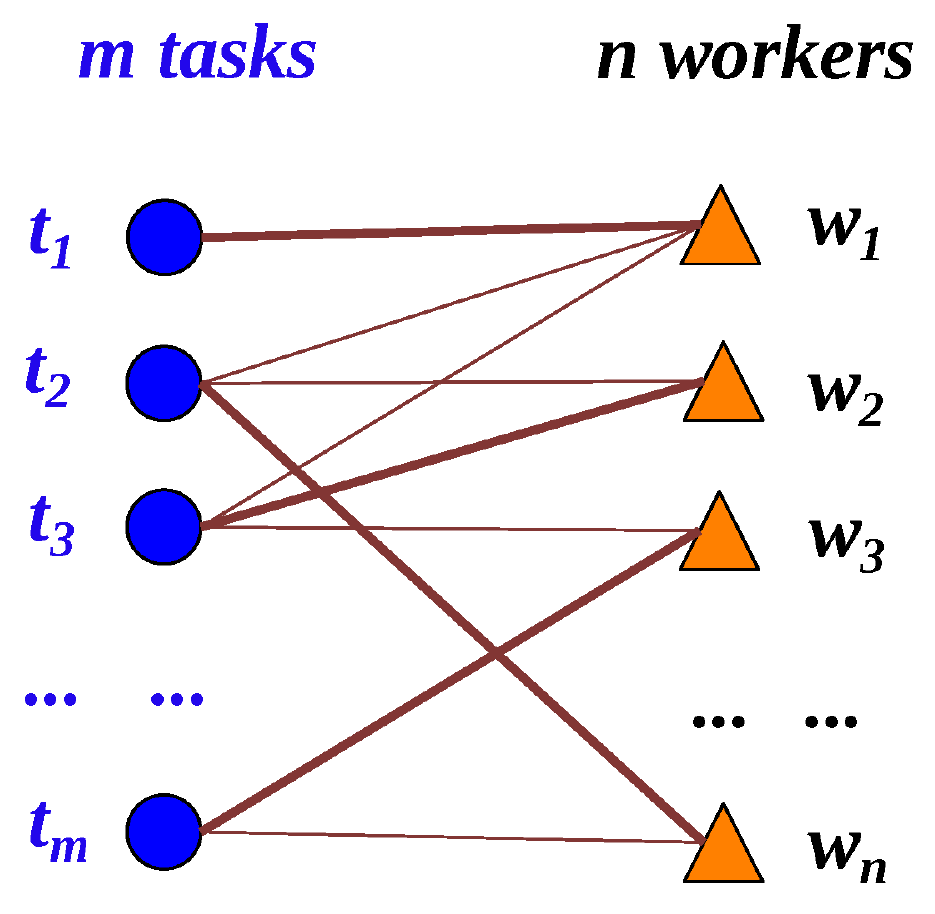}}\vspace{-3ex}
	\caption{\small Illustration of the Task-and-Worker
		Assignment.}\vspace{-4ex}
	\label{fig:assignment}
\end{figure}

\section{The Sampling Approach}
\label{sec:sampling}\vspace{-1ex}

\subsection{Random Sampling Algorithm}
\label{subsec:sampling}

In this subsection, we illustrate how to obtain a good
task-and-worker assignment strategy with high reliability and
diversity by random sampling. Specifically, in our RDB-SC problem,
all possible task-and-worker assignments correspond to the
population, where each assignment is associated with a value (i.e.,
reliability or diversity). As shown in Figure \ref{fig:assignment}, we
denote $m$ tasks $t_i$ and $n$ workers $w_j$ by nodes of circle and
triangle shapes, respectively. The edge between two types of nodes
indicates that worker $w_j$ can arrive at the location of task $t_i$
within the time period (and with correct moving direction as well).
Since each worker can be only assigned to one task, for each worker
node $w_j$, we can select one of $deg(w_j)$ edges connecting to it
(represented by bold edges in the figure), where $deg(w_j)$ is the
degree of the worker node $w_j$. As a result, we can obtain $n$
selected edges (as shown in Figure \ref{fig:assignment}), which
correspond to one possible assignment of workers to tasks.

Due to the exponential number ($O(\prod_{j=1}^n deg(w_j))$) of
possible assignments (i.e., large size of the population), it is not
feasible to enumerate all assignments, and find an optimal
assignment with high reliability and diversity. Alternatively, we
adopt the sampling techniques, and aim at obtaining $K$ random
samples from the entire population such that among these $K$ samples
there exists a sample with error-bounded ranks of reliability or
diversity.

Figure \ref{alg:sampling} illustrates the pseudo code of our
sampling algorithm, {\sf RDB-SC\_Sampling}, to tackle the
RDB-SC problem. Specifically, we obtain each random sample (i.e.,
task-and-worker assignment), $S_h$ ($1\leq h\leq K$), from the
entire population as follows. For each worker $w_j$ ($1\leq j\leq
n$), we first randomly generate an integer $x$ between $[1,
deg(w_j)]$, and then select the $x$-th edge that connecting two
nodes $t_i$ and $w_j$ (lines 5-7). After selecting $n$ edges for $n$
workers, respectively, we can obtain one possible assignment, which
is exactly a random sample $S_h$ ($1\leq h\leq K$). Here, the random
sample $S_h$ is chosen with a probability $p=\prod_{j=1}^n
\frac{1}{deg(w_j)}$. Given the sample (assignment) $S_h$, we can
compute its reliability and diversity.

We repeat the sampling process above, until $K$ samples (i.e.,
assignments) are obtained. After obtaining $K$ samples, we rank them
with the ranking scores \cite{Yiu07b} (w.r.t. reliability and
diversity) (line 8). Let $S_h$ be the sample with the highest score
(line 9). We can return this sample (assignment) as the answer to
the RDB-SC problem (line 10).

\begin{figure}[ht]
	\begin{center}\vspace{-3ex}
		\begin{tabular}{l}
			\parbox{3.1in}{
				\begin{scriptsize}
					\begin{tabbing}
						12\=12\=12\=12\=12\=12\=12\=12\=12\=12\=12\=\kill
						{\bf Procedure {\sf RDB-SC\_Sampling}} \{ \\
						\> {\bf Input:} $m$ time-constrained spatial tasks in $T$ and $n$ workers in $W$\\
						\> {\bf Output:} a task-and-worker assignment strategy, $\mathbb{S}$,with high reliability and diversity\\
						\> (1) \> \> $\mathbb{S} = \emptyset$\\
						\> (2) \> \> compute all the valid task-and-worker pairs $(t_i, w_j)$\\
						\> (3) \> \> for $h=1$ to $K$  \textit{// in each round, obtain one random sample $S_h$}\\
						\> (4) \> \> \> $S_h = \emptyset$\\
						\> (5) \> \> \> for each worker $w_j$ ($\in W$)\\
						\> (6) \> \> \> \> randomly select a task $t_i$ with probability $\frac{1}{deg(w_j)}$\\
						\> (7) \> \> \> \> $S_h = S_h \cup \{(t_i, w_j\}$\\
						\> (8) \> \> rank $S_h$ ($1\leq h\leq K$) by dominating scores among samples \\
						\> (9) \> \> let $\mathbb{S}$ be the sample, $S_h$, with the highest score\\
						\> (10) \> \> return $\mathbb{S}$\\
						\}
					\end{tabbing}
				\end{scriptsize}
			}
		\end{tabular}
	\end{center}\vspace{-8 ex}
	\caption{\small RDB-SC Sampling Algorithm.}
	\label{alg:sampling}\vspace{-2ex}
\end{figure}

Intuitively, when the sample size $K$ is approaching the population
size (i.e., $\prod_{j=1}^n deg(w_j)$), we can obtain RDB-SC answers
close to the optimal solution. However, since RDB-SC is NP-hard and
intractable (as proved in Lemma \ref{lemma:lem2}), we alternatively
aim to find approximate solution via samples with bounded rank
errors. Specifically, our target is to determine the sample size $K$
such that the sample with the maximum optimization goal (reliability
or diversity) has the rank within the $(\epsilon,
\delta)$-error-bound.

\subsection{Determination of Sample Size}
\label{subsec:sample_size}

Without loss of generality, assume that we have the population of
size $N$ (i.e., $N=\prod_{j=1}^n deg(w_j)$), $V_1$, $V_2$, ..., and
$V_N$, which correspond to the reliabilities/diversities of all
possible task-and-worker assignments, where $V_1\leq V_2 \leq ...
\leq V_N$. Then, for each value $V_i$ ($1\leq i\leq N$), we flip a
coin. With probability $p$, we accept value $V_i$ as the selected
random sample; otherwise (i.e., with probability $(1-p)$), we reject
value $V_i$, and repeat the same sampling process for the next value
(i.e., $V_{i+1}$). This way, we can obtain $K$ samples, denoted as
$S_1$, $S_2$, ..., and $S_K$, where $S_1\leq S_2 \leq ... \leq S_K$.

Our goal is to estimate the required minimum number of samples,
$\widehat{K}$, such that the rank of the largest sample $S_K$ is
bounded by $\epsilon N$ in the population (i.e., within
$((1-\epsilon) \cdot N, N]$) with probability greater than $\delta$.

Let variable $X$ be the rank of the largest sample, $S_K$, in the
entire population. We can calculate the probability that $X =
r$:\vspace{-3ex}

{\scriptsize
	\begin{eqnarray}
	Pr\{X=r\}&\hspace{-2ex}=&\hspace{-2ex}\dbinom{r-1}{K-1} \cdot
	p^{K-1}\cdot (1-p)^{r-K}\cdot
	p\cdot (1-p)^{N-r} \hspace{-5ex}\notag\\
	&\hspace{-2ex}=& \hspace{-2ex}\dbinom{r-1}{K-1} \cdot p^K \cdot
	(1-p)^{N-K}.\label{eq:eq13}
	\end{eqnarray}\vspace{-4ex}
}

Intuitively, the first 3 terms above is the probability that $(K-1)$
out of $(r-1)$ values are selected from the population before
(smaller than) $S_K$ (i.e., $V_1 \sim V_{r-1}$). The fourth term
(i.e., $p$) is the probability that the $r$-th largest value $V_r$
($=S_K$) is selected. Finally, the last term is the probability that
all the remaining $(N-r)$ values (i.e., $V_{r+1}\sim V_N$) are not
sampled.

With Eq.~(\ref{eq:eq13}), the cumulative distribution function of
variable $X$ is given by:\vspace{-6ex}

{\scriptsize
	\begin{eqnarray}
	Pr\{X\leq r\}= \sum_{i=1}^r Pr\{X=i\}.\label{eq:eq14}
	\end{eqnarray}\vspace{-3ex}
	
}

Now our problem is as follows. Given parameters $p$, $\epsilon$, and
$\delta$, we want to decide the value of parameter $K$ with high
confidence. That is, we have:$Pr\{X > (1-\epsilon)\cdot N\} > \delta. $

By applying the combination theory and Harmonic series, we can rewrite the formula $Pr\{X >(1-\epsilon) \cdot N\} > \delta$, and derive the following formula w.r.t. $K$:
\vspace{-3ex}

{\scriptsize
	\begin{eqnarray}
	K>\frac{p\cdot M\cdot e -1+p}{1-p+e\cdot p},\label{eq:eq20}
	\end{eqnarray} 
}

\vspace{-3ex}

\noindent where $M = (1-\epsilon)\cdot N$, and $e$ is the base of the natural logarithm. Please refer the detailed derivation to Appendix F of the technical report \cite{aixivReport}.

Since $K \leq M$ holds, and the probability $Pr\{X\leq (1-\epsilon)\cdot N\}$ decreases with the increase of K, we can thus conduct a binary search for
$\widehat{K}$ value within $\left(\frac{p\cdot M\cdot e
	-1+p}{1-p+e\cdot p}, M\right]$, such that $\widehat{K}$ is the
smallest $K$ value such that $Pr\{X \leq (1-\epsilon) \cdot N\} \leq 1-\delta$, where $p =
\prod_{j=1}^n \frac{1}{deg(w_j)}$.

This way, we can first calculate the minimum required sample size,
$\widehat{K}$, in order to achieve the $(\epsilon, \delta)$-bound.
Then, we apply the sampling algorithm mentioned in Section
\ref{subsec:sampling} to retrieve samples. Finally, we calculate one
sample with the highest reliability and diversity. Note that, in the
case no sample dominates all other samples, we select one sample
with the highest ranking score (i.e., dominating the most number of
other samples) \cite{Yiu07b}.

\section{The Divide-and-Conquer Approach}
\label{sec:D&C}\vspace{-1ex}

\subsection{Divide-and-Conquer Algorithm}
\label{subsec:D&CAlgorithm}

We first illustrate the basic idea of the divide-and-conquer approach. As
discussed in Section \ref{subsec:sampling}, the size of all possible
task-and-worker assignments is exponential. Although RDB-SC is NP-hard, we still can
speed up the process of finding the RDB-SC answers.
By utilizing divide-and-conquer approach, the problem space is
dramatically reduced.

Figure \ref{alg:DC} illustrates the main framework for our
divide-and-conquer approach, which includes three stages: (1)
recursively divide the RDB-SC problem into two smaller subproblems,
(2) solve two subproblems, and (3) merge the answers of two
subproblems. In particular, for Stage (1), we design a partitioning
algorithm, called {\sf BG\_Partition}, to divide the RDB-SC problem
into smaller subproblems. In Stage (2), we use either the greedy or
sampling algorithm, introduced in Section \ref{sec:greedy} and
Section \ref{sec:sampling}, respectively, to get an approximation
result. Moreover, for Stage (3), we propose an algorithm, called
{\sf SA\_Merge}, to obtain RDB-SC answers by combining answers to
subproblems.

\begin{figure}[ht]
	\begin{center}\vspace{-5ex}
		\begin{tabular}{l}
			\parbox{3.1in}{
				\begin{scriptsize}
					\begin{tabbing}
						12\=12\=12\=12\=12\=12\=12\=12\=12\=12\=12\=\kill
						{\bf Procedure {\sf RDB-SC\_DC}} \{ \\
						\> {\bf Input:} $m$ time-constrained spatial tasks in $T$, $n$ workers in $W$, and a threshold $\gamma$\\
						\> {\bf Output:} two sparse and balanced tasks-workers set pairs  \\
						\> (1) \> \> if $Size(T) \leq \gamma$ \\
						\> (2) \> \> \> solve problem ($T$,$W$) to get the result $\mathbb{S}$ directly\\
						\> (3) \> \> else \\
						\> (4) \> \> \> {\sf BG\_Partition} ($T$,$W$) to ($T_1$,$W_1$) and ($T_2$,$W_2$)\\
						\> (5) \> \> \> {\sf RDB-SC\_DC} ($T_1$,$W_1$) to get answer $\mathbb{S}_1$\\
						\> (6) \> \> \> {\sf RDB-SC\_DC} ($T_2$,$W_2$) to get answer $\mathbb{S}_2$\\
						\> (7) \> \> \> {\sf SA\_Merge} ($\mathbb{S}_1$, $\mathbb{S}_2)$ to get the result $\mathbb{S}$\\
						\> (8) \> \> return $\mathbb{S}$\\
						\}
					\end{tabbing}
				\end{scriptsize}
			}
		\end{tabular}
	\end{center}\vspace{-8ex}
	\caption{\small Divide and Conquer Algorithm.}
	\label{alg:DC}\vspace{-3ex}
\end{figure}

\subsection{Partition the Bipartite Graph}
\label{subsec:DivideMethod}

As shown in Figure \ref{fig:assignment}, the task-and-worker
assignment is a bipartite graph. We first need to iteratively divide
the whole graph into two subgraphs such that few edges crossing the
cut (sparse) and close to bisection (balanced). Unfortunately, this
problem is NP-hard \cite{arora2009expander}. Here we just provide a
heuristic algorithm, namely {\sf BG\_Partition}, which is shown in
Figure \ref{alg:divide}. After running BG\_Partition we can get two
subproblems $RDB\text{-}SC_1$ and $RDB\text{-}SC_2$.

\begin{figure}[ht]
	\begin{center} \vspace{-4ex}
		\begin{tabular}{l}
			\parbox{3.1in}{
				\begin{scriptsize}
					\begin{tabbing}
						12\=12\=12\=12\=12\=12\=12\=12\=12\=12\=12\=\kill
						{\bf Procedure {\sf BG\_Partition}} \{ \\
						\> {\bf Input:} $m$ time-constrained spatial tasks in $T$ and $n$ workers in $W$\\
						\> {\bf Output:} two sparse and balanced tasks-workers set pairs  \\
						\> (1) \> \> $W_1 = \varnothing, W_2 = \varnothing$\\
						\> (2) \> \> partition tasks into two even set $T_1$ and $T_2$ with KMeans\\
						\> (3) \> \> for $w_i$ in $W$\\
						\> (4) \> \> \>if the tasks that $w_i$ can do are all included in $T_1$\\
						\> (5) \> \> \> \> put $w_i$ into $W_1$ and $W = W-\{w_i\}$\\
						\> (6) \> \> \>if the tasks that $w_i$ can do are all included in $T_2$\\
						\> (7) \> \> \> \> put $w_i$ into $W_2$ and $W = W-\{w_i\}$\\
						\> (8) \> \> add $W$ into $W_1$  \\
						\> (9) \> \> add $W$ into $W_2$\\
						\> (10) \> \> return $(T_1, W_1)$ and $(T_2,W_2)$\\
						\}
					\end{tabbing}
				\end{scriptsize}
			}
		\end{tabular}
	\end{center}\vspace{-8ex}
	\caption{\small Bipartite Graph Partitioning Algorithm.}
	\label{alg:divide}\vspace{-1ex}
\end{figure}

First, we partition tasks into two almost even subsets, $T_1$ and
$T_2$, based on their locations, which can be done through
clustering the tasks to two set (i.e. KMeans.). Then we find out the workers who can
reach tasks totally included in some subset, and add them to the
corresponding worker subset, $W_1$ or $W_2$. By doing this, these
workers are isolated in the corresponding subproblems. For the rest
workers who can do tasks both in $T_1$ and $T_2$, we add them to
both $W_1$ and $W_2$. As Figure \ref{fig:partition} shows, $w_1$ and
$w_5$ are isolated in $RDB\text{-}SC_1$ and $RDB\text{-}SC_2$
respectively while $w_2,w_3,w4$ are added in both two subproblems.

\begin{figure}[ht]\centering \vspace{-3ex}
	\subfigure[][{\scriptsize Origin Problem}]{
		\scalebox{0.23}[0.23]{\includegraphics{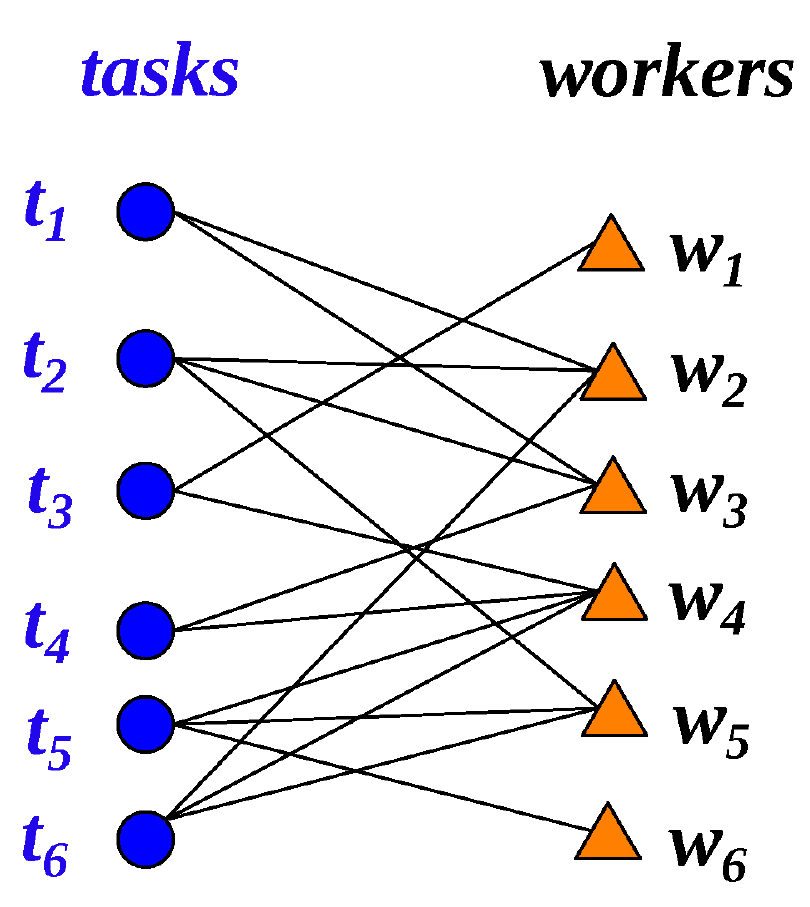}}\hspace{2ex}
		\label{subfig:BeforePartition}}
	\subfigure[][{\scriptsize Partitioned Problem}]{
		\scalebox{0.23}[0.23]{\includegraphics{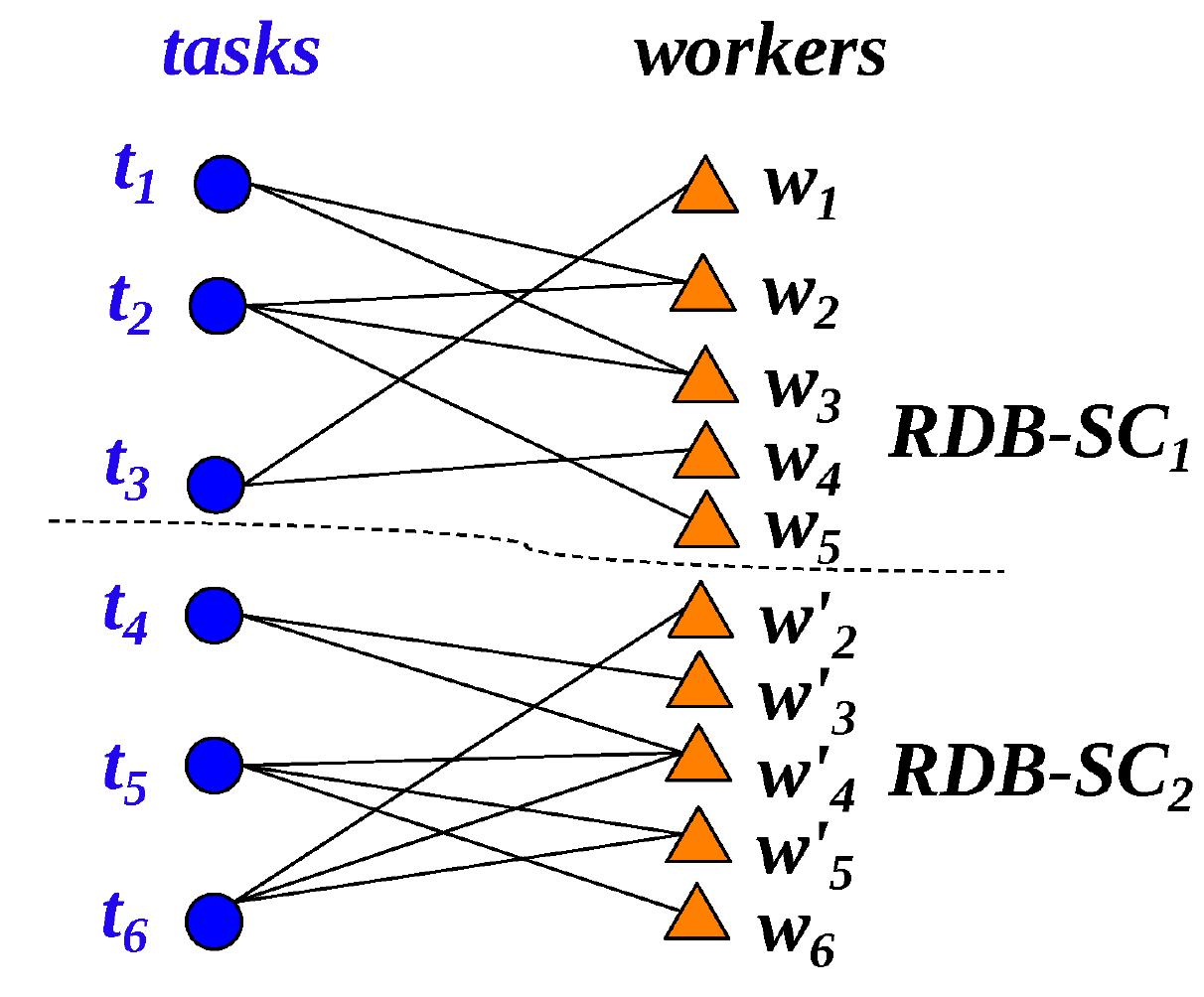}}
		\label{subfig:AfterPartition}}
	\subfigure[][{\scriptsize ICW and DCW}]{
		\scalebox{0.23}[0.23]{\includegraphics{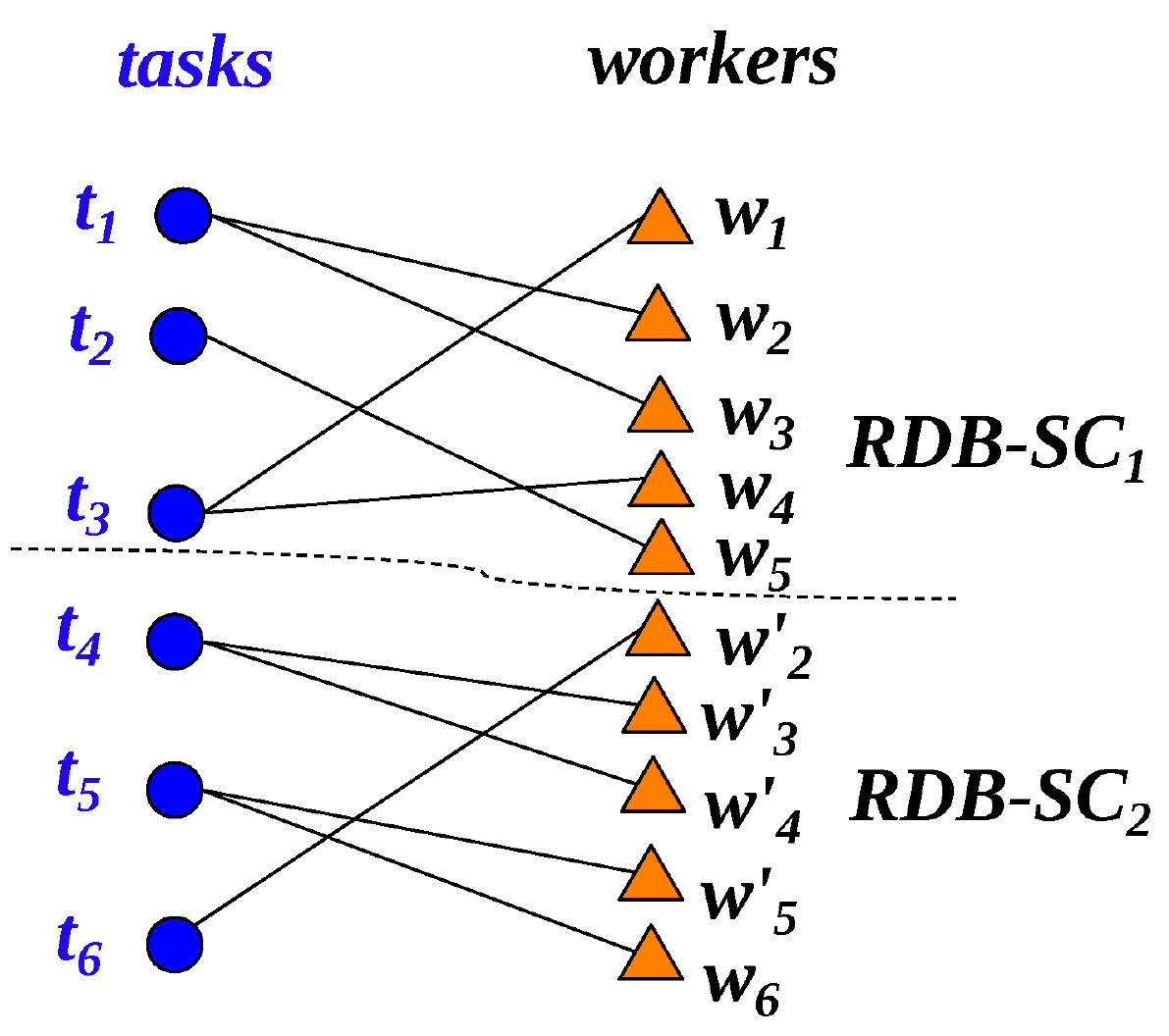}} 
		\label{subfig:subproblemAssignment}}\vspace{-3ex}
	\caption{\small Illustration of the Task-and-Worker Partitioning}
	\vspace{-2ex}
	\label{fig:partition}
\end{figure}

Note that, even if some workers are duplicated and added to both
$W_1$ and $W_2$, each of them can only be assigned to one task.
Moreover, the duplicated workers in each subproblem can only do a
part of the tasks that he can do in the whole problem. The
complexity of $RDB\text{-}SC_1$ or $RDB\text{-}SC_2$ are much lower
than $RDB\text{-}SC$. Each time before calling {\sf BG\_Partition}
algorithm, we check whether the size of the tasks is greater than a
threshold $\gamma$, otherwise that problem is small enough to solve
directly. The threshold $\gamma$ is set before running the
divide-and-conquer algorithm.

\subsection{Merge the Answers of the Subproblems}
\label{subsec:MergeMethod}
To merge the answers of the subproblems, we just need to
solve the conflicts of the workers who are added to
both two subproblems. Those duplicated workers in are called
\textit{conflicting workers}, whereas others
are called \textit{non-conflicting workers}. We first give the
property of deleting one copy of a conflicting worker.

A conflicting worker $w_i$ is called \textit{independent conflicting
	worker} (ICW), when $w_i$ is assigned to tasks $t_{i1}$ and $t_{i2}$
in the optimal assignments for $RDB\text{-}SC_1$ and
$RDB\text{-}SC_2$, respectively, and no other conflicting workers
are assigned to either $t_{i1}$ or $t_{i2}$. Otherwise, $w_i$ is
called \textit{dependent conflicting worker} (DCW). For example, in
Figure \ref{subfig:subproblemAssignment} worker $w_5$ is a ICW and worker
$w_2$ is a DCW.

\vspace{-2ex}
\begin{lemma} (Non-conflict Stable) The deletion of those conflicting workers'
	copies will not change the assignments of non-conflicting workers who are assigned
	to a same task with any deleted worker. \vspace{-1.5ex}
	\label{lemma:lem6}
\end{lemma}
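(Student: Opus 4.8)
The plan is to exploit two facts established earlier: (i) non-conflicting workers are \emph{locally confined} --- by the construction in {\sf BG\_Partition}, a non-conflicting worker belongs to exactly one subproblem and can only reach tasks inside that subproblem; and (ii) both optimization goals behave monotonically under worker insertion, as guaranteed by Lemma~\ref{lemma:lem3} (reliability) and Lemma~\ref{lemma:lem4} (expected diversity). I would fix a conflicting worker $c$ whose copy is deleted from a task $t$ lying in subproblem $b\in\{1,2\}$, and let $w$ be any non-conflicting worker that is co-assigned to $t$ in the optimal solution $\mathbb{S}_b$ of that subproblem. The goal is then to show that $w$ stays on $t$ after the deletion.

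First I would reduce the global claim to a local one. Since $w$ is non-conflicting, every task it can perform lies in $T_b$, so the only tasks $w$ could possibly be moved to are other $t'\in T_b$; crucially, deleting $c$ changes the assignment only at $t$ (it removes $c$ from $t$ and leaves every other task, in both subproblems, untouched). Hence it suffices to rule out that, holding all other workers fixed, relocating $w$ from $t$ to some $t'\in T_b$ (or dropping $w$) \emph{strictly} improves the combined objective \emph{after} $c$ has been removed.

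The core step is a contradiction argument against the optimality of $\mathbb{S}_b$. For reliability I would use that $R(t,\cdot)$ is additive (Lemma~\ref{lemma:lem3}): $w$ contributes exactly $-\ln(1-p_w)$ to whatever task it occupies, independently of the other workers, while deleting $c$ lowers $R(t)$ by the fixed amount $-\ln(1-p_c)$ and leaves every other task's reliability unchanged. Thus removing $c$ can only make $t$ \emph{more} of a bottleneck, which strengthens the case for keeping $w$ on $t$ under the max--min criterion; formally, if moving $w$ off $t$ raised the post-deletion minimum reliability, the same move applied to $\mathbb{S}_b$ (where $t$ starts even higher) would have raised its minimum as well, contradicting optimality. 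For diversity I would argue analogously through Lemma~\ref{lemma:lem4}: the gain of placing $w$ on $t'$ is unaffected by $c$, whereas the loss incurred at $t$ by removing $w$ is (weakly) larger once $c$ is gone, so a relocation profitable after deletion would already have been profitable in $\mathbb{S}_b$.

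The main obstacle is the reliability objective's \emph{max--min} nature: deleting $c$ perturbs only the single coordinate $R(t)$ downward while all other coordinates stay fixed, and because $\min(\cdot)$ is not linear, I must verify that this one-sided perturbation preserves the relative ranking between ``$w$ on $t$'' and ``$w$ on $t'$''. I expect to settle this by a short case analysis on whether $t$ or $t'$ is the bottleneck before and after the deletion, using the additivity of reliability so that each candidate's profile differs from $\mathbb{S}_b$'s only in the $t$-coordinate by the constant $-\ln(1-p_c)$. A diminishing-returns (submodularity) property of the entropy-based diversity would simplify the diversity half considerably, but since the excerpt supplies only monotonicity I would lean on the optimality-of-$\mathbb{S}_b$ comparison instead; finally, observing that for an independent conflicting worker (ICW) no other conflicting worker touches $t$ makes this local stability cleanest, matching the way the merge step will invoke the lemma.
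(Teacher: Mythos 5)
Your reliability half is sound: $R(t,\cdot)$ is additive (Lemma~\ref{lemma:lem3}), so the contribution $-\ln(1-p_w)$ of a non-conflicting worker $w$ does not interact with the deleted copy $c$, and your bottleneck case analysis can be completed. The gap is in the diversity half, and it sits exactly in the step you chose to defer. Writing $W_t$ for the set of workers assigned to $t$ (including both $w$ and $c$), your contradiction against the optimality of $\mathbb{S}_b$ requires
{\scriptsize
\begin{equation*}
E(STD(t, W_t)) - E(STD(t, W_t\setminus\{w\})) \;\le\; E(STD(t, W_t\setminus\{c\})) - E(STD(t, W_t\setminus\{c,w\})),
\end{equation*}
}
i.e., the marginal expected diversity of $w$ at $t$ weakly \emph{increases} once $c$ is deleted: only then does ``relocating $w$ is profitable after deletion'' imply ``relocating $w$ was already profitable in $\mathbb{S}_b$.'' This is precisely the diminishing-returns (submodularity) property that you say you will sidestep because ``the excerpt supplies only monotonicity.'' But the optimality comparison and submodularity are not alternative routes --- the former is the wrapper, the latter is the engine inside it. Monotonicity (Lemma~\ref{lemma:lem4}) does not imply it: a monotone set function can be supermodular, in which case deleting $c$ would \emph{shrink} $w$'s marginal value and your implication would run the wrong way. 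Since you assert the key inequality (``the loss incurred at $t$ by removing $w$ is weakly larger once $c$ is gone'') without proof, the proposal is incomplete at its crux.

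That inequality is in fact what the paper's proof spends nearly all of its effort establishing. It decomposes the expectation of Eq.~(\ref{eq:eq6}) over possible worlds according to membership of $w_k$ (your $w$) and $w_j$ (your $c$), reducing the claim to the per-world inequality $STD(t_i, pw\cup\{w_k\}) - STD(t_i, pw) \ge STD(t_i, pw\cup\{w_k,w_j\}) - STD(t_i, pw\cup\{w_j\})$, and then proves that statement for the entropy-based spatial diversity by a case analysis on whether the angles split by $w_k$ and $w_j$ overlap, using that $f(x) = (x+\epsilon)\log(x+\epsilon) - x\log x$ is increasing (the temporal part is analogous). Any complete write-up of your approach would have to reproduce this argument, or an equivalent submodularity proof for Eqs.~(\ref{eq:eq3})--(\ref{eq:eq5}) under possible-worlds semantics. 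One further point: the paper separately handles the corner case where $t$ is assigned only $c$ and $w$ --- there the diversity comparison is vacuous (both configurations have zero diversity), and the argument is that if $w$ also left, $rel(t,\cdot)$ would drop to zero, which the max--min reliability objective forbids; your bottleneck remark gestures at this but should state it explicitly.
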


\begin{proof}
	Please refer to Appendix G of the technical report \cite{aixivReport}. 
\end{proof}

\vspace{-2.5ex}

\begin{lemma} (Deletion of copies of ICWs and DCWs) The deletion of copies of ICWs can be
	done independently while DCWs' deletion need to be considered integrally. \vspace{-1ex}
	\label{lemma:lem7}
\end{lemma}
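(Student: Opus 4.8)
The plan is to split the statement into its two halves and treat the ICW claim and the DCW claim separately, in both cases leaning on Lemma~\ref{lemma:lem6} and on the structural contrast between the (additive) reliability objective and the (non-separable) diversity objective. For the ICW half, I would fix an arbitrary ICW $w_i$, assigned to $t_{i1}$ in the optimal solution of $RDB\text{-}SC_1$ and to $t_{i2}$ in that of $RDB\text{-}SC_2$. By the definition of an ICW, no other conflicting worker is assigned to $t_{i1}$ or to $t_{i2}$, so every remaining worker on these two tasks is non-conflicting. Deleting one copy of $w_i$ alters only the worker sets of $t_{i1}$ and $t_{i2}$, and by Lemma~\ref{lemma:lem6} the non-conflicting workers sharing these tasks keep their assignments. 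The key observation is that the objective contributions of $t_{i1}$ and $t_{i2}$ — both the reliability $R(\cdot,\cdot)$ via Eq.~(\ref{eq:eq8}) and the expected diversities $E(STD(\cdot))$ — depend only on the workers on those two tasks, and no other conflicting worker's deletion decision ever touches $t_{i1}$ or $t_{i2}$. Hence the choice of which copy of $w_i$ to remove is isolated from the deletion decisions of all other conflicting workers and can be resolved locally, keeping $w_i$ on whichever of the two tasks yields the better net (reliability, diversity) outcome under the dominance scoring used elsewhere. This establishes independence of ICW deletions.

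For the DCW half, I would show that this local decomposition fails precisely because of the diversity term. Let $w_i$ be a DCW; by definition some other conflicting worker $w_k$ shares one of its assigned tasks, say $t_{i1}$. The reliability objective is additive — by Lemma~\ref{lemma:lem3}, deleting $w_i$ changes $R(t_{i1},\cdot)$ by a fixed amount depending only on $p_i$, regardless of whether $w_k$ stays — so reliability alone would never couple the two decisions. The diversity, however, is not separable: the entropy expressions in Eqs.~(\ref{eq:eq3}) and~(\ref{eq:eq4}) depend on the full configuration of angular (or temporal) gaps induced by all workers on $t_{i1}$, so the marginal diversity $\Delta STD$ contributed by $w_i$ changes according to whether $w_k$ is retained or deleted, since removing $w_k$ merges two adjacent gaps and thereby reshapes the very gap that $w_i$ subdivides. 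Thus the optimal retention decision for $w_i$ depends on that for $w_k$ and vice versa, and the two cannot be optimized in isolation. To organize this, I would introduce a dependency graph on the conflicting workers, joining two of them by an edge whenever they share a task in either subproblem; the ICWs are exactly the isolated vertices, while each connected component of size larger than one consists of mutually dependent DCWs that must be resolved jointly (integrally).

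The main obstacle I anticipate is the DCW half: I must argue a genuine, irreducible coupling rather than a coincidental one. The clean route is to localize the source of non-separability to the entropy-based diversity and to exhibit, within a shared task, that $w_i$'s incremental entropy is a strictly configuration-dependent quantity, so that no fixed per-worker score can reproduce the joint optimum. Reliability, being the additive sum $\sum_{w_j}-ln(1-p_j)$, contributes nothing to the coupling, which actually sharpens the argument: the integral treatment of DCWs is forced entirely by the diversity objective, and it suffices to demonstrate that the marginal diversity of one DCW varies with the presence of another DCW sharing its task.
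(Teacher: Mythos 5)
Your proof follows essentially the same skeleton as the paper's: both rest on Lemma~\ref{lemma:lem6} to freeze the non-conflicting workers, both derive ICW independence from the definitional fact that no other conflicting worker touches $t_{i1}$ or $t_{i2}$, and both reduce the DCW case to joint treatment of conflicting workers linked through shared tasks. Your dependency graph whose connected components must be resolved jointly is exactly the paper's phrase that DCWs ``can connect with each other through some tasks''; the paper then simply finishes constructively, observing that a group of $k$ DCWs can be handled by enumerating the $2^k$ keep/delete combinations and picking the best. In other words, the paper reads ``integrally'' as ``jointly enumerable,'' and makes no attempt to prove that the coupling is irreducible.

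You attempt that stronger irreducibility claim, and that is where a genuine error appears. Your assertion that reliability ``would never couple the two decisions'' because the per-task delta $-\ln(1-p_j)$ of Eq.~(\ref{eq:eq8}) is fixed is wrong under the paper's objective: the first goal in Definition~\ref{definition:crowdsourcing} is to maximize the \emph{minimum} reliability over all tasks. Although the delta to $R(t_{i1},\cdot)$ from removing $w_i$ is a constant, whether keeping $w_i$ on $t_{i1}$ rather than $t_{i2}$ helps the minimum depends on the resulting level of $R(t_{i1})$, which depends on whether $w_k$ stays; in the extreme case where $t_{i1}$ is covered only by the two DCWs, both leaving drives its reliability to zero (a situation the paper's own proof of Lemma~\ref{lemma:lem6} singles out and forbids). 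So two DCWs sharing a task are coupled through reliability as well, and your conclusion that integral treatment is ``forced entirely by the diversity objective'' is false. This does not sink the lemma itself: exhibiting any coupling suffices for the DCW half, your entropy-based coupling argument is valid, and your ICW half is sound; but the mischaracterization of reliability should be dropped, or the reliability coupling acknowledged alongside the diversity one.
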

\begin{proof}
	Please refer to Appendix H of the technical report \cite{aixivReport}. 
\end{proof}
\text{}\vspace{-4ex}

\nop{
	\begin{proof}
		According to Lemma \ref{lemma:lem6}, deleting copies of conflicting
		workers will not affect the assignments of non-conflicting workers.
		The only situation we need to consider is that some conflicting
		workers are assigned to some tasks. In other words, they can connect
		with each other through some tasks. For these DCWs, we just need to
		enumerate all the possible combinations and pick the best one. If
		the size of a group of DCWs is $k$, we have $2^k$ combinations to
		check.
	\end{proof}
	
}

With Lemmas \ref{lemma:lem6} and \ref{lemma:lem7}, the algorithm for
merging subproblems is shown in Figure \ref{alg:merge}, called {\sf
	SA\_Merge}.

\begin{figure}[ht]
	\begin{center}\vspace{-3ex}
		\begin{tabular}{l}
			\parbox{3.1in}{
				\begin{scriptsize}
					\begin{tabbing}
						12\=12\=12\=12\=12\=12\=12\=12\=12\=12\=12\=\kill
						{\bf Procedure {\sf SA\_Merge}} \{ \\
						\> {\bf Input:} two subproblems ($T_1, W_1$) and ($T_2, W_2$), and their local answer $\mathbb{S}_1$ and $\mathbb{S}_2$ \\
						\> {\bf Output:} one merged problem's answer $\mathbb{S}$ for ($T_1 \cup T_2, W_1 \cup W_2$) \\
						\> (1) \> \> $W^\prime = W_1 \cap W_2$\\
						\> (2) \> \> while $W^\prime$ is not empty\\
						\> (3) \> \> \> pick first worker in $W^\prime$ as $w_t$\\
						\> (4) \> \> \> find out dependent workers for $w_t$ as $W_d$\\
						\> (5) \> \> \> add $w_t$, $W_d$ into $W_t$\\
						\> (6) \> \> \> remove one copy of each worker in $W_t$ from $\mathbb{S}_1$ and $\mathbb{S}_2$ integrally\\
						\> (7) \> \> \> $W^\prime = W^\prime - W_t$\\
						\> (8) \> \> $\mathbb{S} = \mathbb{S}_1 \cup \mathbb{S}_2$\\
						\> (9) \> \> return $\mathbb{S}$\\
						\}
					\end{tabbing}
				\end{scriptsize}
			}
		\end{tabular}
	\end{center}\vspace{-8ex}
	\caption{\small Algorithm of Merging Answers to Subproblems.}
	\label{alg:merge}\vspace{-3ex}
\end{figure}

\section{Cost-Model-Based Grid Index}
\label{sec:gridIndexCostModel}\vspace{-1ex}

\subsection{Index Structure}
\label{subsec:gridIndex}

We first illustrate the index structure, namely RDB-SC-Grid, for the
RDB-SC system. Specifically, in a 2-dimensional data space
$[0,1]^2$, we divide the space into $1/\eta^2$ square cells with
side length $\eta$, where $\eta < 1$ and we discuss in
Appendix I of the technical report \cite{aixivReport} about how to set $\eta$ based on a cost
model.

Each cell has a unique ID, $cellid$, and contains a task list and a
worker list which store tasks and workers in it, respectively. In
each task list, we maintain quadruples $(tid, l, s, e)$, where $tid$
is the task ID, $l$ is the position of the task, and $[s, e]$ is the
valid period of the task. In each worker list, we keep records in
the form $(wid, l, v, \alpha^-, \alpha^+, p)$, where $wid$ is the
worker ID, $l$ and $v$ represent the location and velocity of the
worker, respectively, $[\alpha^-, \alpha^+]$ indicates the angle
range of moving directions, and $p$ is the reliability of the
worker. For each cell, we also maintain bounds, $[v_{min},
v_{max}]$, for velocities of all workers in it, $[\alpha_{min},
\alpha_{max}]$, for all workers' moving directions, and $[s_{min},
e_{max}]$ of tasks' time constraints, where $s_{min}$ is the
earliest start time of tasks stored in the cell, and $e_{max}$ is
the latest deadline in the cell. In addition, each cell is
associated with a list, $tcell\_list$, which contains all the IDs of
cells that can be reachable to at least one worker in that cell.

\vspace{0.5ex}\noindent {\bf Pruning Strategy on the Cell Level.}
One straightforward way to construct $tcell\_list$ for cell $cell_i$
is to check all cells $cell_j$, and add those ``reachable'' cells to
$tcell\_list$. This is however quite time-consuming to check each
pair of worker and task from $cell_i$ and $cell_j$, respectively. In
order to accelerate the efficiency of building $tcell\_list$, we
propose a pruning strategy to reduce the search space. That is, for
cell $cell_i$, if a cell, $cell_j$, is in the \textit{reachable
	area} (in workers' moving directions) within two rays starting from
$cell_i$ (i.e., reachable by at least one worker in $cell_i$), then
we add $cell_j$ to list $tcell\_list$ of $cell_i$. Therefore, our
pruning strategy can effectively filter out those cells that are
definitely unreachable.

Specifically, we can prune $cell_j$ as follows. First, we calculate
the minimum and maximum distances, $d_{min}$ and $d_{max}$,
respectively, between any two points in $cell_i$ and $cell_j$. As a
result, any worker who moves from $cell_i$ will arrive at $cell_j$
with time at least $t_{min} = \frac{d_{min}}{v_{max}(cell_i)}$,
where $v_{max}(cell_i)$ is the maximum speed in $cell_i$. Thus, if
$t_{min} > e_{max}(cell_i)$, we can safely prune $cell_j$, where
$e_{max}(cell_i)$ represents the latest deadline of tasks in
$cell_i$. After pruning these unreachable cells, we further check
the rest cells one by one to build the final $tcell\_list$ for
$cell_i$.

Please refer to Appendix I of the technical report \cite{aixivReport}.

\subsection{Dynamic Maintenance}
\label{subsec:maintenance}

To insert a worker $w_i$ into RDB-SC-Grid, we first find the cell
$cell_k$ where $w_i$ locates, which uses $O(1)$ time. Moreover, we
also need to update the $tcell\_list$ for $cell_k$, which requires
$O(cost_{update})$ time in the worst case. The case of removing a
worker is similar.

To insert a task $t_j$ into RDB-SC-Grid, we obtain the cell,
$cell_k$, for the insertion, which requires $O(1)$ time cost.
Furthermore, we need to check all the cells that do not contain
$cell_k$ in their $tcell\_list$'s, which needs to check all workers
in the worst case  (i.e., $O(n)$). When removing a task from
$cell_k$, we check all the cells containing it in their
$tcell\_list$s, which also requires to check every worker in the
worst case (i.e., $O(n)$).

\section{Experimental Study}
\label{sec:exper}
\vspace{-1ex}

\subsection{Experimental Methodology}
\label{subsec:expMethods}

\begin{table}[t]
	\begin{center}\vspace{-2ex}
		\caption{Experiments setting.} \label{table2}\vspace{-2ex}
		{\small\scriptsize
			\begin{tabular}{l|l} 
				{\bf Parameter} & {\bf \qquad \qquad \qquad Values} \\ \hline \hline
				range of expiration time $rt$  & [0.25, 0.5], [0.5, 1], \textbf{[1, 2]}, [2, 3]\\ 
				reliability of workers $[p_{min}, p_{max}]$ & (0.8, 1), (0.85, 1), \textbf{(0.9, 1)}, (0.95, 1)\\
				number of tasks $m$ & 5K, 8K, \textbf{10K}, 50K, 100K \\ 
				number of workers $n$ & 5K, 8K, \textbf{10K}, 15K, 20K \\ 
				velocities of workers $[v^-, v^+]$   & [0.1, 0.2],\textbf{[0.2, 0.3]}, [0.3, 0.4], [0.4, 0.5]\\ 
				range of moving angles ($\alpha_j^+ - \alpha_j^-$) & (0, $\pi$/8], (0, $\pi$/7],
				\textbf{(0, $\pi$/6]}, (0, $\pi$/5], (0, $\pi$/4]\\
				balancing weight $\beta$   & (0, 0.2], (0.2, 0.4], \textbf{(0.4, 0.6]}, (0.6, 0.8], (0.8, 1)\\ 
				\hline
			\end{tabular}
		}
	\end{center}\vspace{-7ex}
	
\end{table}

\vspace{0.5ex}\noindent {\bf Data Sets.} We use both real and
synthetic data to test our approaches. Specifically, for real data,
we use the POI (Point of Interest) data set of China
\cite{bclChinaPOI2008} and T-Drive data set \cite{yuan2011driving,
	yuan2010t}. The POI data set of China contains over 6 million POIs
of China in 2008, whereas T-Drive data set includes GPS trajectories
of 10,357 taxis within Beijing during the period from Feb. 2 to Feb.
8, 2008. We test our approaches in the area of Beijing (with
latitude from \ang{39.6}  to \ang{40.25} and longitude from
\ang{116.1} to \ang{116.75}), which covers 74,013 POIs. After
filtering out short trajectories from T-Drive data set, we obtain
9,748 taxis' trajectories. We use POIs to initialize the locations
of tasks. From the trajectories, we extract workers' locations,
ranges of moving directions, and moving speeds. For workers'
confidences $p$, tasks' valid periods $[s,e]$, and parameter $\beta$
to balance spatial and temporal diversity, we follow the same
settings as that in synthetic data (as described below).

For synthetic data, we generate locations of workers and tasks in a
2D data space $[0, 1]^2$, following either Uniform (UNIFORM) or
Skewed (SKEWED) distribution. In particular, similar to
\cite{deng2013maximizing}, we generate tasks and workers with Skewed
distribution by letting 90\% of tasks and workers falling into a
Gaussian cluster (centered at (0.5, 0.5) with variance = $0.2^2$).
For each moving worker $w_j$, we randomly produce the angle range,
$[\alpha_j^-, \alpha_j^+]$, where $\alpha_j^-$ is uniformly chosen
within $[0, 2\pi]$ and $(\alpha_j^+ - \alpha_j^-)$ is uniformly
distributed in a range of angle (e.g., $(0, \pi/6]$). Moreover, we
also generate check-in times of each worker with Uniform or Skewed
distribution, and compute one's confidence (reliability) following
Gaussian distribution within the range $[p_{min}, p_{max}]$ (with
mean $\frac{p_{min}+p_{max}}{2}$, and variance $0.02^2$).
Furthermore, we obtain the velocity of each worker with either
Uniform or Gaussian distribution within range $[v^-, v^+]$, where
$v^-, v^+ \in (0, 1)$. Regarding spatial tasks, we generate their
valid periods, $[s,e]$, within a time interval $[st, st+rt]$, where
$st\in [0, 24]$ follows either Uniform or Gaussian distribution, and
$rt$ follows the Uniform distribution. To balance $SD$ and $TD$, we
test parameter $\beta$ following Uniform distribution within
$[0,1]$. The case of $rt$ or $\beta$ following other distributions
is similar, and thus omitted due to space limitations.

\vspace{0.5ex}\noindent {\bf Configurations of a Customized gMission
	Spatial Crowdsourcing Platform.} We tested the performance of our
proposed algorithms with an incrementally updating strategy, on a
real spatial crowdsourcing platform, namely gMission
\cite{gmissionhkust, chen2014gmission}. In particular, gMission is a
general laboratory application, which has over 20 existing active
users in the Hong Kong University of Science and Technology. In gMission, users can ask/answer spatial crowdsourcing
questions (tasks), and the platform pushes tasks to users based on
their spatial locations. As gMission has been released, the recruitment, 
monitoring and compensation of workers are already provided. A credit 
point system of gMission can record the contribution of workers. 
Workers can use their credit point to redeem coupons of book stores
or coffee shops. Moreover, when they are using gMission, their 
trajectories are recorded, which is informed to the users. 
To evaluate our RDB-SC model and algorithms, we will modify/adapt 
gMission to an RDB-SC system.

Specifically, in order to build user profiles, we set up the
peer-rating over 613 photos taken by active users. They are asked to
rate peers' photos based on their resolutions, distances, and
lights. The score of each photo is given by first removing the
highest and lowest scores, and then averaging the rest.
Moreover, the score of each user is given by the average score of
all photos taken by this user. Intuitively, a user receiving a
higher score is more reliable. Thus, we set the user's peer-rating
score as one's reliability value.

In addition, we also provide a setting option for users to configure
their preferred working area. For example, a user is going home, and
wants to do some tasks on the way home. In general, one may
just want to go to places not deviating from the direction towards his/her
home too much. Thus, a fan-shaped working area is reasonable.
Due to the maximum possible speed of the user, this fan-shaped
working area is also constrained by the maximum moving distance.

Furthermore, we enable gMission to detect the accuracy of answers. When a worker $w_j$ is taking a photo to answer a task $t_i$, we record his/her instantaneous information, like the facing direction, the location, and the timestamp, through his/her smart device. By comparing the information with the required angle and time constraint of $t_i$, we can calculate the error of angle $\Delta \theta_{ij}$ and the error of time $\Delta t_{ij}$. Then, the accuracy of the answer is $Accuracy_{ij} = \beta_i \cdot \frac{\Delta \theta_{ij}}{\pi} + (1 - \beta_i) \cdot \frac{\Delta t_{ij}}{e_i - s_i}$, where $\beta_i$ is the balancing weight of $t_i$, and  $s_i$ and $e_i$ are the starting and ending times, respectively. In particular, $0 \leq \Delta \theta_{ij} \leq \pi$ and $0 \leq \Delta t_{ij} < e_i - s_i$. Then, the accuracy of a task is given by the average value of all accuracy values of the answers to the task. For the accuracy control, it could be quite interesting and challenging, and we would like to leave it as our future work.

In order to deploy our proposed algorithms, we implement the
cost-model-based grid index, and apply the incremental updating
strategy for dynamically-changing tasks and workers to our RDB-SC
system. The framework for the incremental updating strategy is shown 
in Figure \ref{alg:incremental} below. In line 6 of the framework (Figure \ref{alg:incremental}), we 
can use our proposed algorithms to assign the available workers 
to the opening tasks, where considering $A$ and $\mathbb{S}_c$ 
means the reliability and diversity of a task $t_i$ is calculated from the received answers, the workers 
assigned to  $t_i$ in $\mathbb{S}_c$, and newly assigned workers. In particular, 
we periodically update the task-and-worker assignments every $t_{interval}$ timestamps. 
For each update, those workers, who either have accomplished the assigned 
tasks or rejected the assignment requests, would be available to receive new tasks. 
In our experiments, we set this length, $t_{interval}$, of the periodic
update interval, from 1 minute to 4 minutes, with an increment
of 1 minute. We hired 10 active users in our experiments and chose 5
sites to ask spatial crowdsourcing questions (tasks) with 15 minutes
opening time. The sites are close to each other, and in general a
user can walk from one site to another one within 2 minutes.

\nop{
	
	Incrementally updating the assignments is important for a real
	deployed algorithm. A worker's status is available when he has no
	task to work, or occupied when he is conducting some task. Once an
	occupied worker submits an answer to the assigned task or reject
	that task, he becomes available. It is a common sense that the
	assigned task for a user should not changing when he is approaching
	to the target place to conduct it. We periodically use our proposed
	algorithms to assign the available workers to the opening tasks.
	During that process, one task's diversity and reliability is
	calculated on already received answers, on-going workers and newly
	assigned workers. In a word, we incrementally update the assignments
	periodically with a constant time interval, $t_{interval}$.
	
}

\nop{
	
	As gMission is still a laboratory application, it just has a small
	group of active users. If gMission became popular in future, there
	will be a large number of available workers each time when we
	incrementally update the assignments.  In this set of experiments,
	we want to compare our approaches in a more realistic situation,
	then we set $t_{interval}$ a little bit longer to have more
	available workers in each incremental update. In the experiments,
	$t_{interval}$ is chosen from 1 minute to 4 minutes with an
	increment of 1 minute.
	
}

\begin{figure}[ht]
	\begin{center}\vspace{-4ex}
		\begin{tabular}{l}
			\parbox{3.1in}{
				\begin{scriptsize}
					\begin{tabbing}
						12\=12\=12\=12\=12\=12\=12\=12\=12\=12\=12\=\kill
						{\bf Procedure {\sf RDB-SC\_Incremental}} \{ \\
						\> {\bf Input:} $m$ time-constrained spatial tasks in $T$, 	$n$ workers in $W$, \\
						\> {\bf Output:} a updated task-and-worker assignment strategy, $\mathbb{S}$, \\
						\>\>\>\>\hspace{2ex}with high reliability and diversity\\
						\> (1) \> \> $\mathbb{S} = \emptyset$\\
						\> (2) \> \> from $W$, retrieve all the available workers to $W_a$\\
						\> (3) \> \> from $T$, retrieve all the opening tasks to $T_a$\\
						\> (4) \> \> obtain the received answers of all the tasks in $W_a$, noted as $A$\\
						\> (5) \> \> obtain the current assignment, noted as $S_c$\\
						\> (6) \> \> assign workers in $W_a$ to tasks in $T_a$ considering $A$ and $S_c$
						(new pairs are \\
						\>\>\>added to $\mathbb{S}$ )\\
						\> (7) \> \> $\mathbb{S} = \mathbb{S} \cup \mathbb{S}_c$\\
						\> (8) \> \> return $\mathbb{S}$\\
						\}
					\end{tabbing}
				\end{scriptsize}
			}
		\end{tabular}
	\end{center}\vspace{-8ex}
	\caption{\small Incremental Updating Strategy.}
	\label{alg:incremental}\vspace{-2.5ex}
\end{figure}

\begin{figure}[t]\centering \vspace{-1ex}
	\subfigure[][{\scriptsize Minimum Reliability}]{
		\scalebox{0.18}[0.18]{\includegraphics{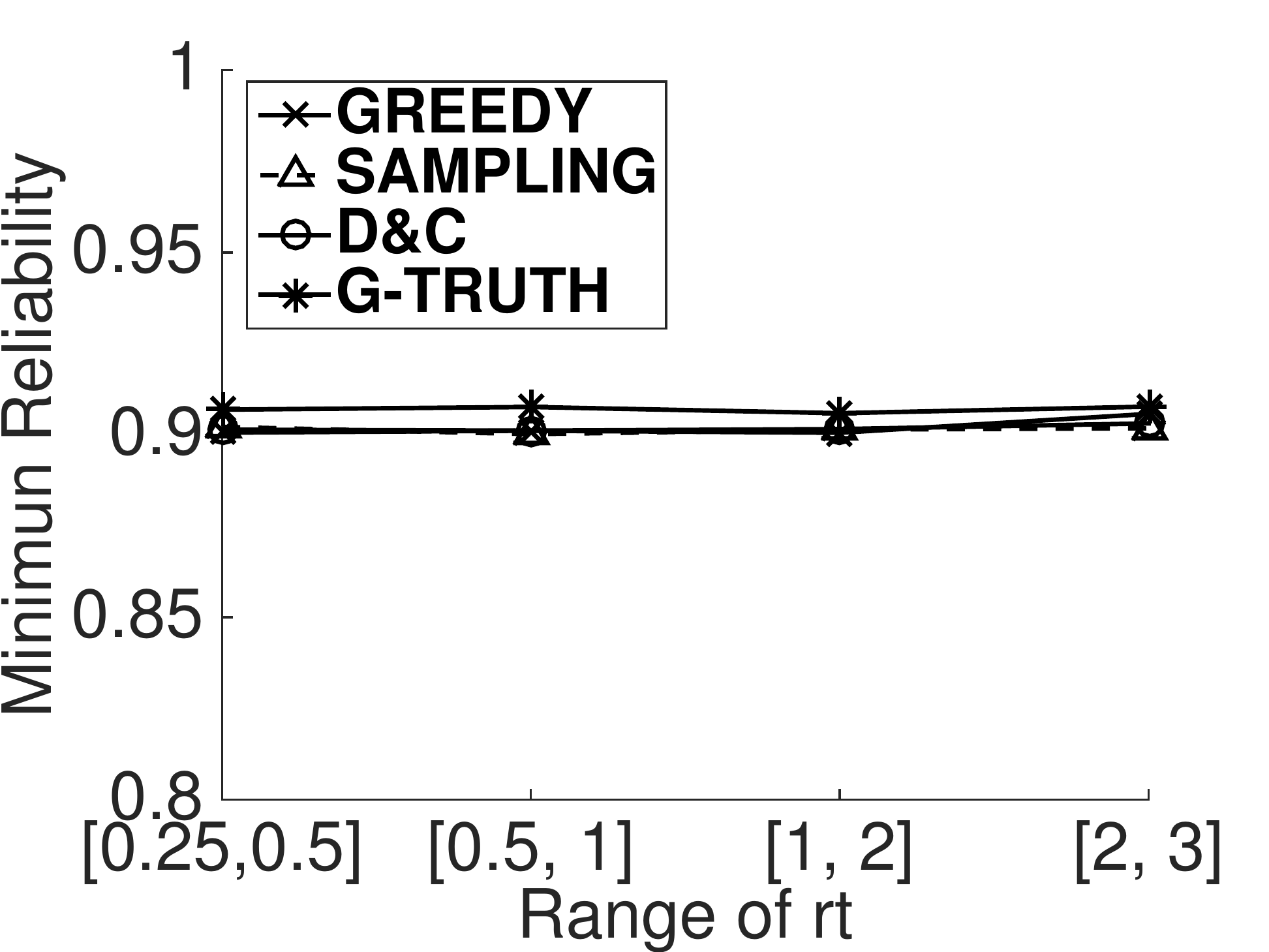}}\vspace{-2ex}
		\label{subfig:tconstrainR}}
	\subfigure[][{\scriptsize Summation of Diversity}]{
		\scalebox{0.18}[0.18]{\includegraphics{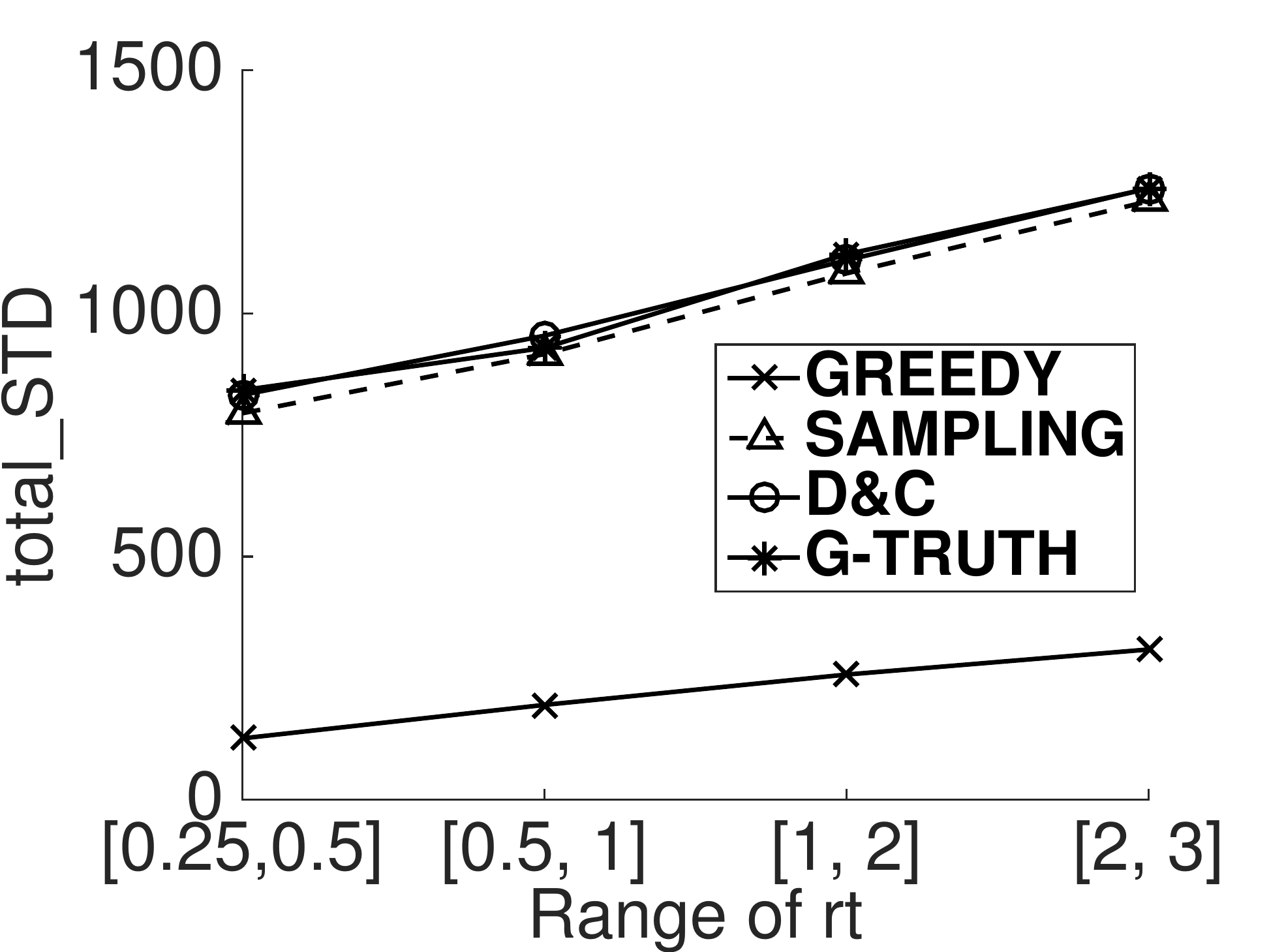}}\vspace{-2ex}
		\label{subfig:tconstrainD}}
	\vspace{-3ex}
	\caption{\small Effect of Tasks' Expiration Time Range of $rt$} \vspace{-3ex}
	\label{fig:tconstrain}
\end{figure}

\begin{figure}[t]\centering 
	\subfigure[][{\scriptsize Minimum Reliability}]{
		\scalebox{0.18}[0.18]{\includegraphics{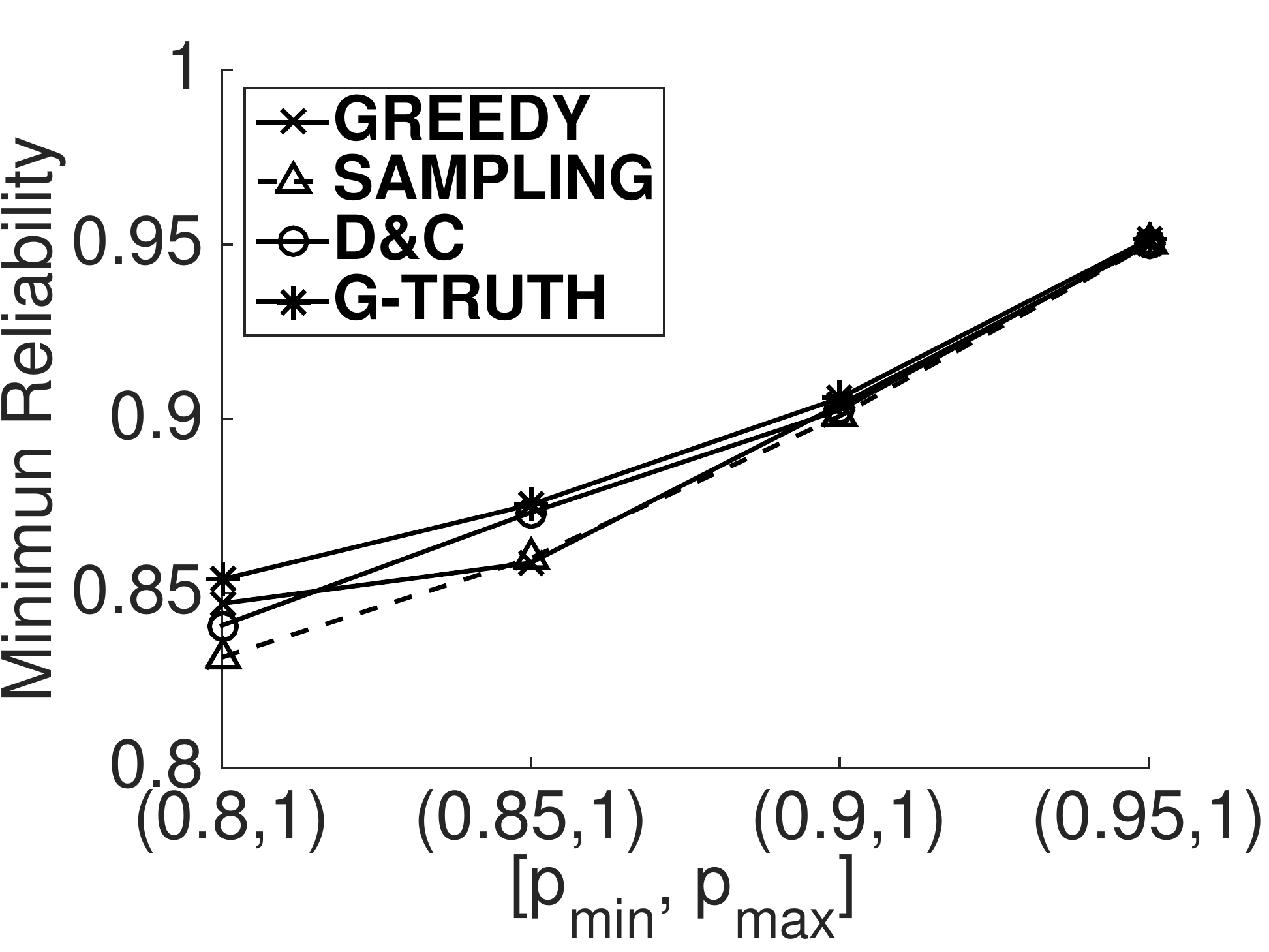}} 
		\label{subfig:ConfidenceReliability}}
	\subfigure[][{\scriptsize Summation of Diversity}]{
		\scalebox{0.18}[0.18]{\includegraphics{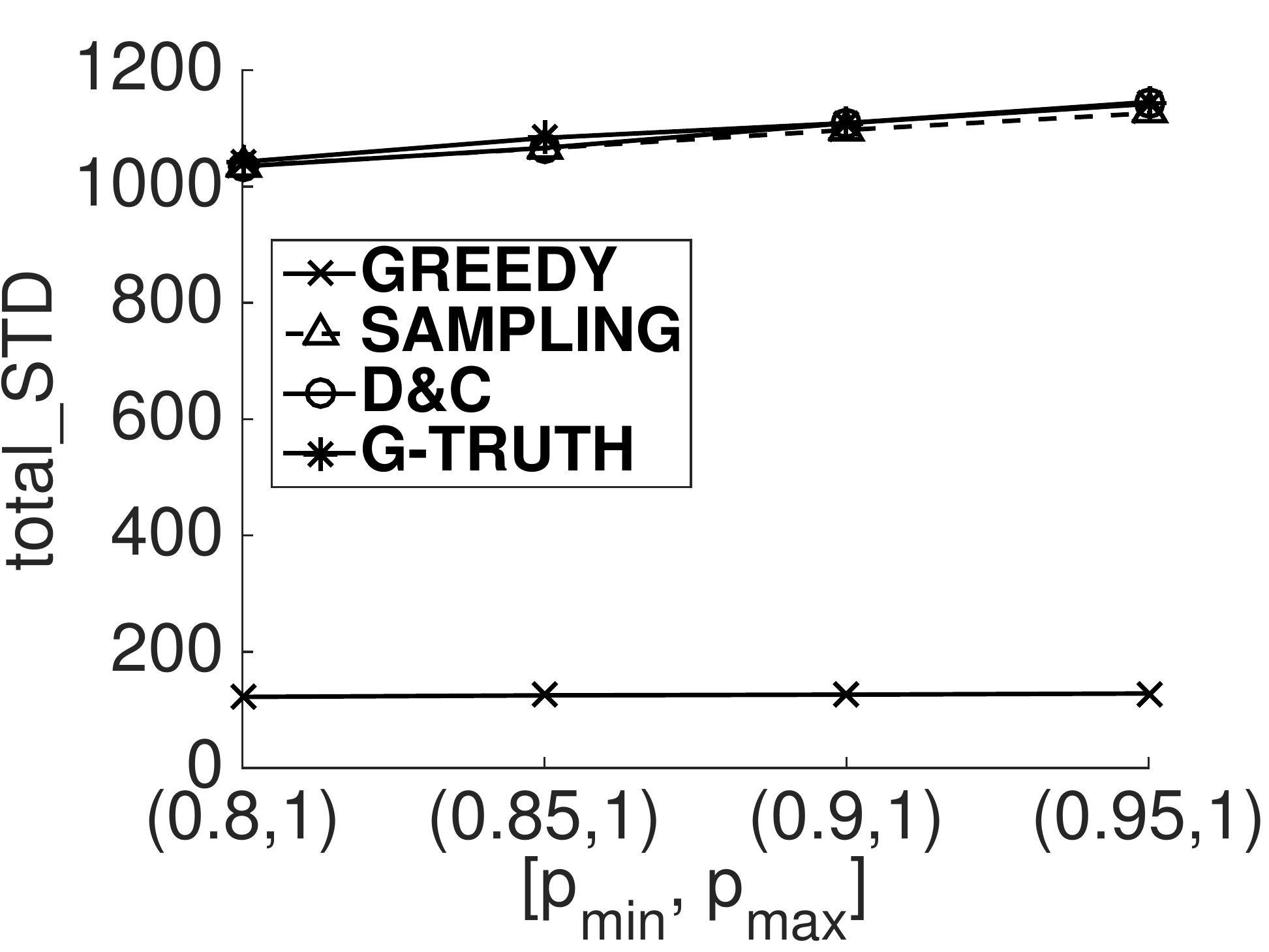}} 
		\label{subfig:ConfidenceDiversity}}
	\vspace{-3ex}
	\caption{\small Effect of Workers' Reliability $[p_{min}, p_{max}]$}
	\vspace{-5ex}
	\label{fig:ConfidenceEffect}
\end{figure}

\vspace{0.5ex}\noindent {\bf RDB-SC Approaches and Measures.}  Greedy (GREEDY) assigns
each worker to a ``best'' task according to the current situation when processing
the worker, which is just a local optimal approach. Sampling (SAMPLING) randomly assigns all the
available workers several times and picks the best test result, using our equations in
Section \ref{subsec:sample_size} to calculate the sampling times to bound the accuracy.
Divide-and-Conquer (D\&C) divides the original problem into subproblems, solves each
one and merges their results. To accelerate D\&C,  we use SAMPLING to solve subproblems
of D\&C, which will sacrifice a little accuracy. Nonetheless, this trade-off is effective, we will
show SAMPLING has a good performance when the problem space is small in our experiments.
To evaluate our 3 proposed
approaches, we will compare them with the ground truth. However,
since RDB-SC problem is NP-hard (as discussed in Section
\ref{subsec:reduction2}), it is infeasible to calculate the real
optimal result as the ground truth. Thus, we use Divide-and-Conquer
approach with the embedded sampling approach (discussed in Section
\ref{sec:sampling}) to calculate sub-optimal result by setting the
sampling size 10 times larger than D\&C (denoted as G-TRUTH).

Table \ref{table2} depicts the experimental settings, where the
default values of parameters are in bold font. In each set of
experiments, we vary one parameter, while setting others to their
default values. We report $\min_{i=1}^m rel(t_i, W_i)$, the minimum
reliability, and $total\_STD$, the summation of the expected
spatial/temporal diversities. All our experiments were run on an
Intel Xeon X5675 CPU @3.07 GHZ with 32 GB RAM.

\subsection{Experiments on Real Data}
\label{subsec:expReal}

In this subsection, we show the effects of workers' confidence
$p$, tasks' valid periods $[s,e]$ and balancing parameter $\beta$ on the real data.
We use the locations of the POIs as the locations of tasks.
To initialize a worker based on trajectory records of a taxi, we use the start
point of the trajectory as the worker's location, use the average speed of the taxi as the
worker's speed. For the moving angle's range of the worker, we draw a sector at the
start point and contain all the other points of the trajectory in
the sector, then we use the sector as the moving angle's range of the worker.
We uniformly sample 10,000 POIs from the 74,013 POIs in
the area of Beijing and the sampled POI date set follows the original data set's distribution.
In other words, we have 10,000 tasks and 9,748 workers in the experiments on real data.

\vspace{0.5ex}\noindent {\bf Effect of the Range of Tasks'
	Expiration Times $rt$.} Figure \ref{fig:tconstrain} shows the effect
of varying the range of tasks' expiration times $rt$. When this
range increases from $[0.25, 0.5]$ to $[2,3]$, the minimum
reliability is very stable, and the diversities $total\_STD$ of all
the approaches gradually increase. Intuitively, longer expiration
time for a task $t_i$ means more workers can arrive at location
$l_i$ to accomplish $t_i$. From the perspective of workers, each
worker can have more choices in his/her reachable area. Thus, each
worker can choose a better target task with higher diversity.
Similar to previous results, SAMPLING and D\&C approaches can
achieve higher diversities than GREEDY, and slightly lower
diversities compared with G-TRUTH. The requester can use this parameter 
to constrain the range of the opening time of a task. For example, if one wants to know 
the situation of a car park in a morning, he/she can set the time range as 
the period of the morning.

\vspace{0.5ex}\noindent {\bf Effect of the Range of Workers'
	Reliabilities $[p_{min}, p_{max}]$.} Figure \ref{fig:ConfidenceEffect}
reports the effect of the range, $[p_{min}, p_{max}]$, of workers'
reliabilities on the reliability/diversity of our proposed RDB-SC
approaches. For the minimum reliability, the reliabilities of workers
may greatly affect the reliability of spatial tasks (as given by
Eq.~(\ref{eq:eq1})). Thus, as shown in Figure
\ref{subfig:ConfidenceReliability}, for the range with higher
reliabilities, the minimum reliability of tasks also becomes larger.
For diversity $total\_STD$, according to Lemma \ref{lemma:lem1},
when the workers assigned to tasks $t_i$ have higher reliabilities,
the expected spatial/temporal diversity will be higher. Therefore,
we can see the slight increases of $total\_STD$ in Figure
\ref{subfig:ConfidenceDiversity}. Similar to previous results,
SAMPLING and D\&C show reliability and diversity similar to G-TRUTH,
and have higher diversities than GREEDY.

We test the effect of the requester-specified weight range. Due to 
space limitations, please refer to the experimental
results with different $\beta$ values in Appendix J of the technical report \cite{aixivReport}.
Requesters can use this parameter to reflect their preference. 
The valid value of $\beta$ is from 0 to 1. The 
bigger $\beta$ is, the more spatial diverse the answers are. The 
smaller $\beta$ is, the more temporal diverse the answers are.  
If one has no preference, he/she can simply set $\beta$ to 0.5.

\subsection{Experiments on Synthetic Data}
\label{subsec:expSynthetic}

\begin{figure}[t]\centering \vspace{-1ex}
	\centering
	\subfigure[][{\scriptsize Minimum Reliability}]{
		\scalebox{0.18}[0.18]{\includegraphics{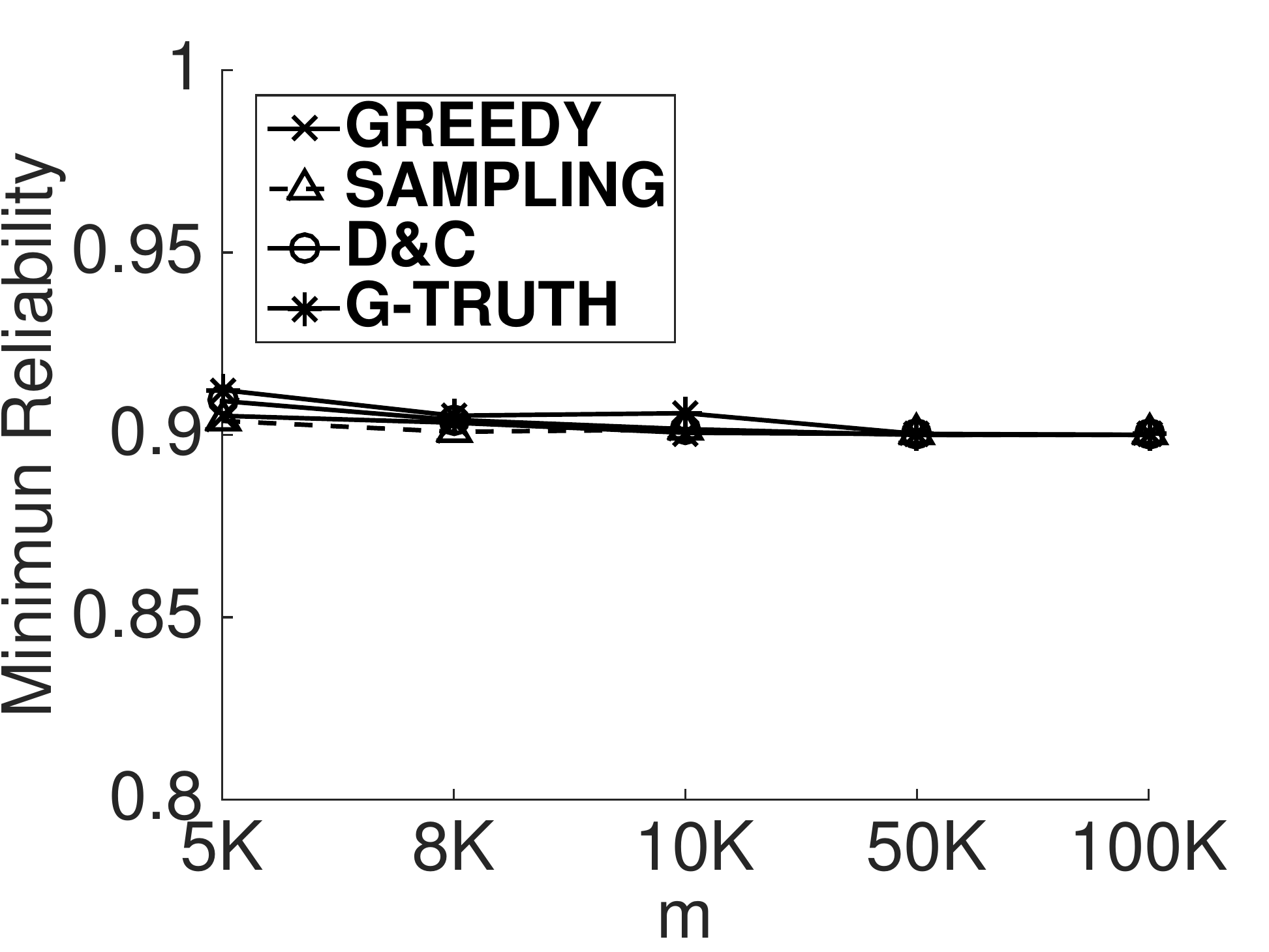}}\vspace{-2ex}
		\label{subfig:taskNReliability}}
	\subfigure[][{\scriptsize Summation of Diversity}]{
		\scalebox{0.18}[0.18]{\includegraphics{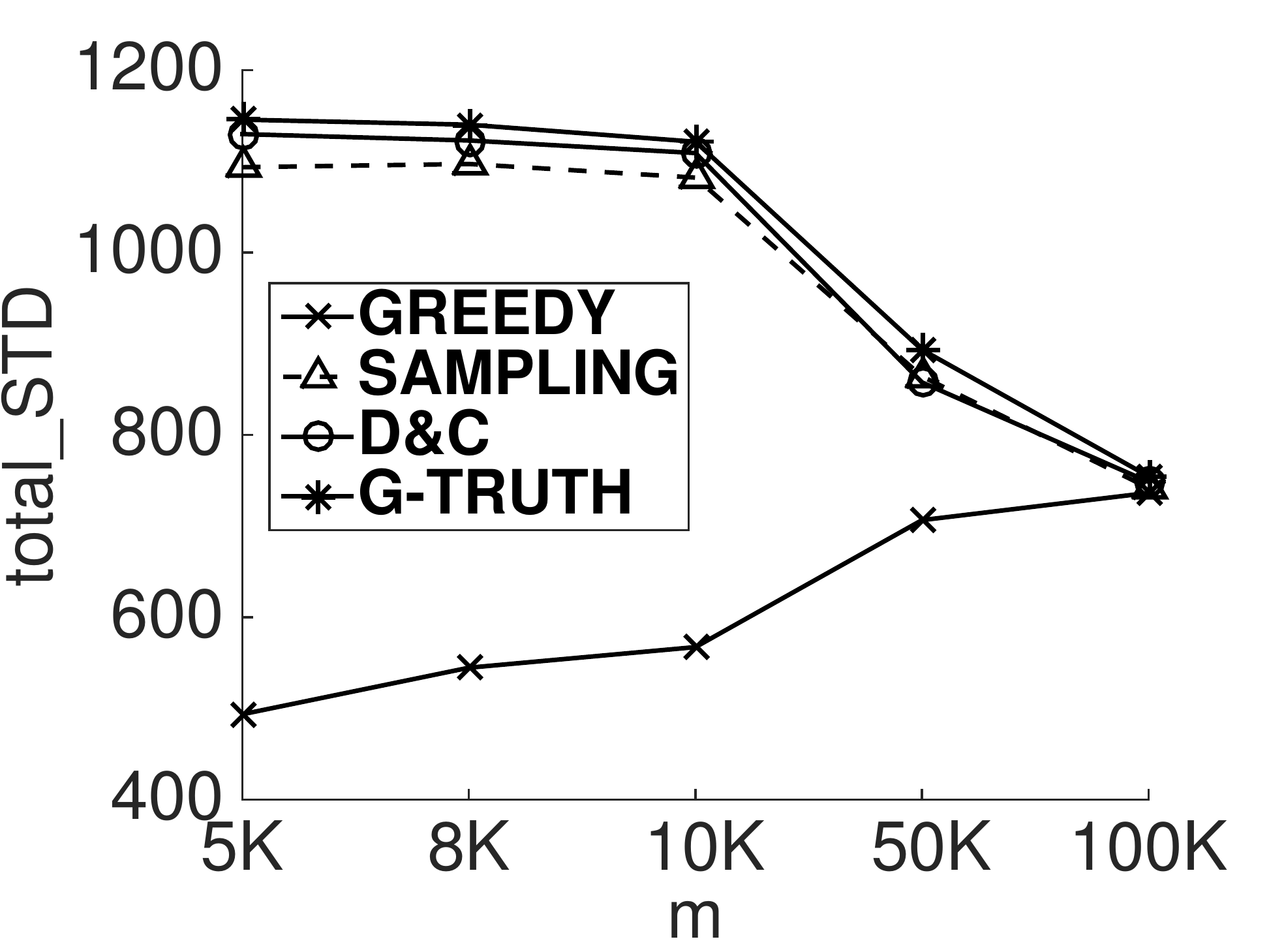}}\vspace{-2ex}
		\label{subfig:taskNDiversity}}
	\vspace{-3ex}
	\caption{\small Effect of the Number of Tasks $m$ (UNIFORM)} \vspace{-3ex}
	\label{fig:taskN}
\end{figure}

\begin{figure}[t]\centering 
	\subfigure[][{\scriptsize Minimum Reliability}]{
		\scalebox{0.18}[0.18]{\includegraphics{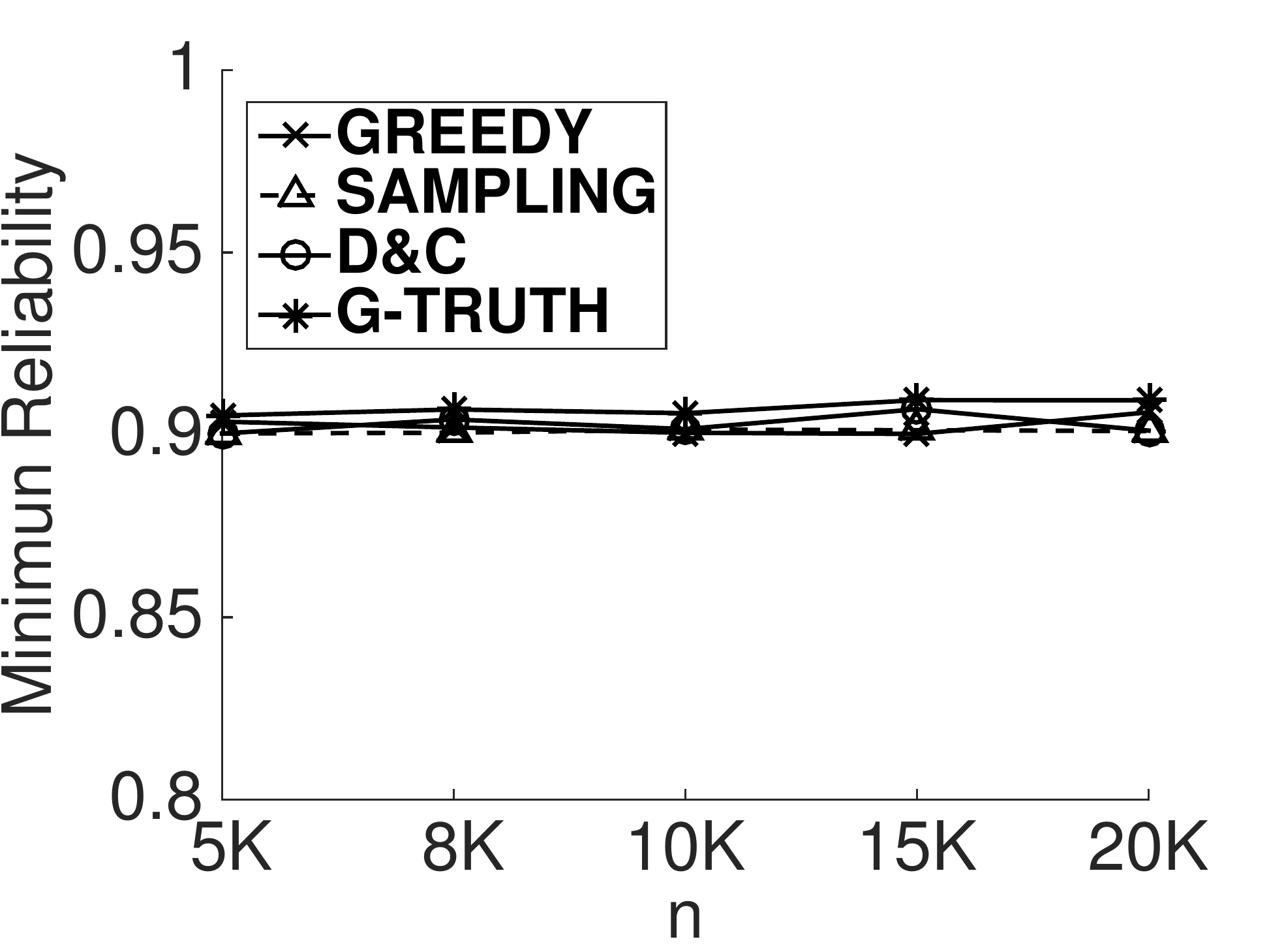}}\vspace{-3ex}
		\label{subfig:workerNReliability}}
	\subfigure[][{\scriptsize Summation of Diversity}]{
		\scalebox{0.18}[0.18]{\includegraphics{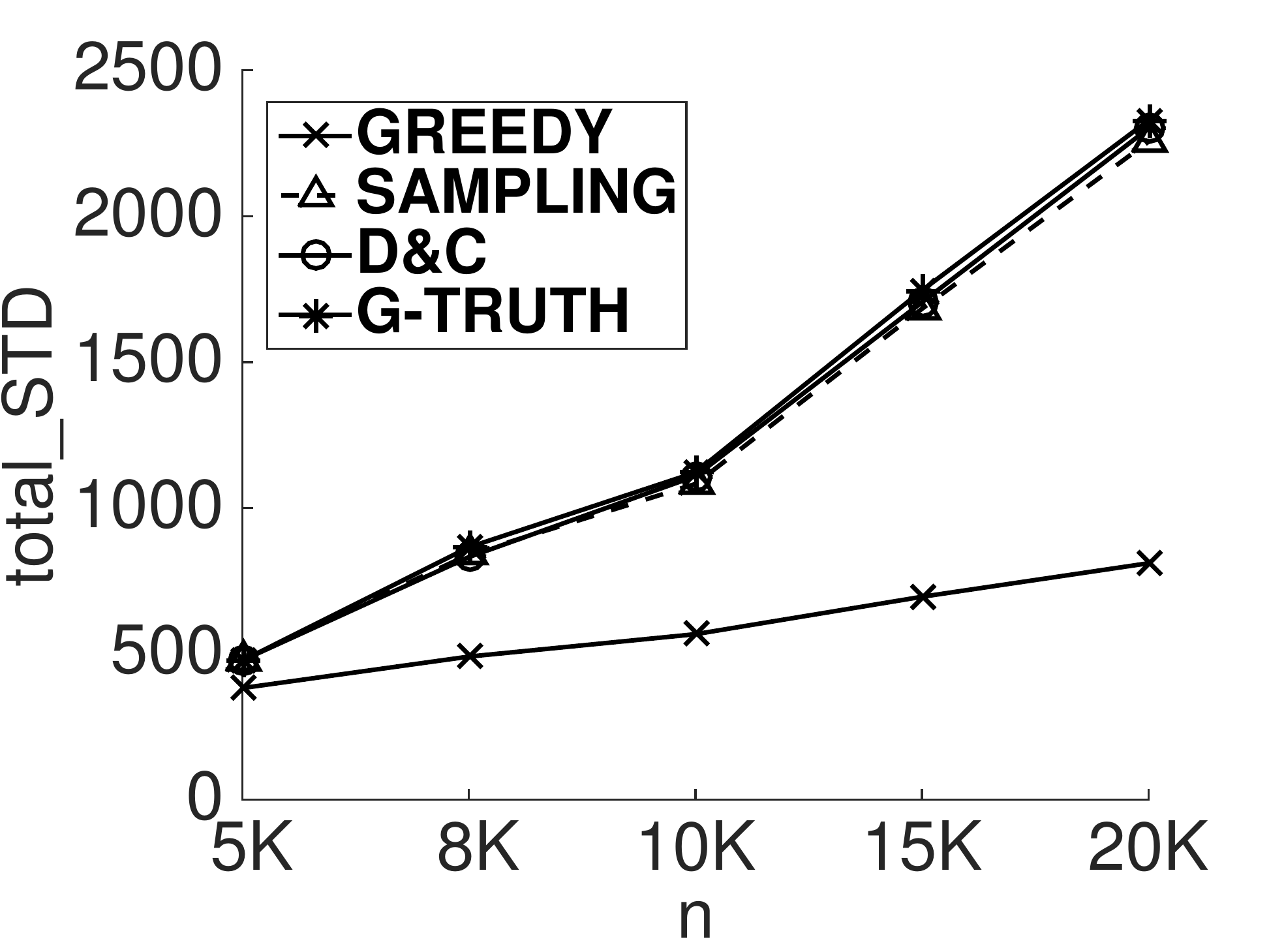}}\vspace{-3ex}
		\label{subfig:workerNDiversity}}
	\vspace{-3ex}
	\caption{\small Effect of the Number of Workers $n$ (UNIFORM)}
	\vspace{-3ex}
	\label{fig:workerN}
\end{figure}

In this subsection, we test the effectiveness and robustness of our
proposed 3 RDB-SC approaches, GREEDY, SAMPLING, and D\&C, compared
with G-TRUTH, by varying different parameters.

As we already see the effects of $p$ and $[s,e]$, we will focus on the rest
four parameters in Table \ref{table2} in this subsection. We first report the
experimental results on Uniform task/worker distributions. Please
refer to more (similar) experimental results over data sets with
Uniform/Skew distributions in Appendix J of the technical report \cite{aixivReport}.

\vspace{0.5ex}\noindent {\bf Effect of the Number of Tasks $m$.}
Figure \ref{fig:taskN} shows the effect of the number, $m$, of
spatial tasks on the reliability and diversity of RDB-SC answers,
where we vary $m$ from 5K to 100K. In Figure \ref{subfig:taskNReliability},
all the 3 approximation approaches can achieve good minimum
reliability, which are close to G-TRUTH, and remain high (i.e., with
reliability around 0.9). The reliability of D\&C is higher than that
of the other two approaches. When the number, $m$, of tasks
increases, the minimum reliability slightly decreases. This is
because given a fixed (default) number of workers, our assignment
approaches trade a bit the reliability for more accomplished tasks.

For the diversity, our 3 approaches have different trends for larger
$m$. In Figure \ref{subfig:taskNDiversity}, for large $m$,
the total diversity, $total\_STD$, of GREEDY becomes larger, while
that of the other two approaches decreases. For GREEDY, more tasks
means more possible task targets for each worker on average. This
can make a particular worker to choose one possible task such that
high diversity is obtained. In contrast, for SAMPLING, when the
number of tasks increases, the size of possible combinations
increases dramatically. Thus, under the same accuracy setting, the
result will be relatively worse (as discussed in Section
\ref{subsec:sample_size}). In D\&C, we divide the original problem
into several subproblems of smaller scale. Since the number of
possible combinations in subproblems decreases quickly, we can
achieve good solutions to subproblems. After merging answers to
subproblems, D\&C can obtain a slightly higher $total\_STD$ than
SAMPLING (about 3\% improvement). We can see that, both SAMPLING and
D\&C have  $total\_STD$ very close to G-TRUTH, which
indicates the effectiveness of SAMPLING and D\&C.

In Figure \ref{subfig:taskNDiversity}, when $m$ is small, SAMPLING
and D\&C can achieve much higher $total\_STD$ than that of GREEDY.
The reason is that GREEDY has a bad start-up performance. That is,
when most reachable tasks of a worker are not assigned with workers
(namely, empty tasks), he/she is prone to join those tasks that
already have workers. In particular, when a worker joins an empty
task, he/she can only improve the temporal diversity (TD) of that
task, and has no contribution to the spatial diversity (SD),
according to the definitions in Section \ref{subsec:RDB-SC}. On the
other hand, if a worker joins a task that has already been assigned
with some workers, then his/her join can improve both SD and TD,
which leads to higher STD. Since GREEDY always chooses
task-and-worker pairs that increase the diversity most, GREEDY will
always exploit those non-empty tasks, which may potentially miss the
good assignment with high diversity $total\_STD$. Thus, $total\_STD$
of GREEDY is low when $m$ is 5K - 10K.

\begin{figure}[t]\centering \vspace{-1ex}
	\subfigure[][{\scriptsize Minimum Reliability}]{
		\scalebox{0.18}[0.18]{\includegraphics{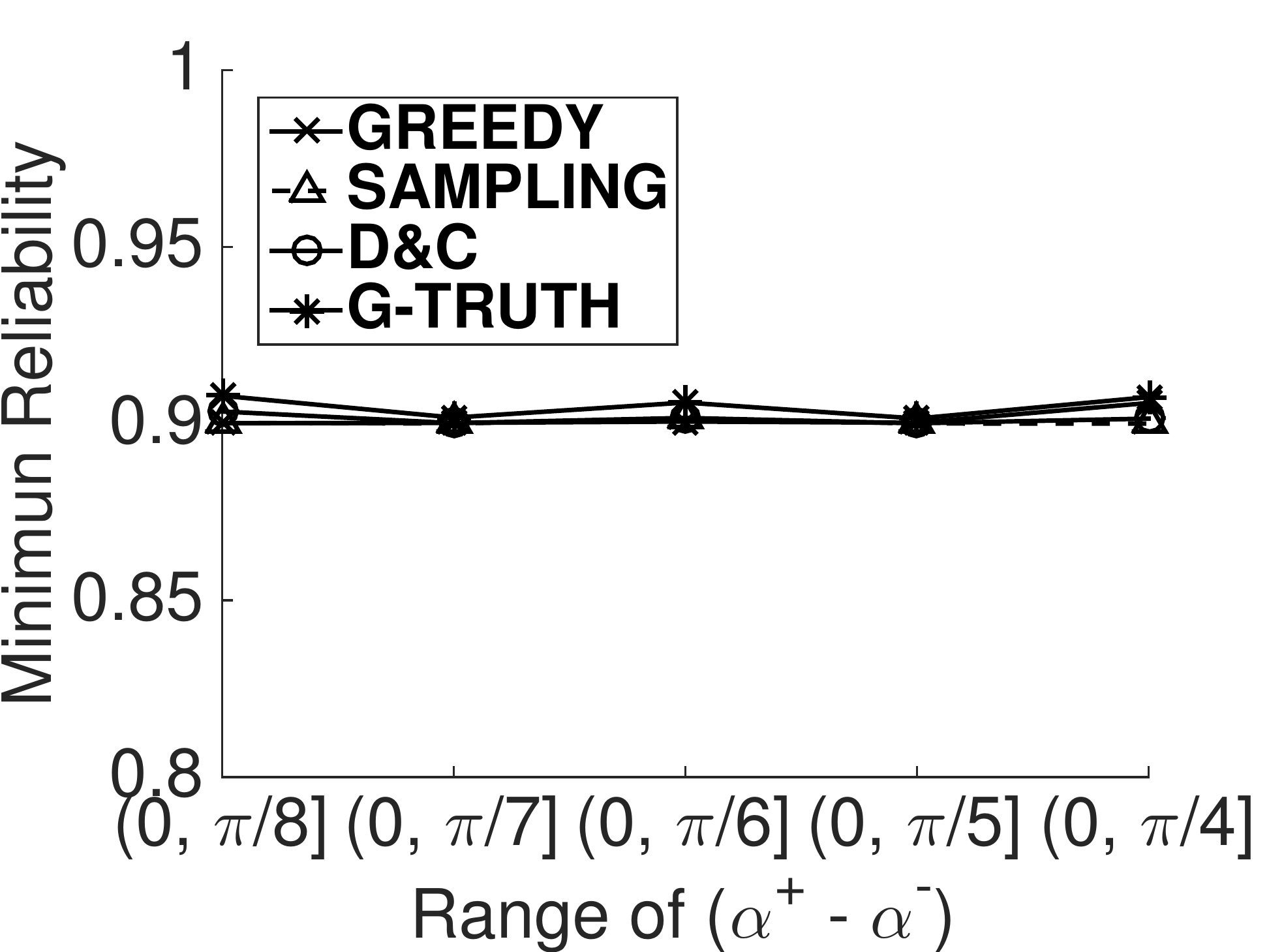}}\vspace{-2ex}
		\label{subfig:AngleReliability}}
	\subfigure[][{\scriptsize Summation of Diversity}]{
		\scalebox{0.18}[0.18]{\includegraphics{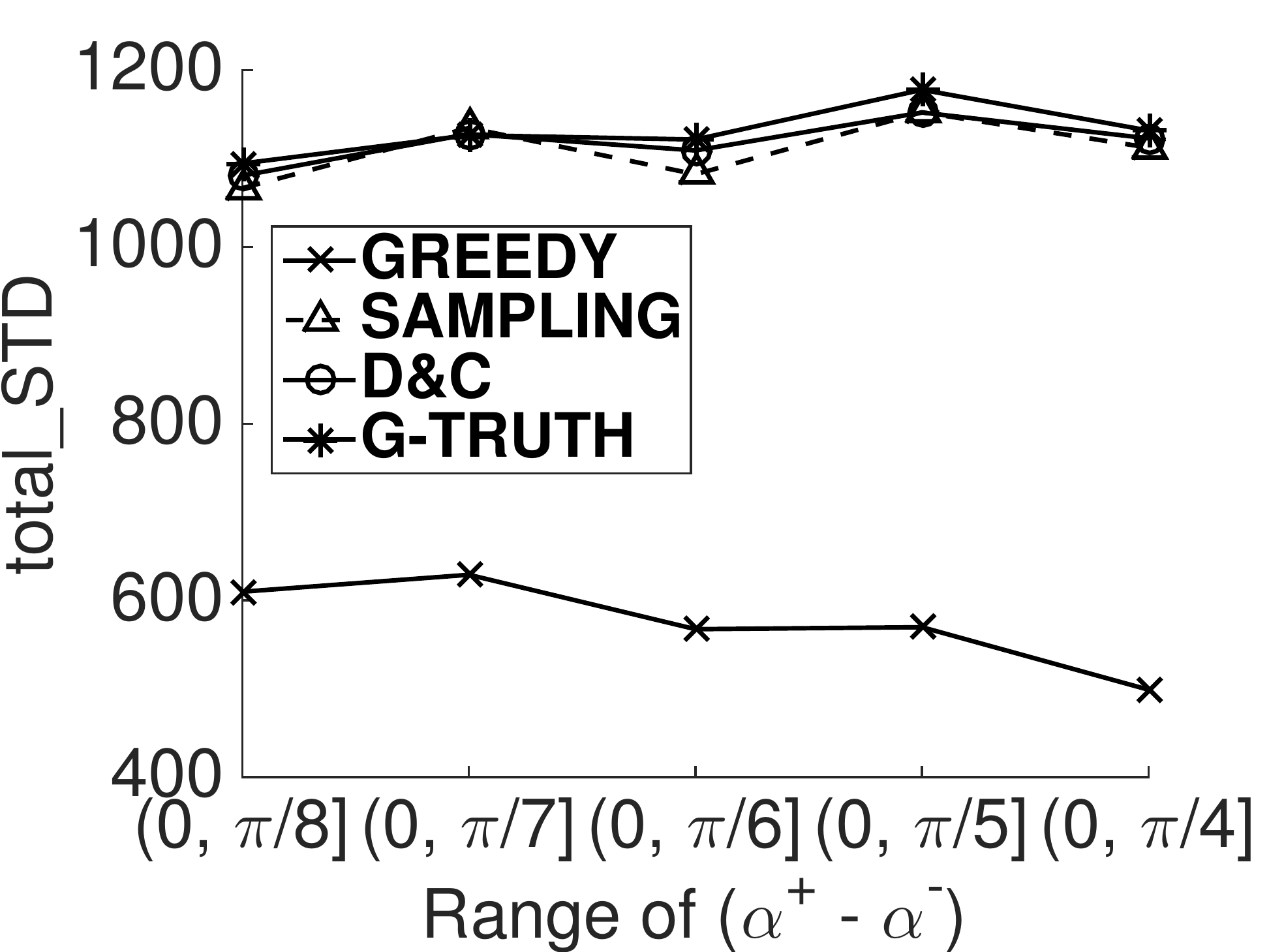}}\vspace{-2ex}
		\label{subfig:AngleDiversity}}
	\vspace{-3ex}
	\caption{\small Effect of the Range of Angles $(\alpha_j^+ - \alpha_j^-)$(UNIFORM)} \vspace{-2ex}
	\label{fig:angle}
\end{figure}

\vspace{0.5ex}\noindent {\bf Effect of the Number of Workers $n$.}
Figure \ref{fig:workerN} illustrates the experimental results on
different numbers, $n$, of workers from $5K$ to $20K$. In Figure
\ref{subfig:workerNReliability}, the minimum reliability is not very
sensitive to $n$. This is because, although we have more workers,
there always exist tasks that are assigned with just one worker.
According to Eq.~(\ref{eq:eq1}), the minimum reliability among tasks
is very close to the lower bound of workers' confidences. Thus, the
reliability slightly changes with respect to $n$.

On the other hand, the diversities, $total\_STD$, of all the four
approaches increase for larger $n$ value. In particular, as depicted
in Figure \ref{subfig:workerNDiversity}, the diversity of SAMPLING
increases more rapidly than that of GREEDY. Recall from Lemma
\ref{lemma:lem4} that, more workers means a higher diversity for
each task. When the number of workers increases, the average number
of workers of each task also increases, which leads to a higher
$total\_STD$. Similar to previous results, SAMPLING and D\&C have
diversities very close to G-TRUTH, which confirms the effectiveness
of our approaches.

\vspace{0.5ex}\noindent {\bf Effect of the Range of Moving Angles
	$(\alpha_j^+ - \alpha_j^-)$.} Figure \ref{fig:angle} varies the
range, $(\alpha_j^+ - \alpha_j^-)$, of moving angles for workers
$w_j$ from $(0, \pi/8]$ to $(0, \pi/4]$. From figures, we can see
that the minimum reliability is not very sensitive to this angle
range. With different angle ranges, the reliability of our proposed
approaches remains high (i.e., above 0.9). Moreover, both SAMPLING
and D\&C approaches can achieve much higher diversities than GREEDY,
and they have diversities similar to G-TRUTH, which indicates
good effectiveness against different angle ranges of moving
directions. On the other hand, $total\_STD$ of GREEDY drops when
angle becomes larger. The reason is similar to the cause of GREEDY's
bad start-up, which is discussed when we show the effect of the number of tasks.
Larger angle range means more reachable tasks, then workers are more
likely to find a task that has been assigned with workers and join 
that task, which leads to low diversity. On a real platform, the workers 
may set this parameter based on their personal interests. For example, 
if a worker would like to deviate more from his/her moving direction, 
he/she can set the range of his/her moving angle wider.

We test the effect of the range of workers' velocities. Due to 
space limitations, please refer to the experimental
results with different ranges, $[v^-, v^+]$, in Appendix J of the technical report \cite{aixivReport}.

\begin{figure}[t]\centering 
	\subfigure[][{\scriptsize CPU Time (vs. $m$)}]{
		\scalebox{0.18}[0.18]{\includegraphics{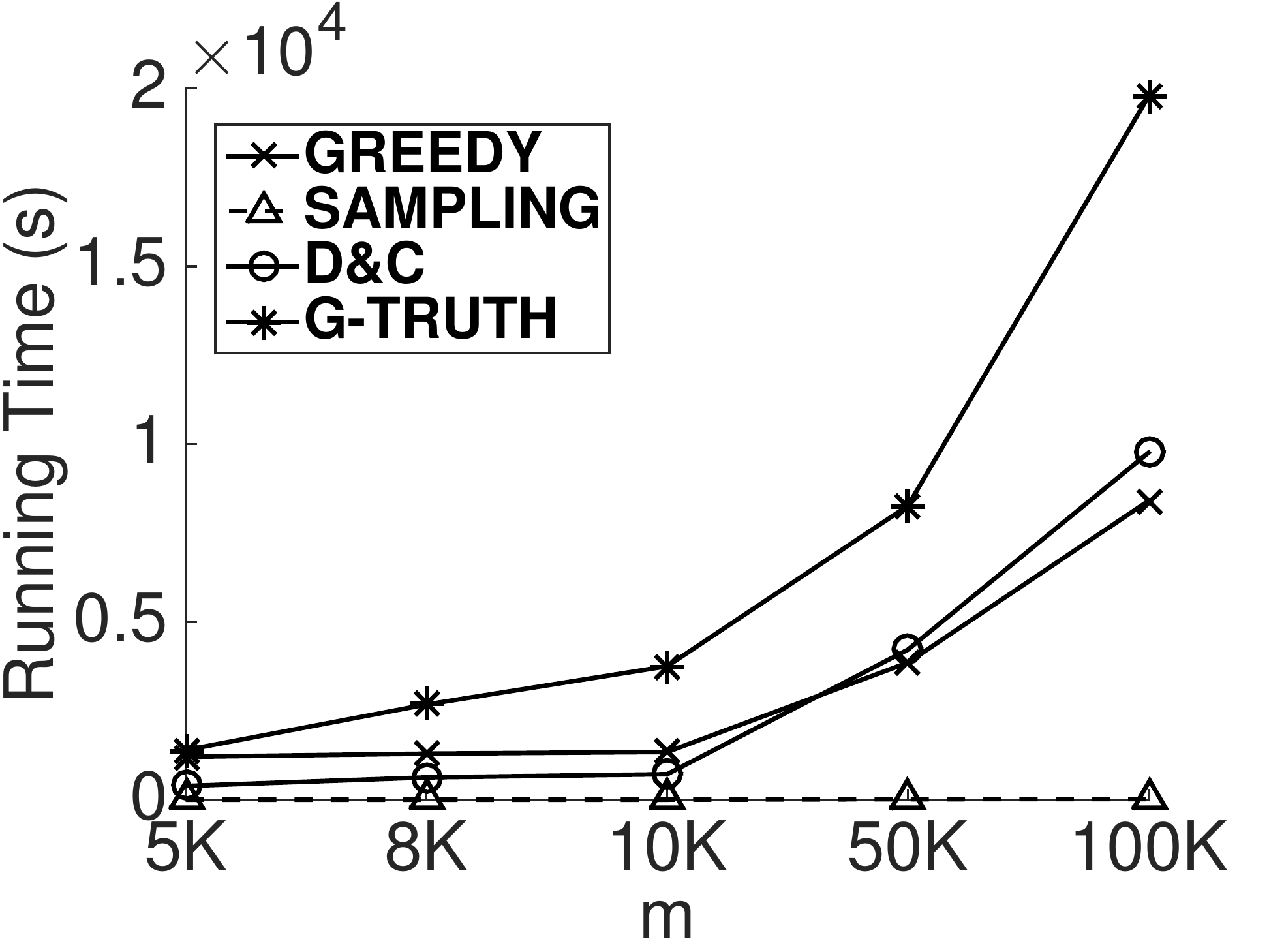}}\vspace{-3ex}
		\label{subfig:runningTimeTasks}}
	\subfigure[][{\scriptsize CPU Time (vs. $n$)}]{ 
		\scalebox{0.18}[0.18]{\includegraphics{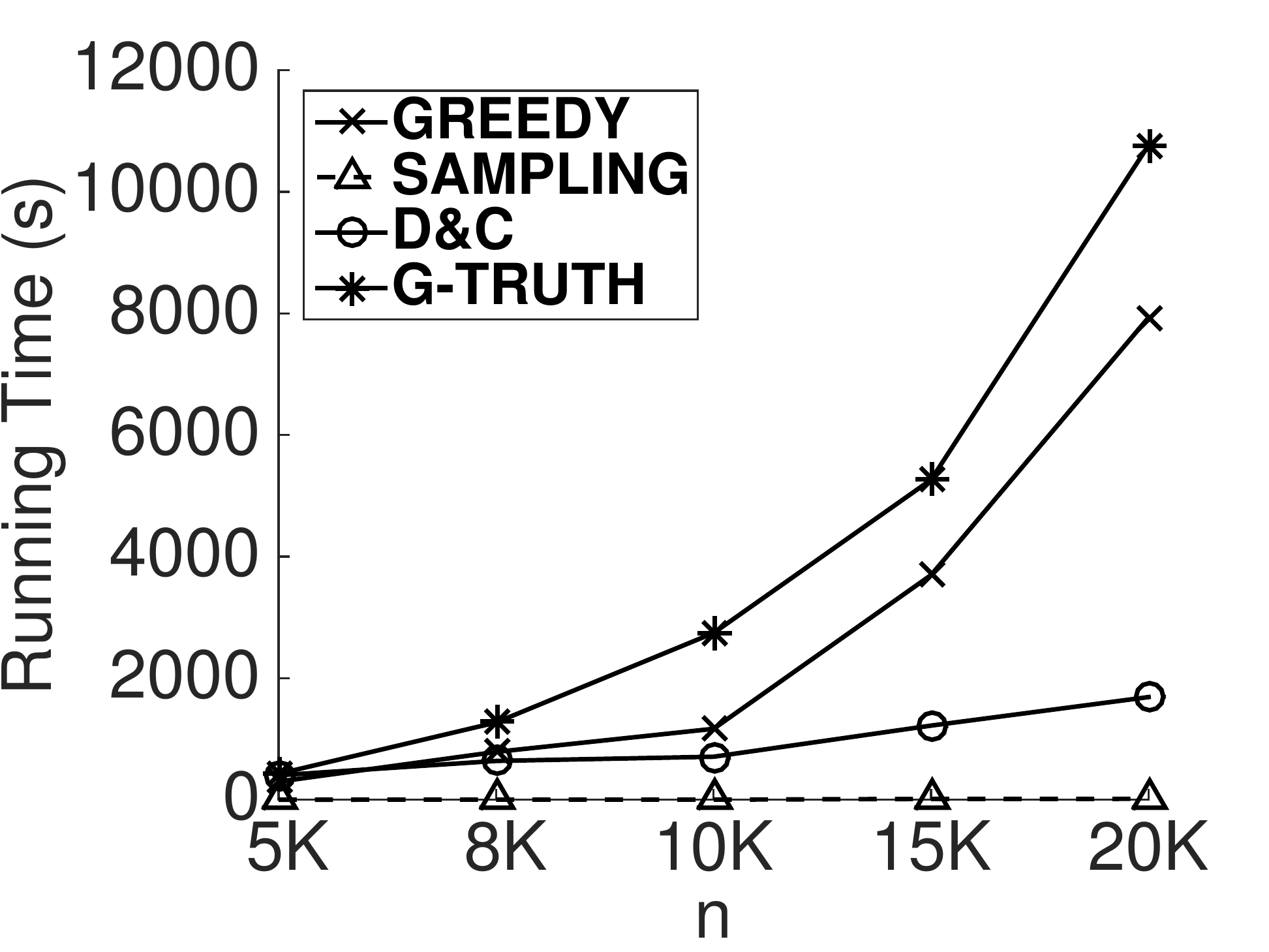}}\vspace{-3ex}
		\label{subfig:runningTimeWorkers}}\vspace{-3ex}
	\caption{\small Comparisons of the CPU Time with RDB-SC Approaches}\vspace{-2ex}
	\label{fig:runningTime}
\end{figure}

\begin{figure}[t]\centering
	\subfigure[][{\scriptsize Index Construction Time}]{
		\scalebox{0.18}[0.18]{\includegraphics{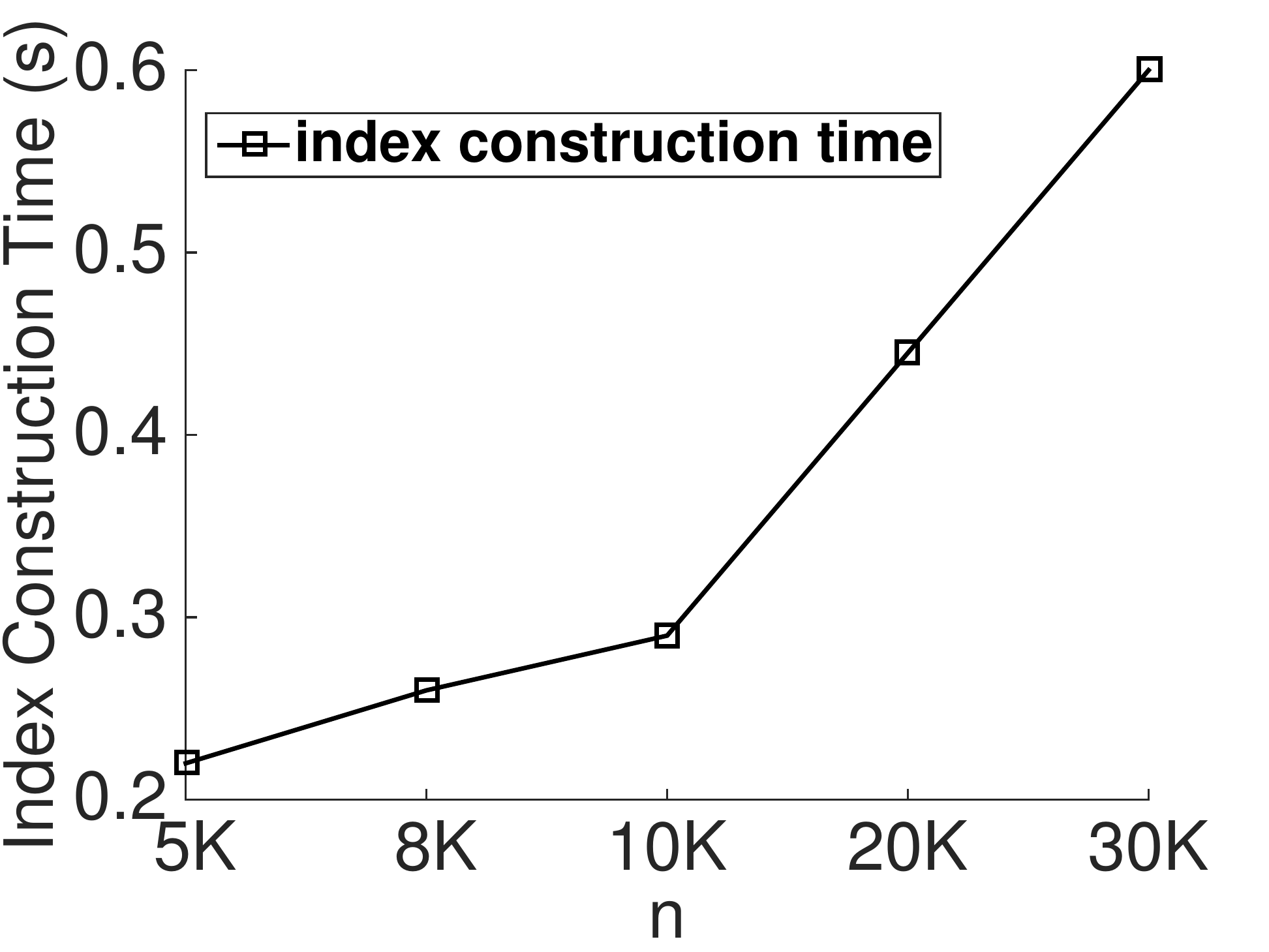}}\vspace{-2ex}
		\label{subfig:buildIndex}}
	\subfigure[][{\scriptsize W-T Pairs Retrieval Time}]{
		\scalebox{0.18}[0.18]{\includegraphics{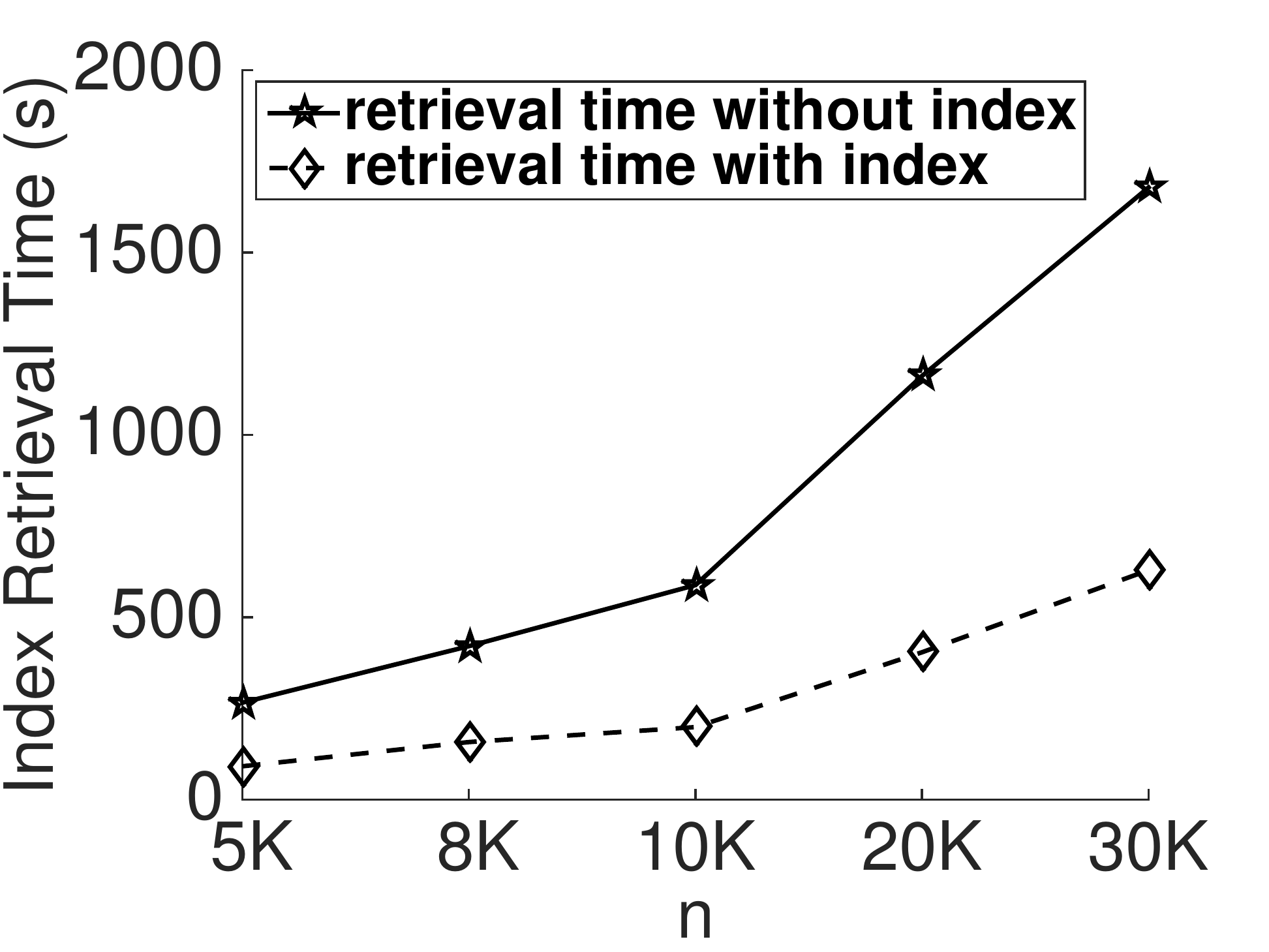}}\vspace{-2ex}
		\label{subfig:IndexEfficiency}}\vspace{-3ex}
	\caption{\small Efficiency of the RDB-SC-Grid Index}\vspace{-1ex}
	\label{fig:index}
\end{figure}

\vspace{0.5ex}\noindent {\bf Running
	Time Comparisons and Efficiency of RDB-SC-Grid.} We  report the running time of our approaches by varying $m$ and
$n$ in Figures \ref{subfig:runningTimeTasks} and
\ref{subfig:runningTimeWorkers}, respectively. We can see that, when
$m$ increases, the running times of all approaches, except for
SAMPLING, grow quickly. For GREEDY, when $m$ increases, each worker
has more reachable tasks, and thus the total running time grows
(since more tasks should be checked). For large $m$, D\&C needs to
run more rounds for the divide-and-conquer process, which leads to
higher running time. On the other hand, when $n$ increases, only
GREEDY's running time grows dramatically. This is because GREEDY
needs to run more rounds to assign workers. Under both situations,
SAMPLING only takes several seconds (due to small sample size). In
contrast, D\&C has higher CPU cost than SAMPLING, however, higher
reliability and diversity (as confirmed by Figures \ref{fig:taskN}
and \ref{fig:workerN}). This indicates that D\&C can trade the
efficiency for effectiveness (i.e., reliability and diversity).

Figure \ref{fig:index} presents the 
\textit{index construction time} and \textit{index retrieval time}
(i.e., the time cost for retrieving task-and-worker pairs, denoted
as W-T pairs, from the index) over UNIFORM data, where $m=10K$ and
$n$ varies from $5K$ to $30K$. As shown in Figure
\ref{subfig:buildIndex}, the construction time of the RDB-SC-Grid
index is small (i.e., less than 0.7 sec). In Figure
\ref{subfig:IndexEfficiency}, the RDB-SC-Grid index can dramatically
reduce the time of finding W-T pairs (up to 67 \%), compared with
that of retrieving W-T pairs without index.

\subsection{Experiments on Real RDB-SC Platform}
\label{subsec:expRealPlatform}

Figure \ref{fig:realPlatform} shows the RDB-SC performance of
GREEDY, SAMPLING, D\&C, and G-TRUTH over the real RDB-SC system,
where the length of the time interval, $t_{interval}$, between every
two consecutive incremental updates varies from 1 minute to 4
minutes. In Figure \ref{subfig:realPlatformR}, when $t_{interval}$
becomes larger, the minimum reliability remains high, except for
GREEDY. This is because when $t_{interval}$ is larger than 1 minute, 
GREEDY assigns just one worker to some tasks,
and it leads to sensitive change of the minimum reliability which is
much more than that of other algorithms. Each user is assigned with 
fewer tasks in the entire testing period, when $t_{interval}$ increases. 
At the same time, GREEDY is prone to assign workers to those tasks 
already have workers or are answered, which has been discussed when we 
show the effect of the number of tasks in Section \ref{subsec:expSynthetic}. 
When each user is assigned with fewer tasks in the entire testing period, it 
is more likely to assign only one worker to some task whose reliability will 
equal to the reliability of that worker. Thus, the minimum reliability of GREEDY varies much.

In Figure \ref{subfig:realPlatformD}, we can see that for all the
approaches, when $t_{interval}$ increases, the total
spatial/temporal diversity $total\_STD$ decreases. This is
reasonable, since each user is assigned with fewer tasks in the
entire testing period. Meanwhile, SAMPLING and D\&C are much better
than GREEDY from the perspective of the diversity, and their
diversities are close to that of G-TRUTH, which indicates the
effectiveness of our RDB-SC approaches.

\begin{figure}[t]\centering 
	\subfigure[][{\scriptsize Minimum Reliability}]{
		\scalebox{0.18}[0.18]{\includegraphics{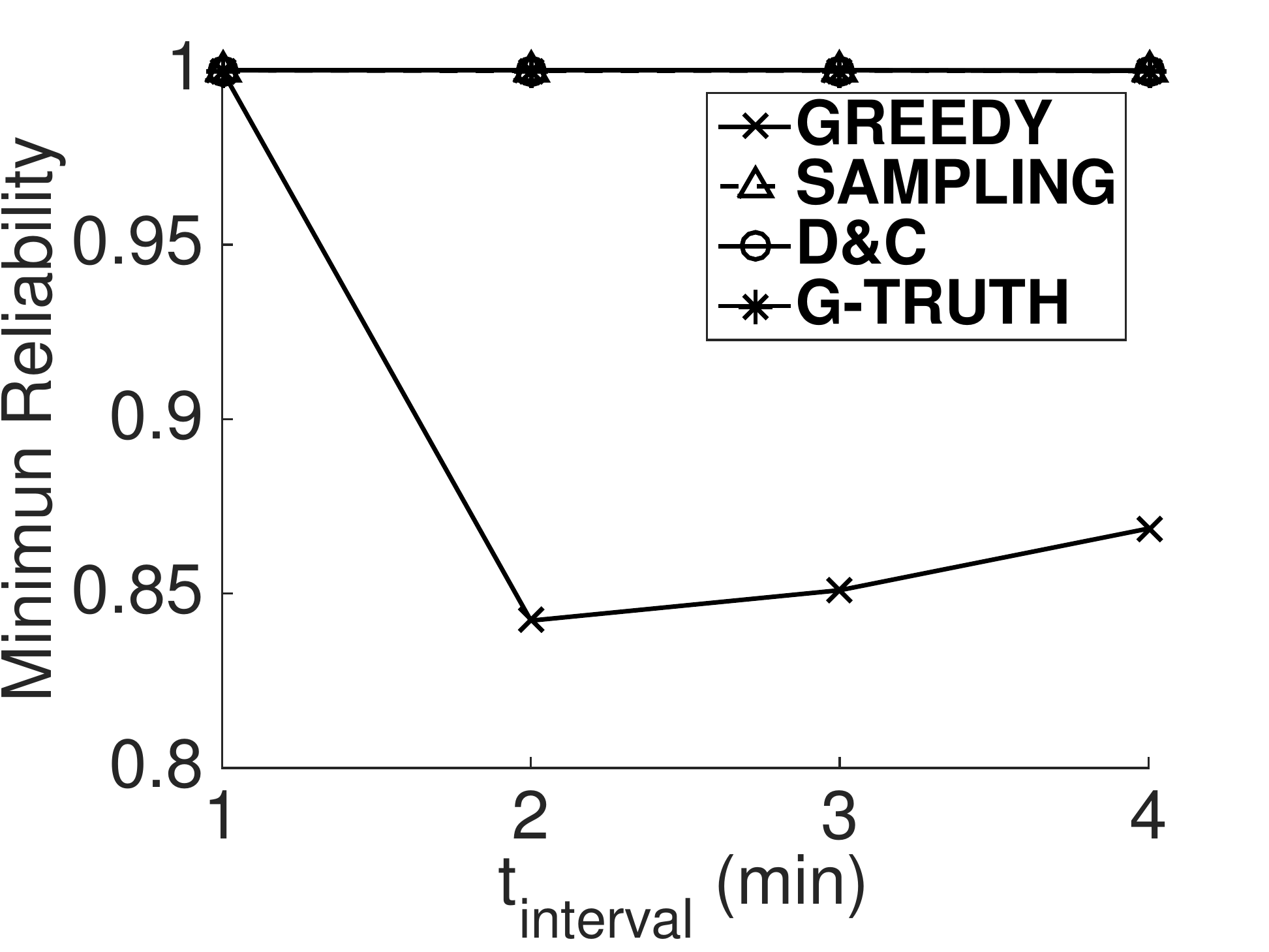}}\vspace{-2ex}
		\label{subfig:realPlatformR}}
	\subfigure[][{\scriptsize Summation of Diversity}]{
		\scalebox{0.18}[0.18]{\includegraphics{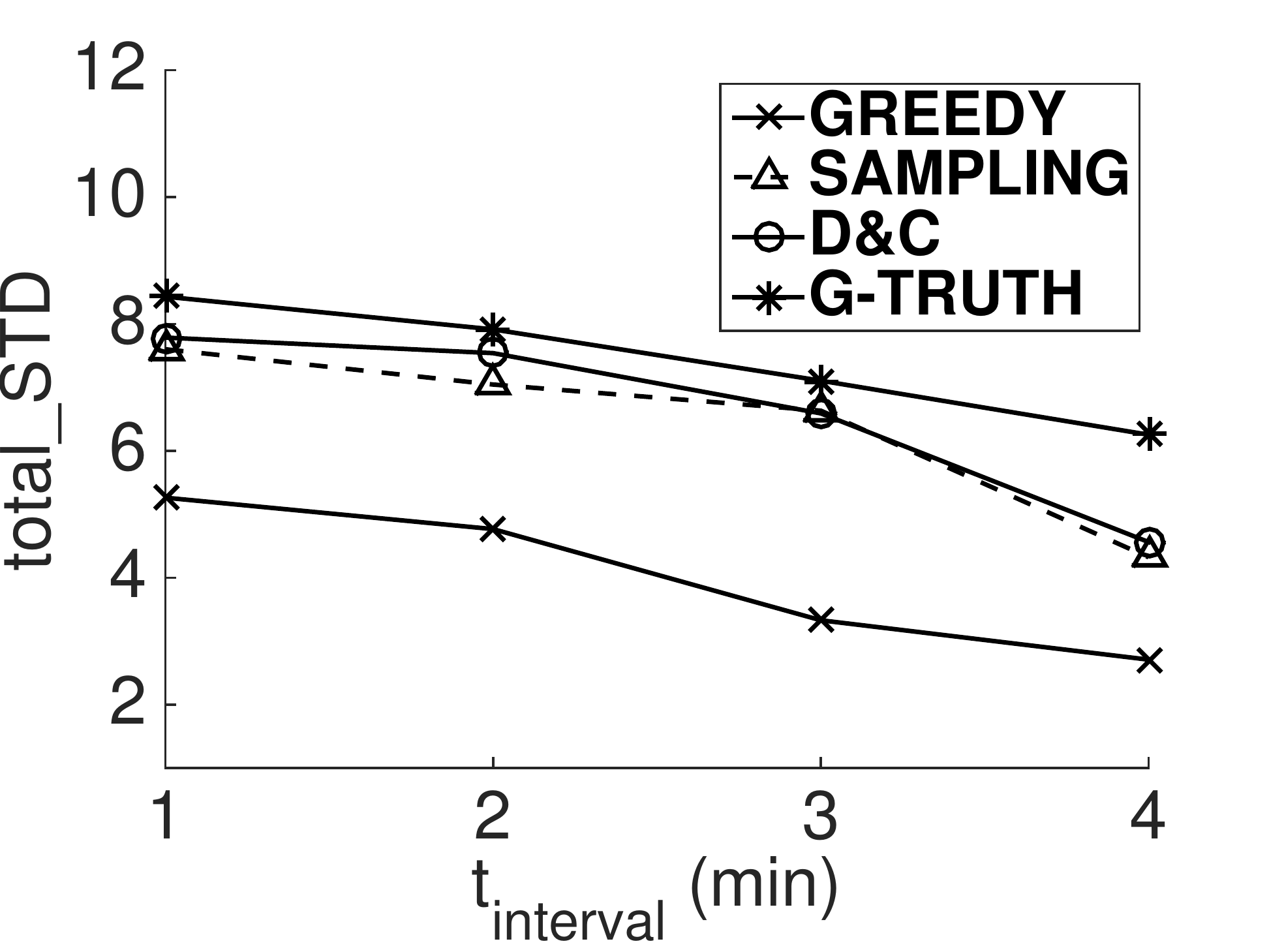}}\vspace{-2ex}
		\label{subfig:realPlatformD}}
	\vspace{-3ex}
	\caption{\small Effect of the updating time interval $t_{interval}$} \vspace{-2ex}
	\label{fig:realPlatform}
\end{figure}

To show the potential value of our model, we present a 3D reconstruction showcase on gMission's homepage \cite{gmissionhkust}.
We use VisualSFM \footnotemark to build 3D models from unordered photos. 
As shown in Figure \ref{fig:3dsparse}, the triangles are the cameras, and their sizes 
represent the resolutions of those photos. We can see, our approaches can assign 
workers to the task with a good spatial diversity. 
Figure \ref{fig:3dFull1} compares our final model from the side direction with that of the ground truth.
Though we just received 23 photos of the 
garden in the experiments, the 3D model reconstructed from our test can present a general 
shape of the garden, as shown in Figure \ref{subfig:fullExp1}.
 The showcase video can be found on Youtube \cite{3dvideo}.

\footnotetext{http://ccwu.me/vsfm/}

\begin{figure}[ht]\centering
	\subfigure[][{\scriptsize Experiemental Sparse Model}]{
		\scalebox{0.085}[0.085]{\includegraphics{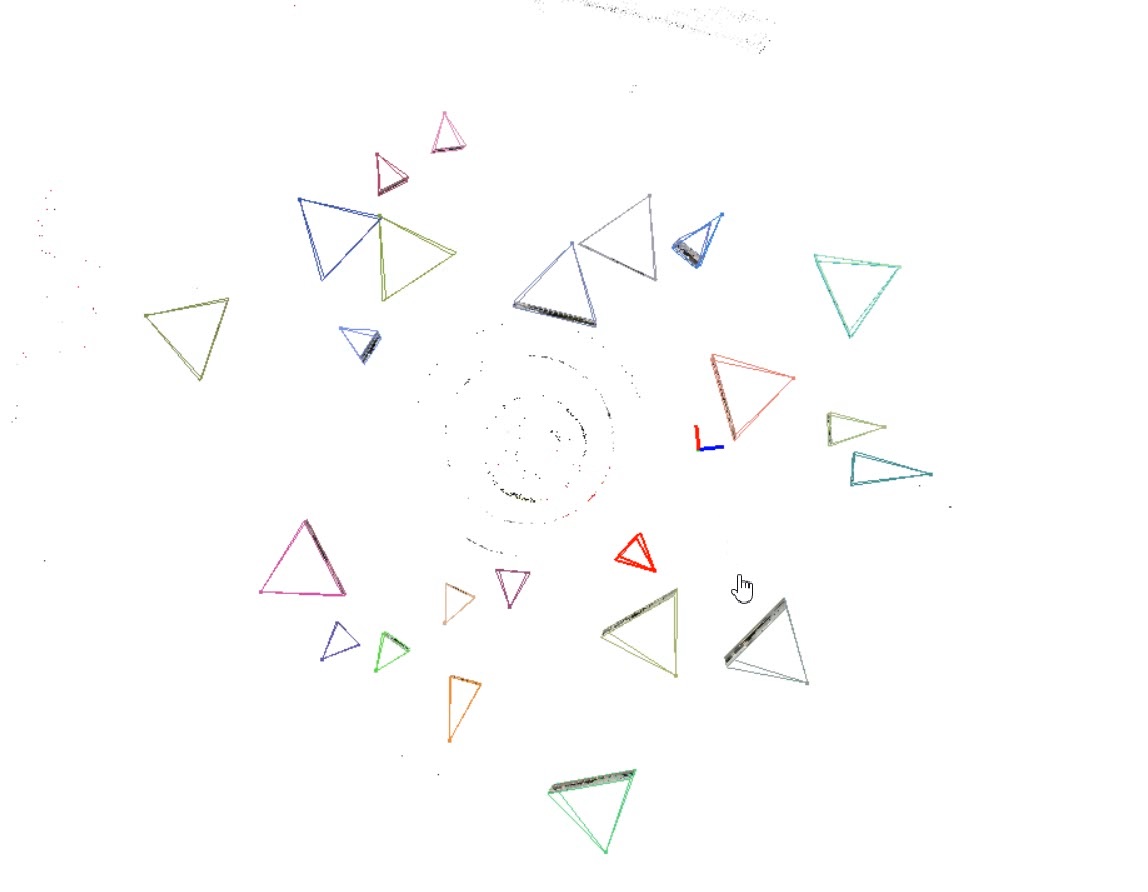}}\vspace{-2ex}
		\label{subfig:sparseExp}}
	\subfigure[][{\scriptsize Ground Truth Sparse Model}]{
		\scalebox{0.085}[0.085]{\includegraphics{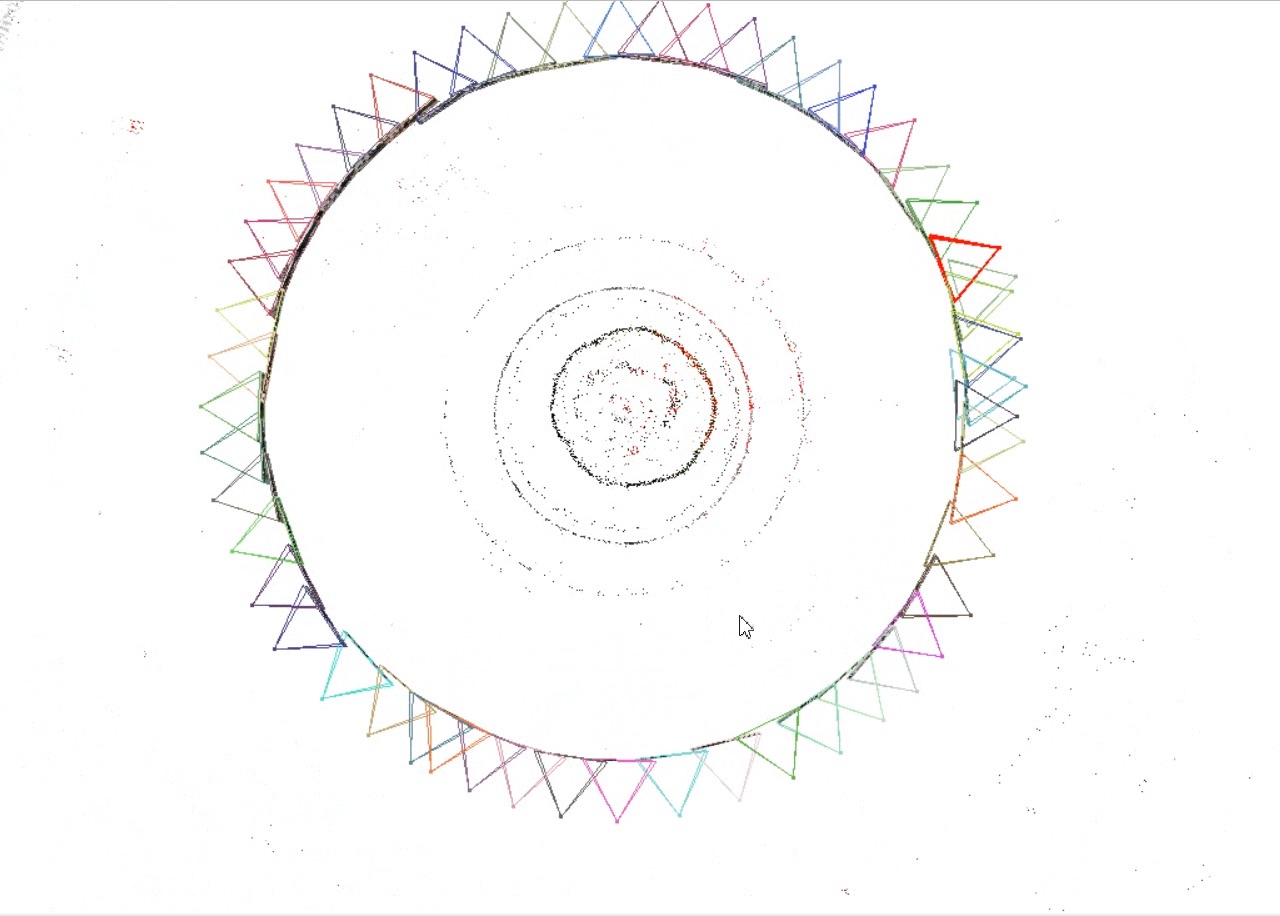}}\vspace{-2ex}
		\label{subfig:sparseTruth}}
	\vspace{-3ex}
	\caption{\small Comparison of Sparse Models of 3D Reconstruction} \vspace{-2ex}
	\label{fig:3dsparse}
\end{figure}

\begin{figure}[ht]\centering
	\subfigure[][{\scriptsize Experimental Dense Model}]{
		\scalebox{0.08}[0.08]{\includegraphics{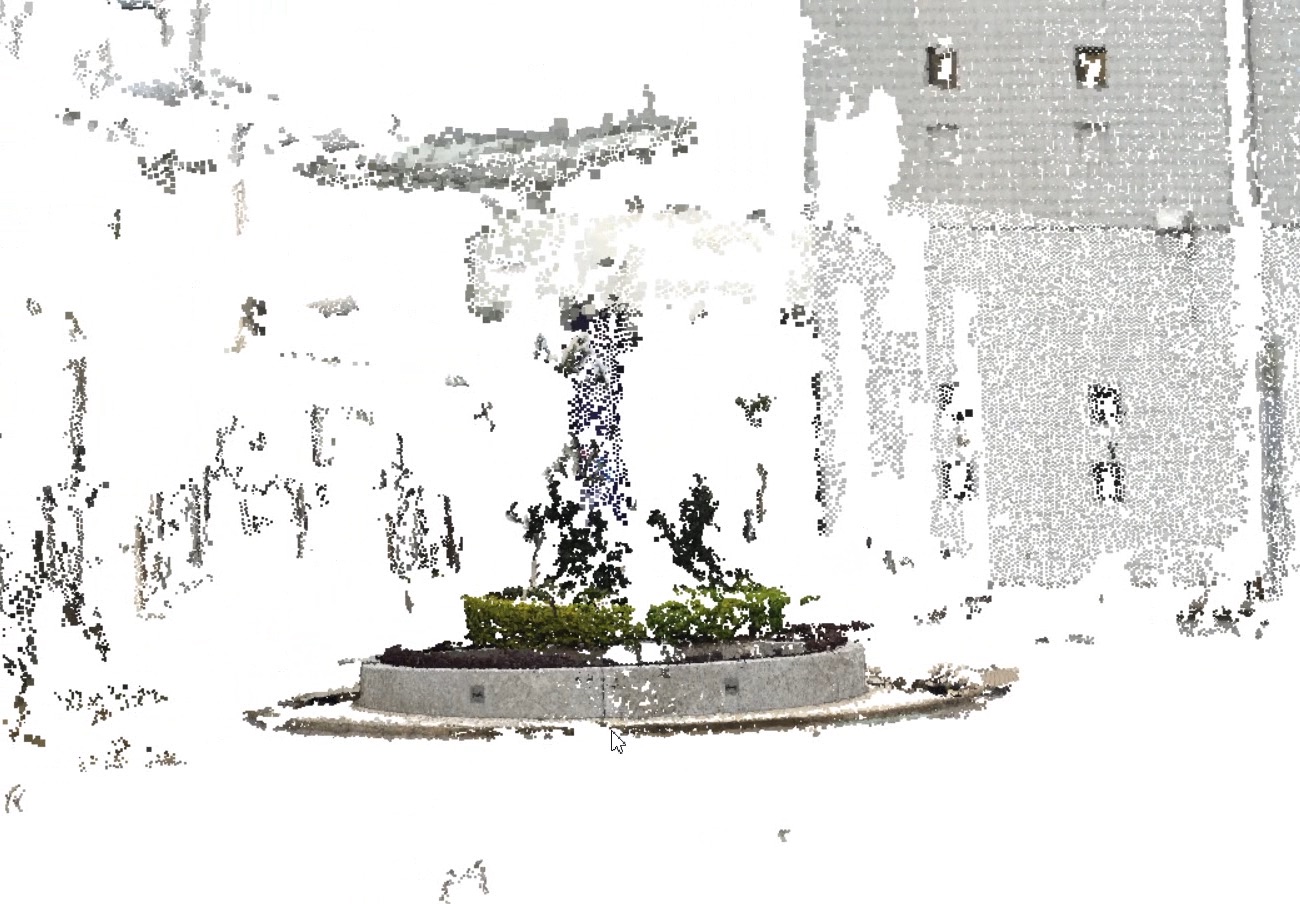}}\vspace{-2ex}
		\label{subfig:fullExp1}}
	\subfigure[][{\scriptsize Ground Truth Dense Model}]{
		\scalebox{0.08}[0.08]{\includegraphics{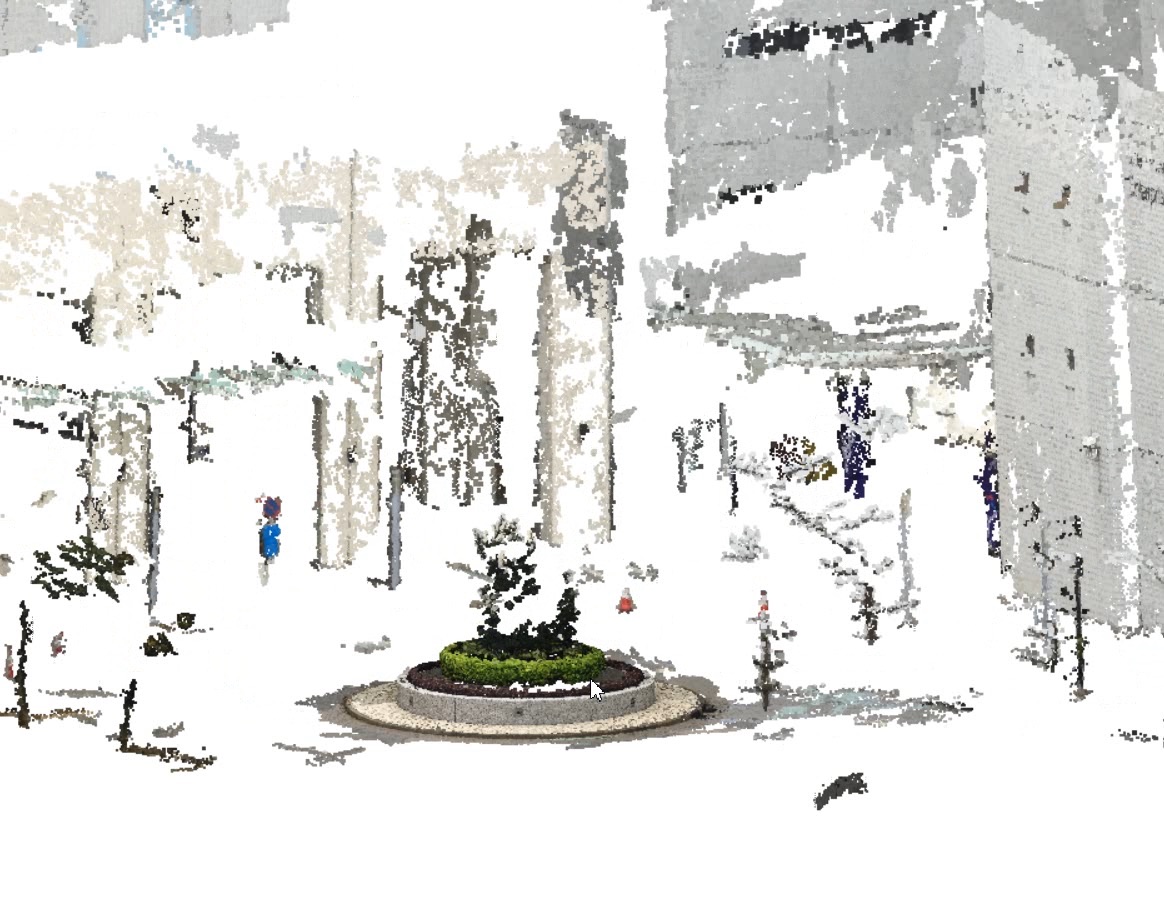}}\vspace{-2ex}
		\label{subfig:fullTruth1}}
	\vspace{-3ex}
	\caption{\small Comparison of Dense Models of 3D Reconstruction} \vspace{-2ex}
	\label{fig:3dFull1}
\end{figure}

\balance

\vspace{-1ex}
\section{Related Work}
\label{sec:related}

Recently, with rapid
development of GPS-equipped mobile devices, the spatial
crowdsourcing \cite{deng2013maximizing, kazemi2012geocrowd} that
sends location-based requests to workers (based on their spatial
positions) has become increasingly important in real applications,
such as monitoring real-world scenes (e.g., street view of Google
Maps \cite{GoogleMapStreetView}), local hotspots (e.g., Foursquare
\cite{foursquare}), and the traffic (e.g., Waze \cite{waze}).

Some prior works \cite{alt2010location,bulut2011crowdsourcing}
studied the crowdsourcing problems which treat location information
as the parameter, and distribute tasks to workers. However, in these
works, workers do not have to visit spatial locations physically (in
person) to complete the assigned tasks. In contrast, the spatial
crowdsourcing usually needs to employ workers to conduct tasks
(e.g., sensing jobs) by physically going to some specific positions.
For example, some previous works \cite{cornelius2008anonysense,
	kazemi2011privacy} studied the single-campaign or small-scale
participatory sensing problems, which focus on particular
applications of the participatory sensing.

According to people's motivation, Kazemi and Shahabi
\cite{kazemi2012geocrowd} classified the spatial crowdsourcing into
two categories: reward-based and self-incentivised. That is, in the
reward-based spatial crowdsourcing, workers can receive a small
reward after completing a spatial task; oppositely, for the
self-incentivised one, workers perform the tasks voluntarily (e.g.,
participatory sensing). In this paper, we consider the
self-incentivised spatial crowdsourcing.

Furthermore, based on publishing modes of spatial tasks, the spatial
crowdsourcing problems can be partitioned into another two classes:
\textit{worker selected tasks} (WST) and \textit{server assigned
	tasks} (SAT) \cite{kazemi2012geocrowd}. In particular, WST publishes
spatial tasks on the server side, and workers can choose any tasks
without contacting with the server; SAT collects location
information of all workers to the server, and directly assigns
workers with tasks. For example, in the WST mode, some existing
works \cite{alt2010location,deng2013maximizing} allowed users to
browse and accept available spatial tasks. On the other hand, in the
SAT mode, previous works \cite{kazemi2011privacy,
	kazemi2012geocrowd} assumed that the server decides how to assign
spatial tasks to workers and their solutions only consider simple
metrics such as maximizing the number of assigned tasks on the server
side and maximizing the number of worker's self-selected tasks.
In this paper, we not only consider the
SAT mode, but also take into account constrained features of
workers/tasks (e.g., moving directions of workers and valid period
of tasks), which make our problem more complex and unsuitable for
borrowing existing techniques.

Kazemi and Shahabi \cite{kazemi2012geocrowd} studied the spatial
crowdsourcing with the goal of static maximum task assignment, and proposed several heuristics
approaches to enable fast assignment of workers to tasks. Similarly,
Deng et al. \cite{deng2013maximizing} tackled the problem of
scheduling spatial tasks for a single worker such that the number of
completed tasks by this worker is maximized. In contrast, our work
has a different goal of maximizing the reliability and
spatial/temporal diversity that spatial tasks are accomplished. As
mentioned in Section \ref{sec:introduction}, the reliability and
diversity of spatial tasks are very important criteria in
applications like taking photos or checking whether or not parking
spaces are available. Moreover, while prior works often consider
static assignment, our work considers dynamic updates of spatial
tasks and moving workers, and proposes a cost-model-based index.
Therefore, previous techniques
\cite{deng2013maximizing,kazemi2012geocrowd} cannot be directly
applied to our RDB-SC problem.

Another important topic about the spatial crowdsourcing is the
privacy preserving. This is because workers need to report their
locations to the server, which thus may potentially release some
sensitive location/trajectory data. Some previous works
\cite{cornelius2008anonysense,kazemi2011privacy} investigate how to
tackle the privacy preserving problem in spatial crowdsourcing,
which is however out of the scope of this paper.

\vspace{-1ex}

\section{Conclusion}
\label{sec:conclusion}

In this paper, we propose the problem of reliable diversity-based
spatial crowdsourcing (RDB-SC), which assigns time-constrained
spatial tasks to dynamically moving workers, such that tasks can be
accomplished with high reliability and spatial/temporal diversity.
We prove that the processing of the RDB-SC problem is NP-hard, and
thus we propose three approximation algorithms (i.e., greedy,
sampling, and divide-and-conquer). We also design a cost-model-based
index to facilitate worker-task maintenance and RDB-SC answering.
Extensive experiments have been conducted to confirm the efficiency
and effectiveness of our proposed RDB-SC approaches on both real and
synthetic data sets.

\section{acknowledgment}
This work is supported in part by the Hong Kong RGC Project N\_HKUST637/13; National Grand Fundamental Research 973 Program of China under Grant 2014CB340303; NSFC under Grant No. 61328202, 61325013, 61190112, 61373175, and 61402359; and Microsoft Research Asia Gift Grant.

\bgroup\small

\bibliographystyle{abbrv}
\let\xxx=\bibitem\def\bibitem{\par\vspace{0.0mm}\xxx}
\bibliography{all,add}
\egroup

\balance
\section*{Appendix}

\noindent {\bf A. Proof of Lemma \ref{lemma:lem1}.}

\begin{proof}
A spatial diversity entry $M_{SD} [j] [k]$ represents the summation
of the related elements from all the possible worlds containing
angle $A_{j,k}$. In each $pw(W_i)$ of those possible worlds, the
related element is $-(\frac{A_{j,k}}{2\pi}) \cdot
log(\frac{A_{j,k}}{2\pi}) \cdot \prod_{w_x \in pw(W_i)}p_x \cdot
\prod_{w_y \notin pw(W_i)}(1 - p_y)$. It is easy to find that when
we sum up all the related elements, the summation is equal to
$M_{SD} [j] [k]$. $E(SD(t_i))$ is the summation of the diversities
of all the possible worlds multiply their probabilities
respectively. At the same time, all the possible angles will be
taken into consideration when we calculate  $E(SD(t_i))$. Thus,
after combining all the related elements for every angle, we get
$E(SD(t_i)) = \sum_{\forall j, k} M_{SD}[j] [k]$. Similarly, we can
prove that $E(TD(t_i)) = \sum_{\forall j, k} M_{TD} [j] [k]$.
\end{proof}

\noindent {\bf B. Proof of Lemma \ref{lemma:lem2}.}
\begin{proof}
We prove the lemma by a reduction from the number partition problem. A number partition problem can be described as follows: Given a set
$A = \{a_1, a_2, ..., a_N\}$ of positive integers, find a
partitioning strategy, i.e., $A_1 \subset A$ and $A_2 = A - A_1$,
such that the discrepancy $E(A)=\abs{\sum_{i \in A_1}a_i - \sum_{j
\in A_2} a_j}$ is minimized. For this given number partition
problem, we construct an instance of RDB-SC as follows: first, we
give 2 tasks and $n$ workers (each assigned to one of these 2
tasks), and let all the workers and tasks be on the same line, as
Figure \ref{fig:reduceNP}. shows. Moreover, let the beginning times of
the 2 tasks is late enough such that every worker can arrive at any
task before its expiration time. Under this setting, no matter how
to assign the workers, the $total\_STD$ is always zero.

\begin{figure}[ht]\vspace{-2ex}
\centering
       \scalebox{0.25}[0.25]{\includegraphics{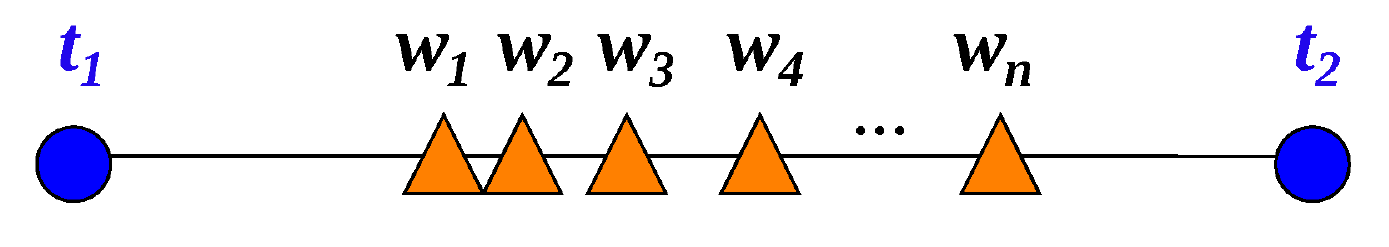}}\vspace{-2ex}
\caption{\small Illustration of the Task-and-Worker Locations.}
  \label{fig:reduceNP}\vspace{-2ex}
\end{figure}

Next, we set the reliability of the workers based on $A$. Let
$a_{max} = \max(A)$ and $a_i^\prime = a_i / a_{max}$. Then, let $p_i
= 1 - e^{a_i^\prime}$.

We now show that we can reduce the number partition problem to our
RDB-SC problem. In other words, the result of the number partition
problem is also the answer to our designed RDB-SC instance. The
answer of this RDB-SC is:\vspace{-2ex}

{\scriptsize
\begin{flalign}
&\text{maximize }  \min(\sum_{w_i \in W_1}-\ln(1-p_i), \sum_{w_j \in W_2}-\ln(1-p_j)) \notag\\
\implies &\text{minimize } \max(\sum_{w_i \in W_1}\ln(1 - p_i), \sum_{w_j \in W_2}\ln(1 - p_j))\notag\\
\implies &\text{minimize } \max(\sum_{w_i \in W_1}a_i^\prime,
\sum_{w_j \in W_2}a_j^\prime )\notag
\end{flalign}\vspace{-3ex}
}

As $\sum_{a_i \in A}  a_i/a_{max}$ is a constant, to minimize the
bigger subset sum is same as minimizing the discrepancy to the two
subsets, which is the objective of the number partition problem.
Given this mapping it is easy to show that the number partition problem instance can be solved if and only if the transformed RDB-SC problem instance can be solved. It completes our proof.
\end{proof}

\noindent {\bf C. Proof of Lemma \ref{lemma:lem3}.}
\begin{proof} According to the equivalent definition of reliability in
Eq.~(\ref{eq:eq8}), it holds that: $$R(t_i, W_i) =  \sum_{w_j\in
W_i} -ln (1-p_j).$$ Moreover, for the new set
$(W_i\cup\{w_{r+1}\})$, we have: $$R(t_i, W_i\cup \{w_{r+1}\}) =
\sum_{w_j\in (W_i\cup \{w_{r+1}\})} \hspace{-4ex}-ln (1-p_j).$$

By combining the two formulae above, we can infer that: $R(t_i,
W_i\cup \{w_{r+1}\}) = R(t_i, W_i) - ln (1-p_{r+1})$. Hence, the
lemma holds.
\end{proof}

\noindent {\bf D. Proof of Lemma \ref{lemma:lem4}.}
\begin{proof} Based on Eq.~(\ref{eq:eq6}), we have the
expected spatial/temporal diversity as follows:\vspace{-2ex}

{\scriptsize
\begin{eqnarray}
&&\hspace{-2ex}E(STD(t_i, W_i\cup\{w_{r+1}\}))\notag\\
&\hspace{-4ex}=& \hspace{-5ex}\sum_{\forall pw(W_i\cup \{w_{r+1}\})}
\hspace{-4ex}(Pr\{pw(W_i\cup \{w_{r+1}\})\}\cdot STD(t_i, pw(W_i\cup
\{w_{r+1}\})).\notag
\end{eqnarray}\vspace{-3ex}
}

Next, we divide all possible worlds of the worker set $W_i\cup
\{w_{r+1}\}$ into two disjoint parts $PW_1$ and $PW_2$, where all
possible worlds in $PW_1$ contain the new worker $w_{r+1}$, and
other possible worlds in $PW_2$ do not contain $w_{r+1}$. Therefore,
we can rewrite $E(STD(t_i, W_i\cup\{w_{r+1}\}))$ in the formula
above as:\vspace{-2ex}

{\scriptsize
\begin{eqnarray}
&&E(STD(t_i, W_i\cup\{w_{r+1}\}))\notag\\
&=&\sum_{\forall pw_1\in PW_1} (Pr\{pw_1\}\cdot STD(t_i, pw_1))\notag\\
&& + \sum_{\forall pw_2\in PW_2} (Pr\{pw_2\}\cdot
STD(t_i, pw_2))\notag\\
&=& p_{r+1} \cdot \hspace{-4ex}\sum_{\forall pw_1\in PW_1-\{w_{r+1}\}} \hspace{-4ex}(Pr\{pw_1\}\cdot STD(t_i, pw_1\cup \{w_{r+1}\}))\notag\\
&& + (1-p_{r+1}) \cdot  \sum_{\forall pw_2\in PW_2} (Pr\{pw_2\}\cdot
STD(t_i, pw_2))\notag
\end{eqnarray}\vspace{-3ex}
}

From the formula above, it is sufficient to prove that $STD(t_i,
pw_1\cup \{w_{r+1}\})\geq STD(t_i, pw_1)$ under a single possible
world $pw_1$. Due to Eq.~(\ref{eq:eq5}), alternatively, we prove
that $SD(t_i, pw_1\cup \{w_{r+1}\})\geq SD(t_i, pw_1)$ and $TD(t_i,
pw_1\cup \{w_{r+1}\})\geq TD(t_i, pw_1)$.

For the spatial diversity $SD(t_i, \cdot)$, without loss of
generality, we assume that worker $w_{r+1}$ divides the angle $A_r$
into two angles $A_{r1}$ and $A_{r2}$, where $A_r = A_{r1}+A_{r2}$.
Then, from Eq.~(\ref{eq:eq3}), we have:\vspace{-2ex}

{\scriptsize
\begin{eqnarray}
SD(t_i, pw_1\cup \{w_{r+1}\}) &=& - \sum_{j=1}^{r-1}
\frac{A_j}{2\pi} \cdot log \left(\frac{A_j}{2\pi}\right) -
\frac{A_{r1}}{2\pi} \cdot log
\left(\frac{A_{r1}}{2\pi}\right)\notag\\
&& - \frac{A_r - A_{r1}}{2\pi} \cdot log \left(\frac{A_r -
A_{r1}}{2\pi}\right)\notag
\end{eqnarray}\vspace{-3ex}
}

We next prove that $-x\cdot log x - (c-x)\cdot log (c-x) \geq -
c\cdot log c$ (for positive constant $c$). Since it holds that
$0\leq x\leq c\leq 1$ and $c-x\leq c$, we have $- log x \geq -log c$
and $- log (c-x) \geq -log c$. Thus, $-x\cdot log x - (c-x)\cdot log
(c-x) \geq - (x + (c-x)) \cdot log c = - c \cdot log c$. Hence, let
$c = \frac{A_r}{2\pi}$ in the formula above. We can
obtain:\vspace{-2ex}

{\scriptsize
\begin{eqnarray}
SD(t_i, pw_1\cup \{w_{r+1}\}) &\geq& - \sum_{j=1}^{r-1}
\frac{A_j}{2\pi} \cdot log \left(\frac{A_j}{2\pi}\right) -
\frac{A_{r}}{2\pi} \cdot log
\left(\frac{A_{r}}{2\pi}\right)\notag\\
&=& SD(t_i, pw_1).\notag
\end{eqnarray}\vspace{-3ex}
}

The case of temporal diversity can be proved similarly, and thus we
omit it. Therefore, the lemma holds.
\end{proof}

\noindent {\bf E. Proof of Lemma \ref{lemma:lem5}.}
\begin{proof}
Only when a worker is assigned to the smallest reliability task,
$\Delta min\_R(t_i, w_j)\geq 0$. Inequality (1) means the pair
$(t_i, w_j)$ can increase the smallest reliability not less than the
pair $(t_i', w_j')$. According to the reliability objective, we
should choose the pair $(t_i, w_j)$. Moreover, inequality (2) means
the pair $(t_i, w_j)$ can improve the diversity of task $t_i$ more
than the pair $(t_i', w_j')$ to task $t_i'$. To increase the
$total\_STD$ we should choose the pair $(t_i, w_j)$. Hence, when the
two inequalities are satisfied at the same time, we can prune the
pair $(t_i', w_j')$.
\end{proof}

\noindent {\bf F. Derivation of Eq. \ref{eq:eq20}.}
Given parameters $p$, $\epsilon$, and
$\delta$, we want to decide the value of parameter $K$ with high
confidence. That is, we have:$Pr\{X > (1-\epsilon)\cdot N\} > \delta. $

Equivalently, we can rewrite it as: \vspace{-1.5ex}
{\scriptsize
	\begin{eqnarray}
	Pr\{X \leq (1-\epsilon)\cdot N\} \leq 1- \delta.\label{eq:eq15}
	\end{eqnarray}\vspace{-5ex}
}

Let $M = (1-\epsilon)\cdot N$. By combining Eqs.~(\ref{eq:eq13}),
~(\ref{eq:eq14}), and ~(\ref{eq:eq15}), we can derive the
probability below:\vspace{-2ex}

\text{ }\vspace{-1ex} {\scriptsize
	\begin{eqnarray}
	Pr\{X \leq M\} &=& \sum_{i=1}^M \dbinom{i-1}{K-1} \cdot p^K \cdot
	(1-p)^{N-K}\label{eq:eq16} \vspace{-5ex}\\
	&=& p^K \cdot (1-p)^{N-K}\cdot \sum_{i=K}^M
	\dbinom{i-1}{K-1}\notag \vspace{-5ex}\\
	&=& p^K \cdot (1-p)^{N-K}\cdot\left(\dbinom{K}{K} + \sum_{i=K+1}^M
	\dbinom{i-1}{K-1}\right)\notag
	\end{eqnarray} \vspace{-1ex}
}

Since it holds that:{\scriptsize $\dbinom{Y}{X} + \dbinom{Y}{X-1} =
	\dbinom{Y+1}{X}$}, we can rewrite Eq.~(\ref{eq:eq16})
as:\vspace{-4ex}

{\scriptsize
	\begin{eqnarray}
	Pr\{X \leq M\} &=& (1-p)^N \left(\frac{p}{1-p}\right)^K\cdot
	\dbinom{M}{K} \notag\\
	&=& C_1 \cdot C_2^K \cdot \frac{M!}{K! \cdot (M-K)!}\label{eq:eq17}
	\end{eqnarray}\vspace{-3ex}
}

\noindent where constants $C_1 = (1-p)^N$ and $C_2 =\frac{p}{1-p}$.

Up to now, our problem is reduced to the one to find $K$ such
that:\vspace{-2ex}

{\scriptsize
	\begin{eqnarray}
	C_1 \cdot C_2^K \cdot \frac{M!}{K! \cdot (M-K)!} \leq 1-\delta.
	\label{eq:eq18}
	\end{eqnarray}\vspace{-2ex}
	
}

Since the factorial is not a continuous function and cannot take the
derivatives, we convert the formula above into a continuous function
by the Gamma function.

Let $F(x) =C_1\cdot C_2^x\cdot \frac{M!}{\Gamma(x+1)\cdot
	\Gamma(M-x+1)}$. 

\text{}\vspace{-3ex}
{\scriptsize
	\begin{eqnarray}
	\frac{\partial F(x)}{\partial x} &=& C_1 \cdot M! \cdot
	(\Gamma(x+1)\cdot\Gamma(M-x+1)\cdot ln C_2\cdot C_2^x\notag\\
	&& - C_2^x\cdot (x! \cdot (-e+\sum_{i=1}^x \frac{1}{i})\cdot
	\Gamma(M-x+1)\notag\\
	&& - (M-x)!(-e+\sum_{i=1}^{M-x} \frac{1}{i})\cdot
	\Gamma(x+1)))\notag\\
	&&/(\Gamma(x+1)\cdot \Gamma(M-x+1))^2.\notag
	\end{eqnarray}\vspace{-3.5ex}
}

We want to find $K$ value, such that $F(K)\leq 1-\delta$ (equivalent
to Eq.~(\ref{eq:eq18})). Thus, we let $\frac{\partial F(x)}{\partial
	x}<0$ (i.e., $F(x)$ decreases with the increase of $x$), and can
obtain:\vspace{-3ex}

{\scriptsize
	\begin{eqnarray}
	ln C_2 + \sum_{j=K+1}^{M-K} \frac{1}{j} <0 \label{eq:eq19}
	\end{eqnarray}\vspace{-3ex}
}

Based on the lower/upper bounds of \textit{Harmonic series}, we
have: \vspace{-3ex}

{\scriptsize
	$$ln (x+1) < \sum_{j=1}^x \frac{1}{j} <ln (x) +1. $$\vspace{-3ex}
}

In order to let Inequality (\ref{eq:eq19}) hold, we relax the
Harmonic series in Inequality (\ref{eq:eq19}) with their lower/upper
bounds, and obtain the condition below:\vspace{-3.5ex}

{\scriptsize
	\begin{eqnarray}
	\sum_{j=K+1}^{M-K} \frac{1}{j} &=& \sum_{j=1}^{M-K} \frac{1}{j} -
	\sum_{j=1}^{K} \frac{1}{j}\notag\\
	&\leq& ln (M-K) + 1 - ln (K+1)= ln \frac{M-K}{K+1} + 1 \notag\\
	&<& - ln C_2.\notag
	\end{eqnarray}\vspace{-4ex}
}

\noindent which can be simplified as: 

{\scriptsize
	\vspace{-2.5ex}
	\begin{eqnarray}
	K>\frac{p\cdot M\cdot e -1+p}{1-p+e\cdot p}. 
	\end{eqnarray}\vspace{-3.5ex}
}

Since $\frac{\partial F(x)}{\partial x}<0$ holds for $K$ satisfying
Eq.~(\ref{eq:eq20}), $F(K)$ monotonically decreases with the
increase of $K$. We can thus conduct a binary search for
$\widehat{K}$ value within $\left(\frac{p\cdot M\cdot e
	-1+p}{1-p+e\cdot p}, M\right]$, such that $\widehat{K}$ is the
smallest $K$ value such that $F(K)\leq 1-\delta$, where $p =
\prod_{j=1}^n \frac{1}{deg(w_j)}$.

\noindent {\bf G. Proof of Lemma \ref{lemma:lem6}.}
\begin{proof}
Assume that we remove a conflicting worker $w_j$ from task $t_i$. We
try to proof that the increase of diversity of any other worker
$w_k$ will not decrease after $w_j$ is removed when there have other
workers assigned to task $t_i$.

We divide all possible worlds of the worker set $W_i$ into two
disjoint parts $PW_1$ and $PW_2$, where all possible worlds in
$PW_1$ contain $w_k$, and in $PW_2$ not contain $w_k$. Then the
diversity increase of worker $w_k$ is presented as:\vspace{-2ex}

{\scriptsize
\begin{eqnarray}
\bigtriangleup E(STD(t_i, W_i, w_k))
&=& E(STD(t_i, W_i)) - E(STD(t_i, W_i - w_k))\notag\\
&=& p_{k} \cdot \hspace{-2ex}\sum_{\forall pw_1\in PW_2} \hspace{-3ex}(Pr\{pw_1\}\cdot STD(t_i, pw_1\cup \{w_{k}\}))\notag\\
&& + (1-p_{k}) \cdot \hspace{-3ex} \sum_{\forall pw_2\in PW_2} \hspace{-3ex} (Pr\{pw_2\}\cdot STD(t_i, pw_2))\notag\\
&& - \hspace{-3ex}\sum_{\forall pw_2\in PW_2} \hspace{-3ex}(Pr\{pw_2\}\cdot STD(t_i, pw_2))\notag\\
&=& p_k \cdot \hspace{-2ex}\sum_{\forall pw\in PW_2} \hspace{-3ex}(Pr\{pw\}\cdot (STD(t_i, pw\cup \{w_{k}\}).\notag \\
&& -STD(t_i, pw))) \notag
\end{eqnarray}\vspace{-3ex}
}

Let $PW_3$ represent all the possible worlds that do not contain
$w_i$ or $w_k$. Similarly, we can get the differential increase of
worker $w_i$ as:\vspace{-2ex}

{\scriptsize
\begin{eqnarray}
&& \bigtriangleup E(STD(t_i, W_i - w_j, w_k)) - \bigtriangleup E(STD(t_i, W_i, w_k))\notag\\
&=& p_k \cdot p_j \cdot \hspace{-2ex}\sum_{\forall pw\in PW_3} \hspace{-3ex}(Pr\{pw\}\notag\\
&&\cdot ((STD(t_i, pw\cup \{w_{k}\})-STD(t_i, pw )))\notag \\
&& - (STD(t_i, pw\cup \{w_{k}, w_{j}\})-STD(t_i, pw\cup
\{w_{j}\})))). \notag
\end{eqnarray}\vspace{-3ex}
}

Like the proof of Lemma \ref{lemma:lem4}, we prove under a single
possible world $pw_3 \in PW_3$, $(STD(t_i, pw_3\cup
\{w_{k}\})-STD(t_i, pw_3 ))) \ge (STD(t_i, pw_3\cup \{w_{k},
w_{j}\})-STD(t_i, pw_3\cup \{w_{j}\})$. Due to Eq.~(\ref{eq:eq5}),
we prove the spatial diversity and temporal diversity one by one.
For the spatial diversity $SD(t_i,\cdot)$, without loss of
generality, we assume that worker $w_k$ divides the angle $A_k$ into
two angles $A_{k1}$ and $A_{k2}$, where $A_k = A_{k1} + A_{k2}$.
$w_j$ divides the angle $A_j$ into $A_{j1}$ and $A_{j2}$. If $A_k$
and $A_j$ have no overlap, we can directly know  $(SD(t_i, pw_3\cup
\{w_{k}\})-SD(t_i, pw_3))) = (SD(t_i, pw_3\cup \{w_{k},
w_{j}\})-SD(t_i, pw_3\cup \{w_{j}\})$. Otherwise, when there is not
$w_j$, either angle $A_{k1}$ or $A_{k2}$ will be enlarged. We assume
angle $A_{k1}$ will be enlarged as $A_{k1}^\prime$. At the same
time, let $A_k^\prime = A_{k1}^\prime + A_{k2}$. According to
Eq.~(\ref{eq:eq3}), we have:\vspace{-2ex}

{\scriptsize
\begin{eqnarray}
&&  SD(t_i, pw_3\cup \{w_{k}, w_j\})-SD(t_i, pw_3 \cup \{w_j\})))\notag\\
&=& (\frac{A_{k}}{2\pi})\cdot\log(\frac{A_{k}}{2\pi}) -
(\frac{A_{k1}}{2\pi})\cdot\log(\frac{A_{k1}}{2\pi}) -
(\frac{A_{k2}}{2\pi})\cdot\log(\frac{A_{k2}}{2\pi}).\notag
\end{eqnarray}\vspace{-3ex}
}

Then, we have:\vspace{-2ex}

{\scriptsize
\begin{eqnarray}
&&  (SD(t_i, pw_3\cup \{w_{k}\})-SD(t_i, pw_3)))\notag\\
&& - (SD(t_i, pw_3\cup \{w_{k}, w_{j}\})-SD(t_i, pw_3\cup \{w_{j}\})\notag \\
&=& ((\frac{A_{k}^\prime}{2\pi})\cdot\log(\frac{A_{k}^\prime}{2\pi}) - (\frac{A_{k}}{2\pi})\cdot\log(\frac{A_{k}}{2\pi})) \notag \\
&& -
((\frac{A_{k1}^\prime}{2\pi})\cdot\log(\frac{A_{k1}^\prime}{2\pi}) -
(\frac{A_{k1}}{2\pi})\cdot\log(\frac{A_{k1}}{2\pi})).\notag
\end{eqnarray}\vspace{-3ex}
}

We next prove that $f(x) = (x+\epsilon)\log(x+\epsilon) - x\log(x)$
is an increasing function, where $\epsilon > 0$. As $f(x)^\prime =
\log(x+\epsilon) - \log(x) > 0$, $f(x)$ is an increasing function.
Thus, $(SD(t_i, pw_3\cup \{w_{k}\})-SD(t_i, pw_3)))\notag - (SD(t_i,
pw_3\cup \{w_{k}, w_{j}\})-SD(t_i, pw_3\cup \{w_{j}\}) > 0$ when
$A_k$ overlap $A_j$. Similarly, we can prove the case of temporal
diversity, and we omit it to save the space.

Next, we prove that, when task $t_i$ just has worker $w_j$ and
$w_k$, $w_k$ should not leave task $t_i$ if $w_j$ is removed. If
both $w_j$ and $w_k$ are removed from $t_i$, the reliability of
$t_i$ will become zero, which must be the minimum reliability. This
situation goes against our goal of maximizing the minimum
reliability and should not happen. In conclusion, this lemma holds.
\end{proof}

\noindent {\bf H. Proof of Lemma \ref{lemma:lem7}.}
\begin{proof}
According to Lemma \ref{lemma:lem6}, deleting copies of conflicting
workers will not affect the assignments of non-conflicting workers.
The only situation we need to consider is that some conflicting
workers are assigned to some tasks. In other words, they can connect
with each other through some tasks. For these DCWs, we just need to
enumerate all the possible combinations and pick the best one. If
the size of a group of DCWs is $k$, we have $2^k$ combinations to
check.
\end{proof}

\begin{figure}[ht!]\centering\vspace{-1ex}
	\subfigure[][{\scriptsize Minimum Reliability}]{
		\scalebox{0.18}[0.18]{\includegraphics{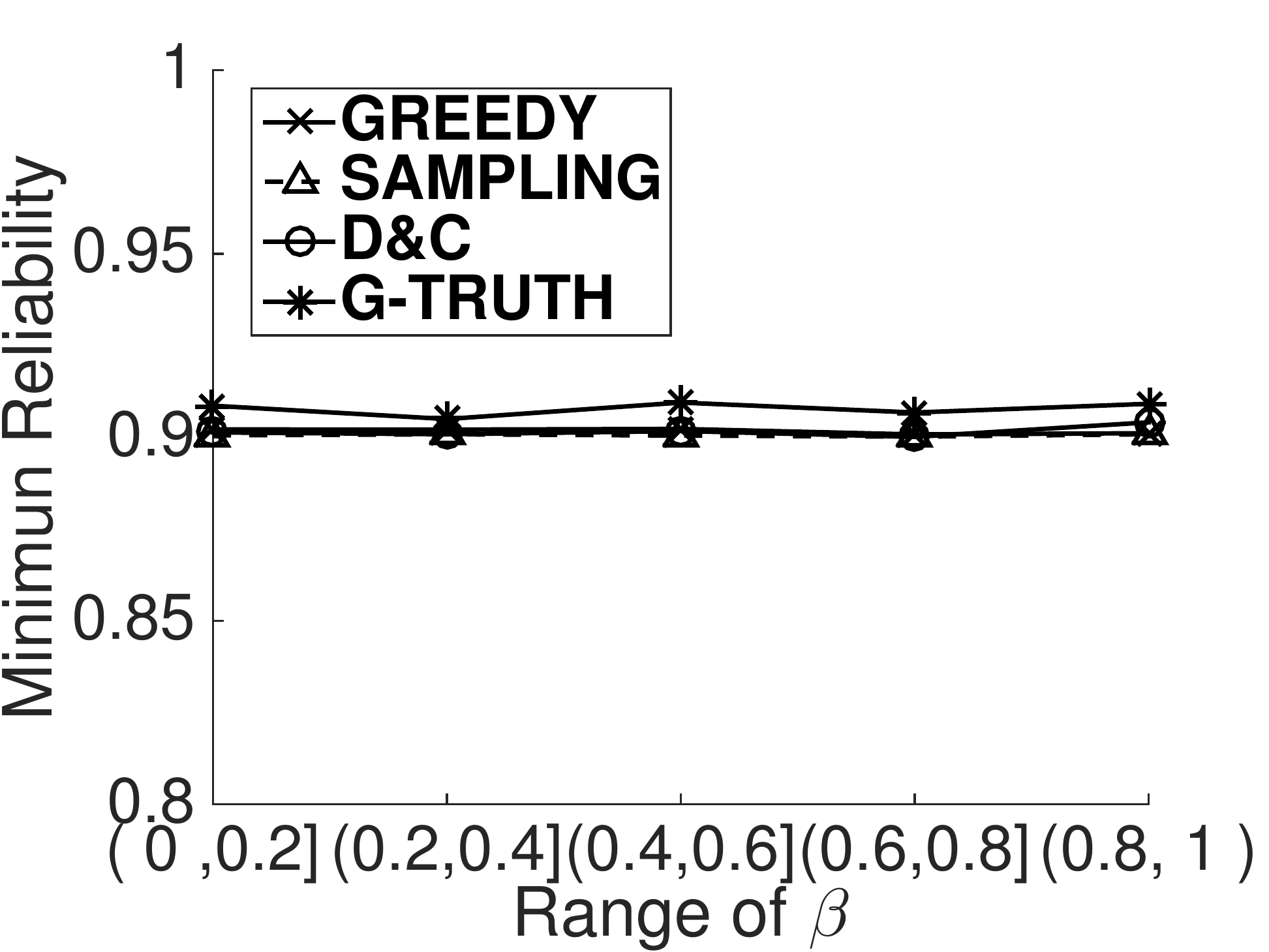}}\vspace{-2ex}
		\label{subfig:BetaReliability}}
	\subfigure[][{\scriptsize Summation of Diversity}]{
		\scalebox{0.18}[0.18]{\includegraphics{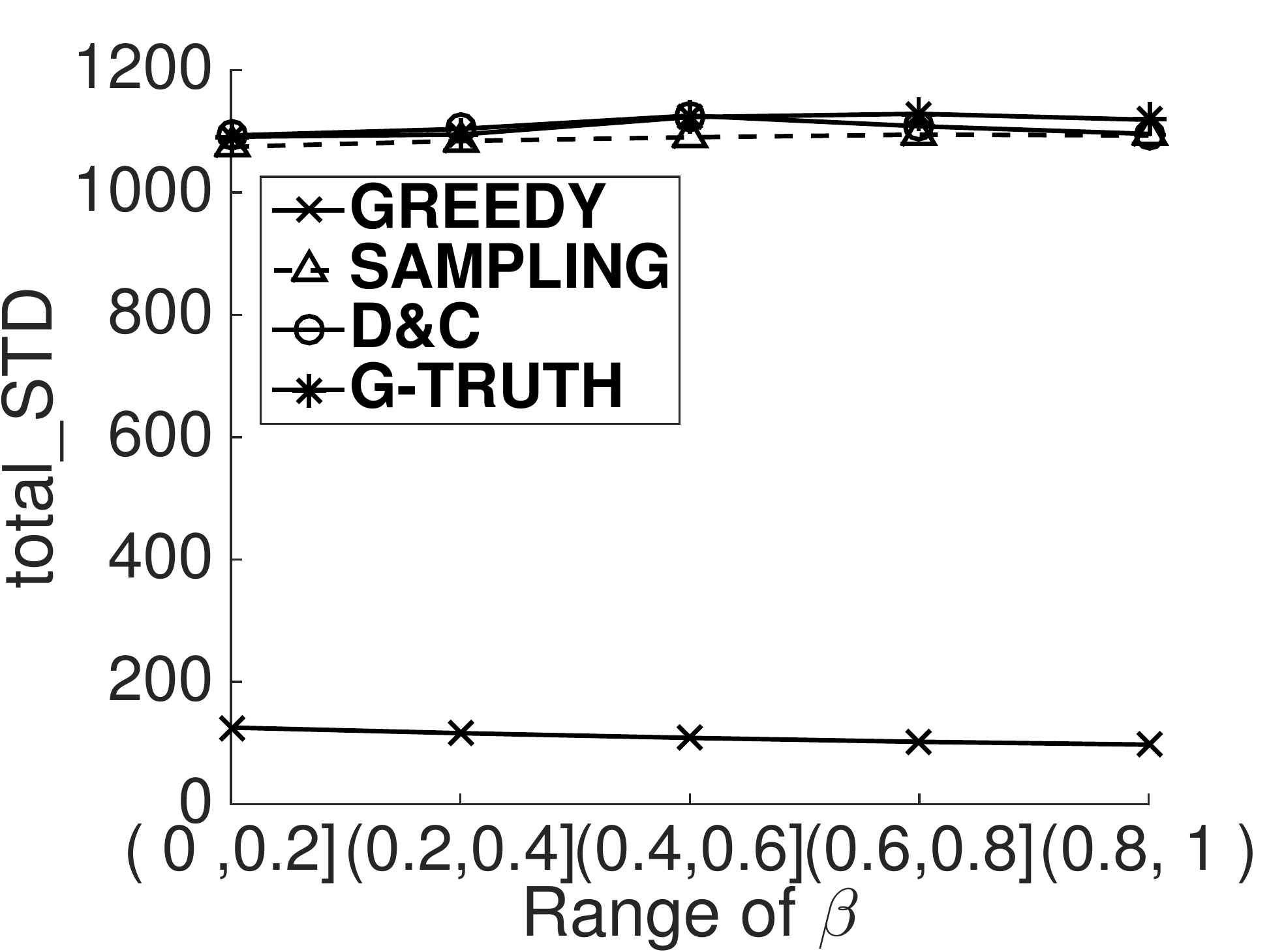}}\vspace{-2ex}
		\label{subfig:BetaDiversity}}
	\vspace{-2ex}
	\caption{\small Effect of the Range of Requester-Specified Weight $\beta$} \vspace{-2ex}
	\label{fig:betaEffect}
\end{figure}

\noindent {\bf I. Cost Model of RDB-SC-Grid.}

In this subsection, we illustrate how to set the best cell size
$\eta$ based on our proposed cost model. In the sequel, we first
propose a cost model to formalize the updating cost of the
RDB-SC-Grid index. Then, we use the cost model to guide how to set
$\eta$.

In general, the update cost of the RDB-SC-Grid index includes two
parts, the cost of retrieving cells in the reachable area and that
of checking every task in each candidate cell. Since workers do not
move very far in general, we design the cost model based on the
moving distance history. According to the moving history of workers,
we can get the maximum moving distance, $L_{max}$, for all the
workers. In the worst case, we need to check $\frac{\pi\cdot(L_{max}
	+ \eta)^2}{\eta^2}$ cells. To estimate the number of tasks,
$\overline{nt}$, in these cells, we use the \textit{power law}
\cite{belussi1998self} for the \textit{correlation fractal
	dimension} $D_2$ of the tasks in the 2D data space. Then, we have
$\overline{nt} = (N - 1)\cdot(\pi\cdot(L_{max}+\eta)^2)^{\frac{D_2}{2}}$. As the power law
is applicable to both uniform data and nonuniform data in real world
\cite{belussi1998self}, we use it to estimate the tasks in the
reachable area. With the cost of these two parts, we can get the
updating cost, $cost_{update}$, of the RDB-SC-Grid index as
follows:\vspace{-3ex}

{\scriptsize
	\begin{eqnarray}
	cost_{update} = \frac{\pi\cdot(L_{max} + \eta)^2}{\eta^2} + (N -
	1)\cdot(\pi\cdot(L_{max}+\eta)^2)^{\frac{D_2}{2}}.\label{eq:eq21}
	\end{eqnarray}\vspace{-3ex}
}

Therefore, our goal is to set the best $\eta$ value for our grid
index, such that $cost_{update}$ is minimized in
Eq.~(\ref{eq:eq21}). In particular, we take the derivative of
$cost_{update}$ with respect to $\eta$, and let it equal to 0, that
is, $\frac{\partial cost_{update}}{\partial \eta}=0$, which can be
simplified as:\vspace{-2ex}

{\scriptsize
	\begin{eqnarray}
	(L_{max} + \eta)^{D_2 - 2}\cdot\eta^3 = \frac{2\pi^{1-D_2/2}\cdot
		L_{max}}{D_2 \cdot(N - 1)}.\label{eq:eq22}
	\end{eqnarray}\vspace{-3ex}
}

Next, we can calculate $L_{max}$ and $D_2$ by collecting statistics
from historical data, and then estimate a proper cell size $\eta$
according to Eq.~(\ref{eq:eq22}). The resulting $\eta$ can minimize
the updating cost of the RDB-SC-Grid index. When we do not have the
historical data at the beginning, we can only assume that data are
uniform such that $D_2=2$. As a result, we have $\eta =
\sqrt[3]{\frac{L_{max}}{N-1}}$.

\noindent {\bf J. Figures on UNIFORM/SKEWED synthetic data.}

We first present the effect of the range of Requester-Specified
weight $\beta$ to our approaches on real data sets.
Then, we show the  effects of the parameters listed in Table \ref{table2} to our approaches on
SKEWED synthetic data sets.

\vspace{0.5ex}\noindent {\bf Effect of the Range of
Requester-Specified Weight $\beta$.} Figure \ref{fig:betaEffect}
illustrates the effect of the range of requester-specified weight,
$\beta$, on the reliability/diversity over real data set, where
other parameters use their default values. In figures, when the
range changes from $(0, 0.2]$ to $(0.8, 1)$, both reliability and
diversity are not very sensitive, which indicates the robustness of
the 3 approaches against weight $\beta$ for the spatial/temporal
diversity. For all the tested ranges, both minimum reliability and
diversity remain high (i.e., with reliability above 0.9, and the
diversity of SAMPLING and D\&C close to that of G-TRUTH).

\begin{figure}[ht!]\centering
    \vspace{-1ex}
     \subfigure[][{\scriptsize Minimum Reliability}]{
       \scalebox{0.18}[0.18]{\includegraphics{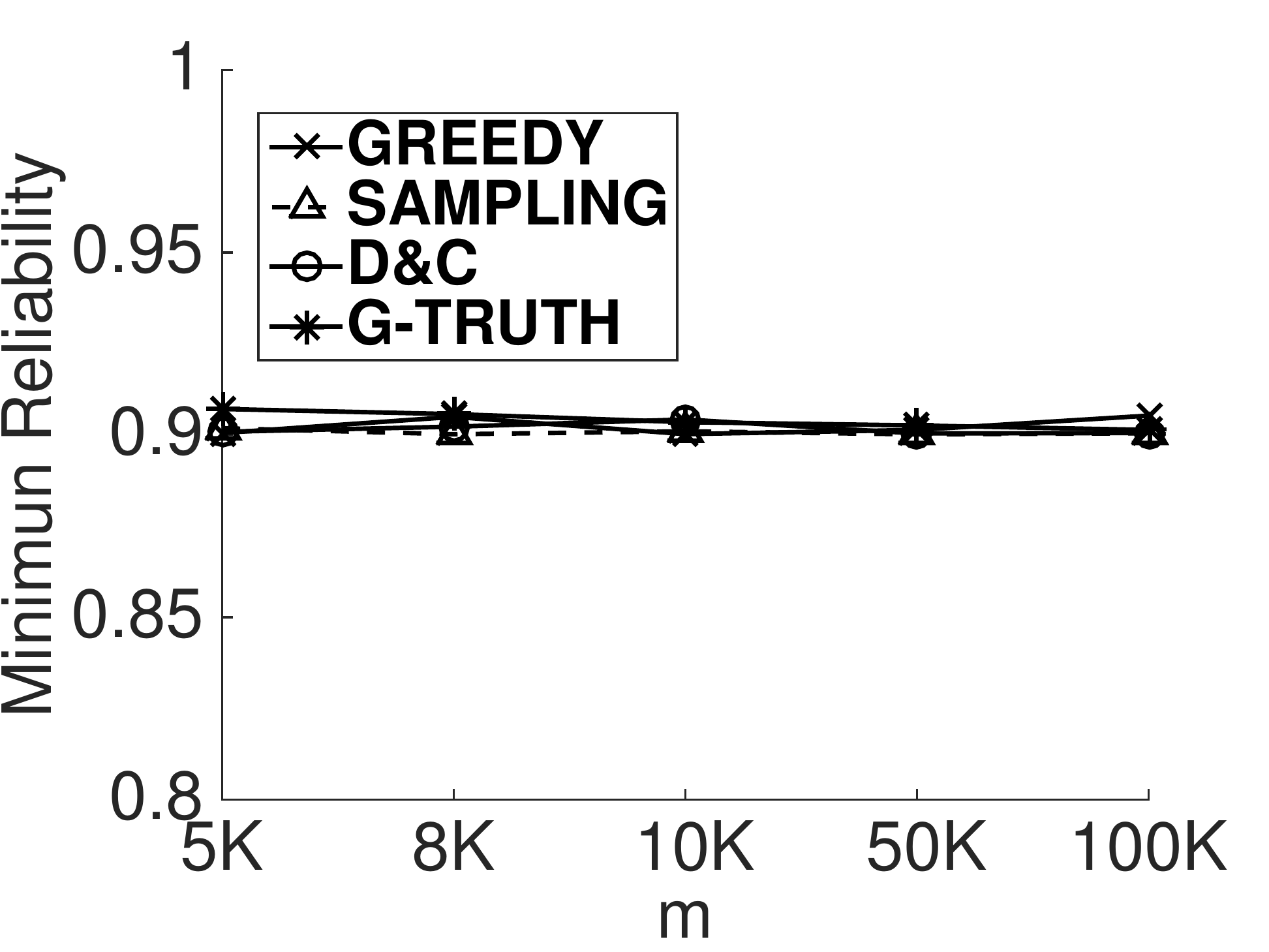}}
     \label{subfig:taskNSkewedReliability}}
     \subfigure[][{\scriptsize Summation of Diversity}]{
       \scalebox{0.18}[0.18]{\includegraphics{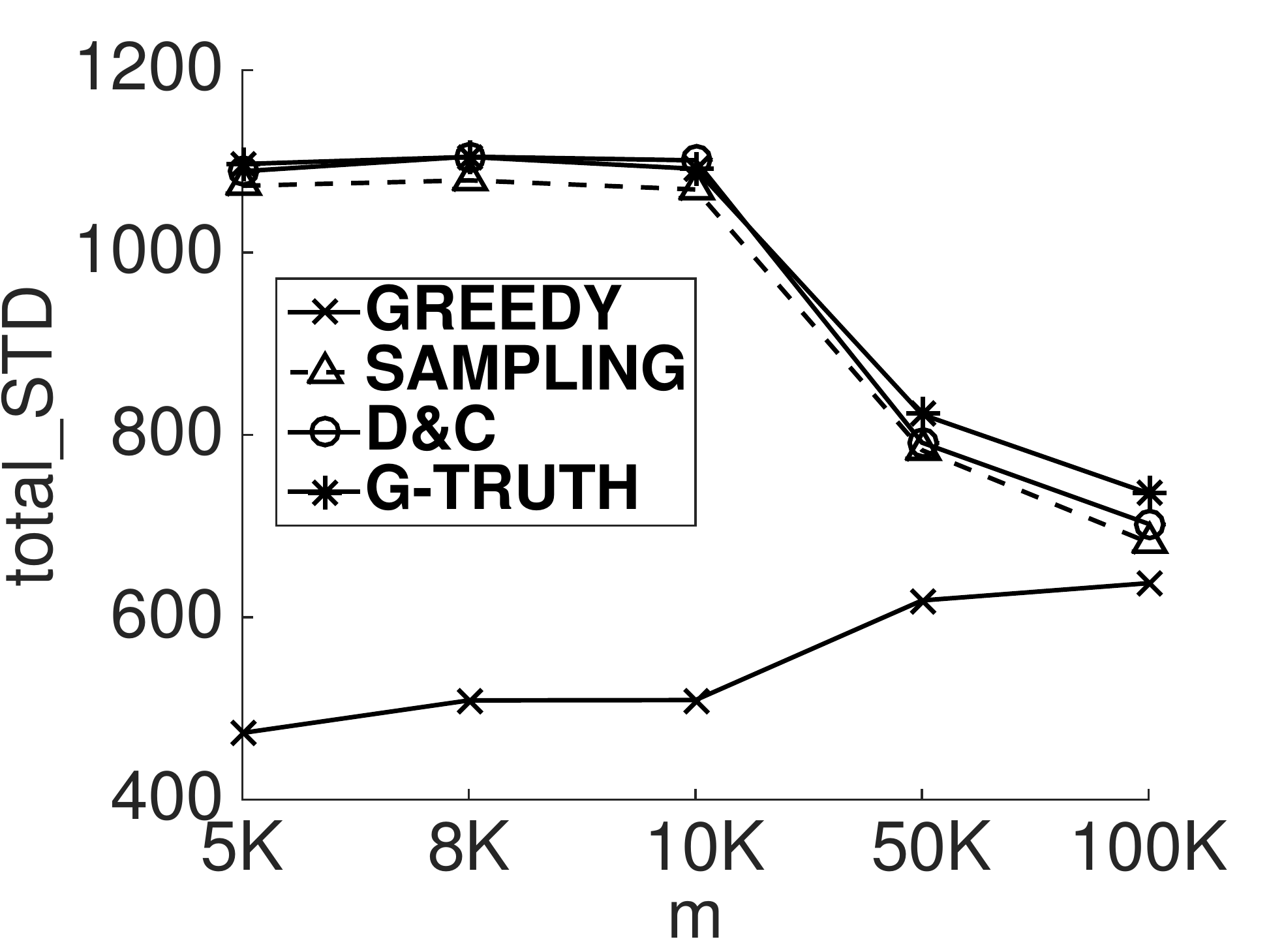}}
     \label{subfig:taskNSkewedDiversity}}
     \vspace{-3ex}
\caption{\small Effect of $m$ (SKEWED)} 
  \label{fig:taskNSkewed}
\end{figure}

\begin{figure}[ht!]\vspace{-1ex}
\centering
     \subfigure[][{\scriptsize Minimum Reliability}]{
       \scalebox{0.18}[0.18]{\includegraphics{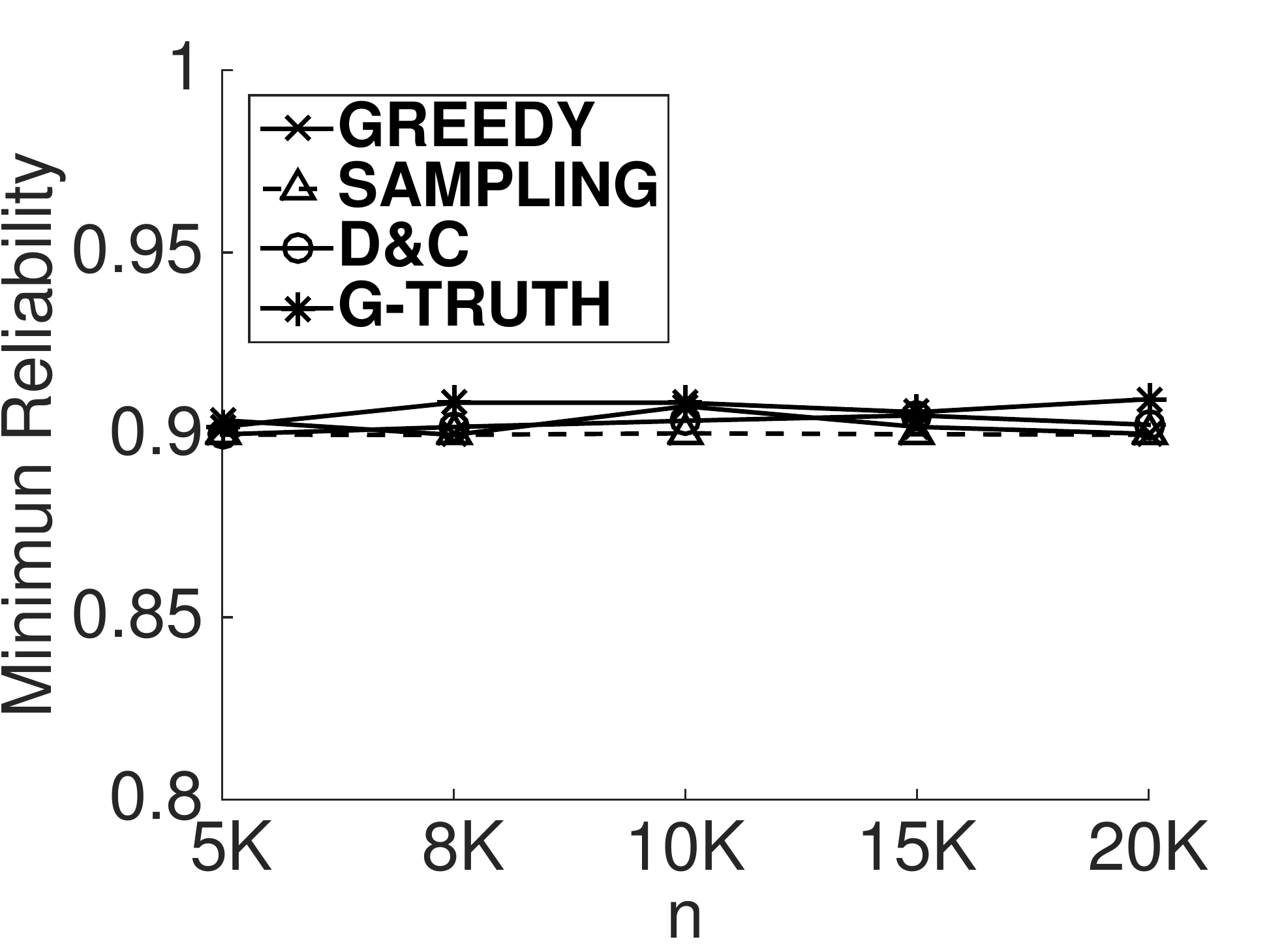}}
     \label{subfig:workerNSkewedReliability}}
     \subfigure[][{\scriptsize Summation of Diversity}]{
       \scalebox{0.18}[0.18]{\includegraphics{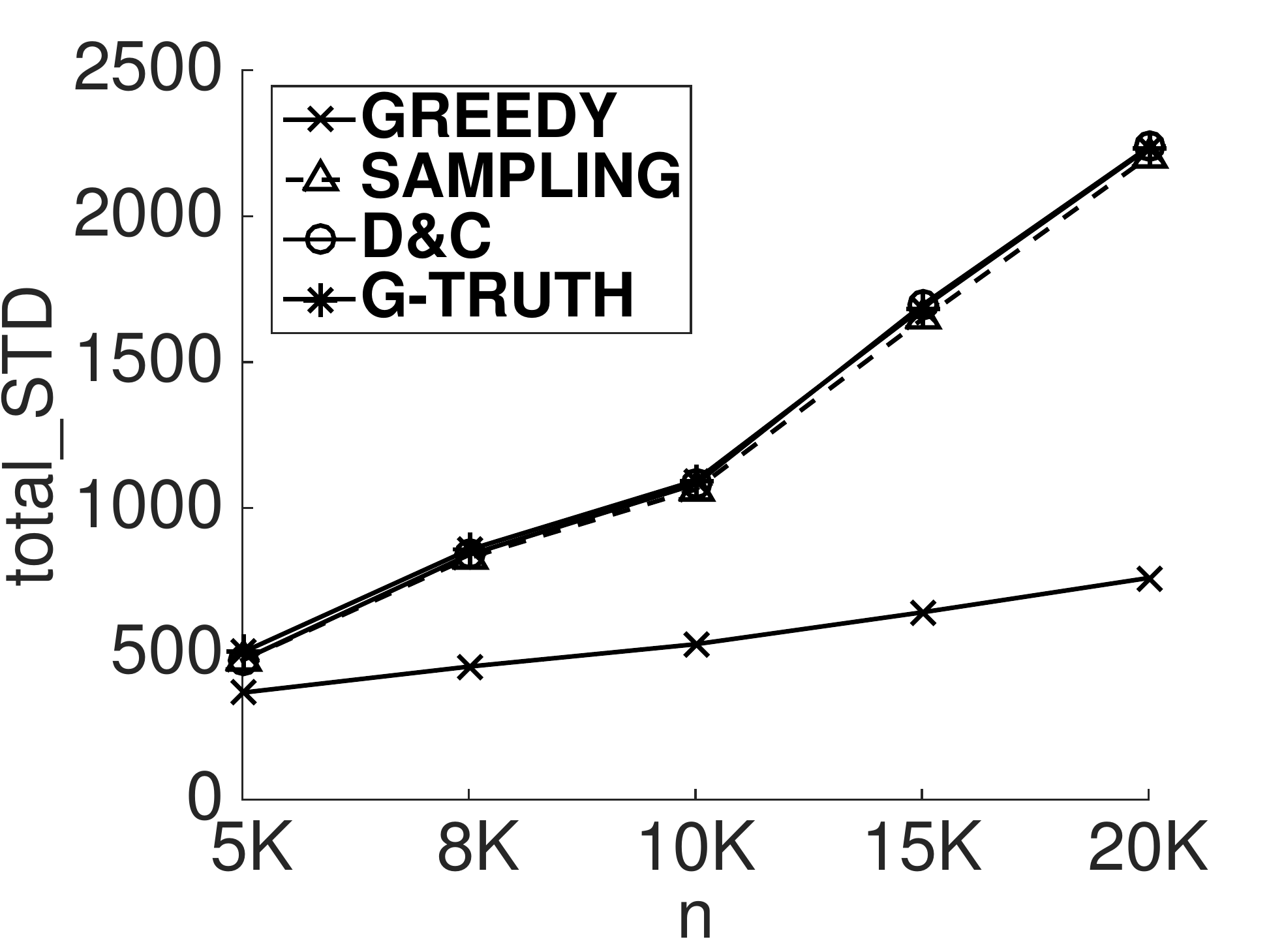}}
     \label{subfig:workerNSkewedDiversity}}
     \vspace{-3ex}
\caption{\small Effect of $n$ (SKEWED)}
  \label{fig:workerNSkewed}
\end{figure}

\vspace{0.5ex}\noindent {\bf Effect of the number of the tasks, $m$
(SKEWED).} Figure \ref{fig:taskNSkewed} shows the effect of the
number of the tasks, $m$, using SKEWED synthetic data. In figures,
the minimum reliability is not very sensitive to $m$. With the same
reasons discussed in Section \ref{subsec:expSynthetic}, when the
number of tasks $m$ increases, the diversities of SAMPLING and D\&C
decrease, whereas that of GREEDY increases.

\vspace{0.5ex}\noindent {\bf Effect of the number of the workers,
$n$ (SKEWED).} Figure \ref{fig:workerNSkewed} presents the effect of
the number of the workers, $n$, on SKEWED synthetic data set. We can
see that the minimum reliability is not sensitive to $n$. On the
other hand, the diversities of all the approaches increase when $n$
becomes larger, which is same with the situation in UNIFORM data
set. The reasons are same with the discussed reasons in Section
\ref{subsec:expSynthetic}.

\begin{figure}[ht!]\centering \vspace{-1ex}
	\subfigure[][{\scriptsize Minimum Reliability}]{
		\scalebox{0.18}[0.18]{\includegraphics{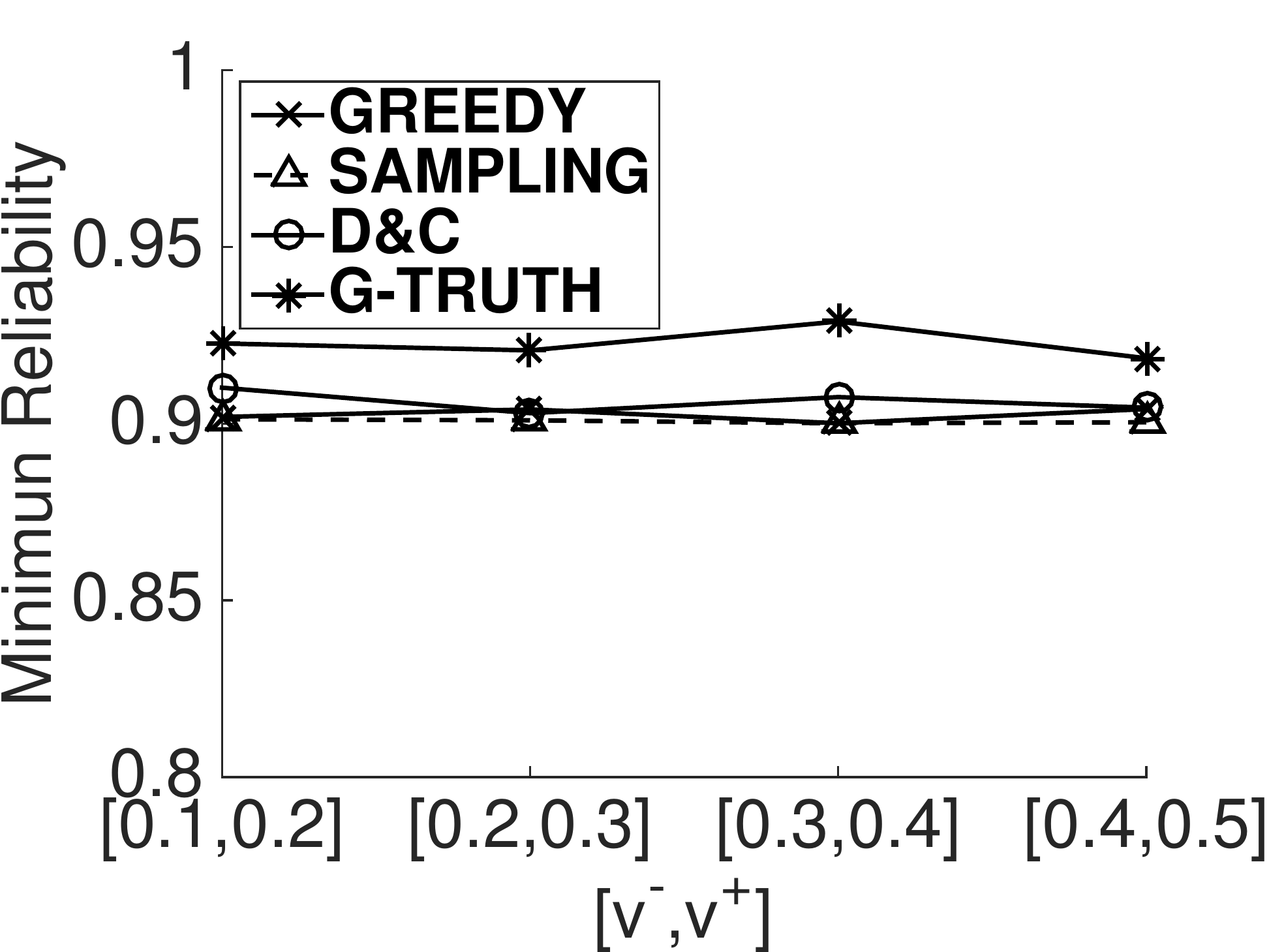}}\vspace{-2ex}
		\label{subfig:velocityR}}
	\subfigure[][{\scriptsize Summation of Diversity}]{
		\scalebox{0.18}[0.18]{\includegraphics{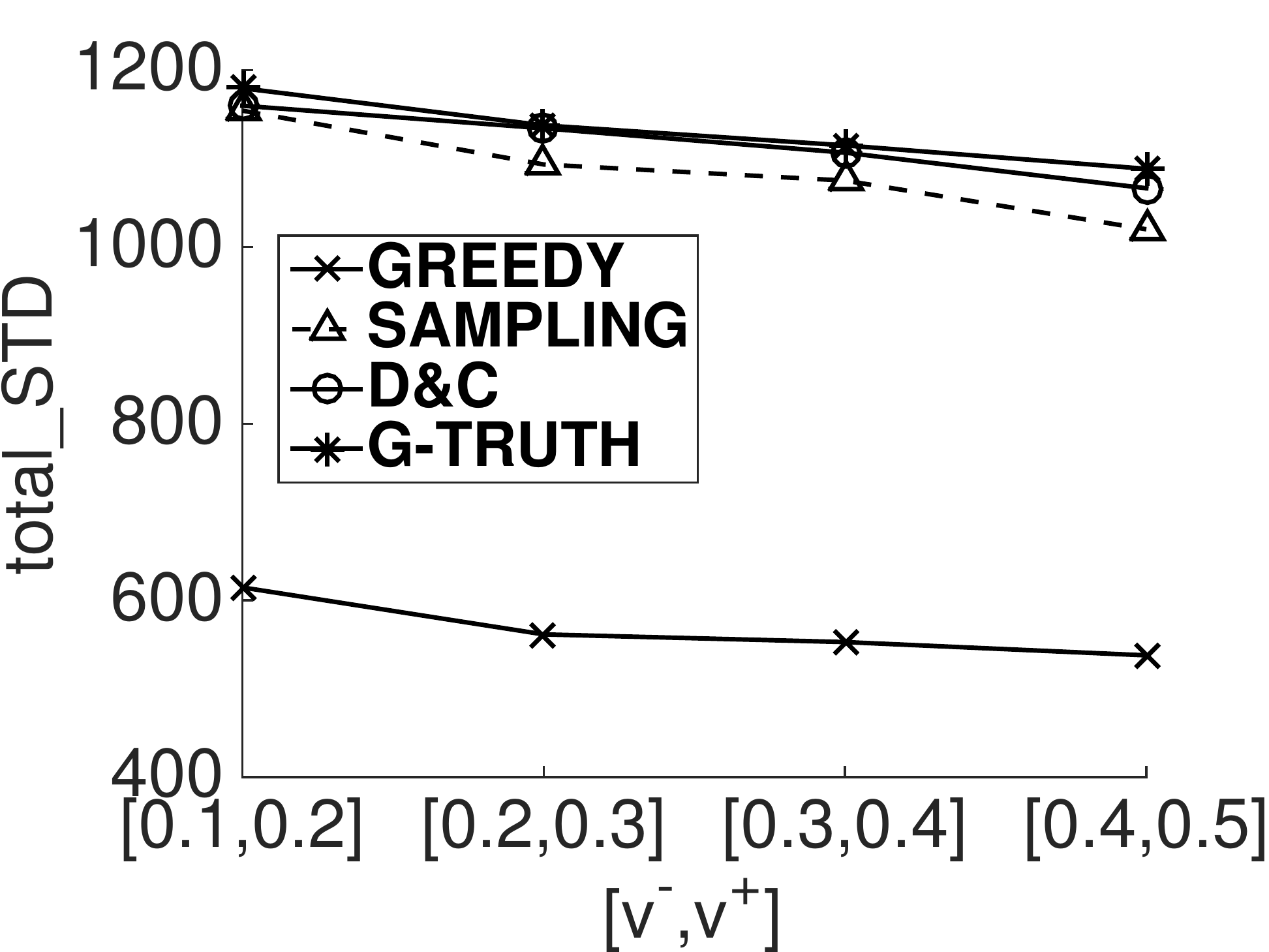}}\vspace{-2ex}
		\label{subfig:velocityD}}
	\vspace{-3ex}
	\caption{\small Effect of the Range of the Velocities $[v^-, v^+]$ (UNIFORM)} 
	\label{fig:velocityRange}
\end{figure}

\vspace{0.5ex}\noindent {\bf Effect of the Range of Workers'
	Velocities $[v^-, v^+]$. (UNIFORM)} Figure \ref{fig:velocityRange} reveals the
effect of workers' velocities range, $[v^-, v^+]$, from
$[0.1, 0.2]$ to $[0.4, 0.5]$. In figures, when the velocity in the range becomes
higher, the minimum reliability remains high (i.e., around 0.9).
Furthermore, the diversities of our approaches gradually decrease
for higher velocity range. Intuitively, faster moving workers can
potentially reach more tasks before their deadlines, which, however,
lowers the total spatial/temporal diversity. Similarly, SAMPLING and
D\&C are much better than GREEDY in diversity perspective, and their
diversities closely approach that of G-TRUTH.

\begin{figure}[ht!]
  \centering\vspace{-1ex}
       \subfigure[][{\scriptsize Minimum Reliability}]{
         \scalebox{0.18}[0.18]{\includegraphics{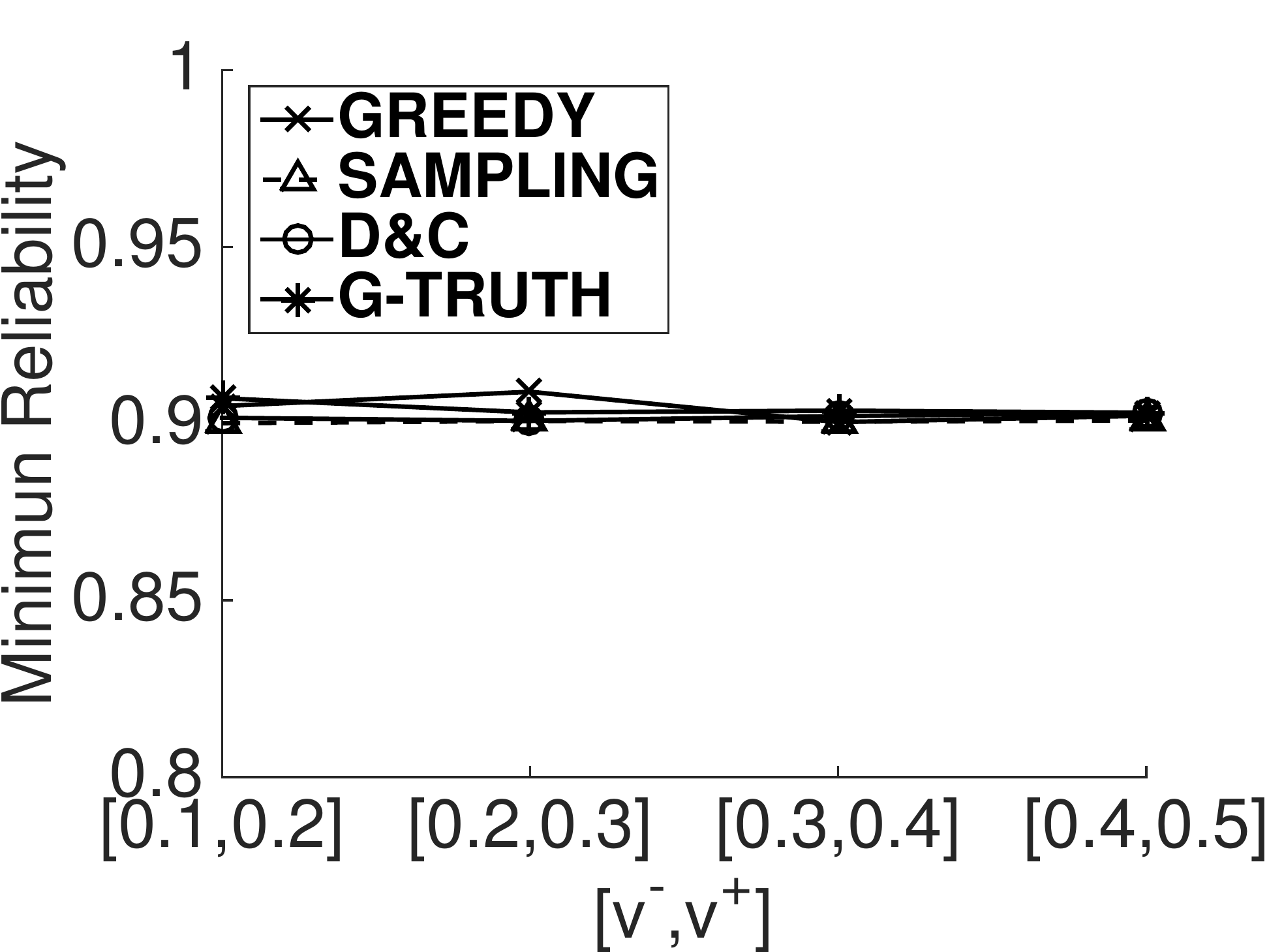}}\vspace{-2ex}
       \label{subfig:velocityRSkewed}}
       \subfigure[][{\scriptsize Summation of Diversity}]{
         \scalebox{0.18}[0.18]{\includegraphics{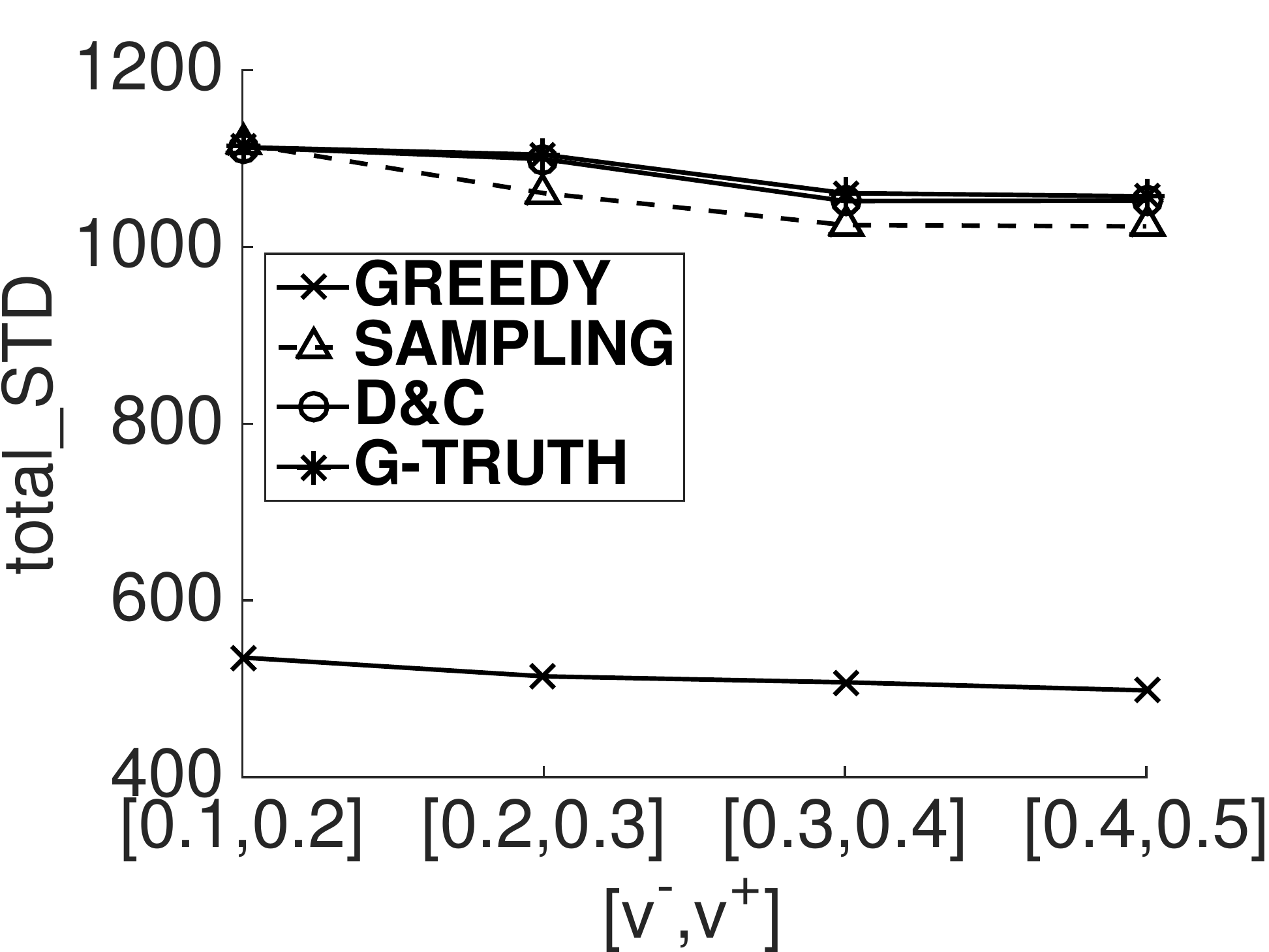}}\vspace{-2ex}
       \label{subfig:velocityDSkewed}}
        \vspace{-3ex}
  \caption{\small Effect of $[v^-, v^+]$ (SKEWED)}
    \label{fig:velocityRangeSkewed}
\end{figure}

\vspace{0.5ex}\noindent {\bf Effect of Workers' Velocity Range,
$[v^-, v^+]$ (SKEWED).} Figure \ref{fig:velocityRangeSkewed} shows
the effect of workers' velocity ranges over SKEWED synthetic data
sets, where other parameters are set to default values. The minimum
reliability of our approaches is not very sensitive to the velocity.
For diversities, the results are similar to that on UNIFORM
synthetic data. The same reason can explain why the diversities of
SAMPLING, D\&C, and G-TRUTH decrease, when workers' moving speeds
are improved.

\vspace{0.5ex}\noindent {\bf Effect of  the Range of Moving Angle,
$(\alpha_j^+ - \alpha_j^-)$ (SKEWED).} Figure \ref{fig:angleSkewed}
shows the experimental results by varying the range of workers'
moving angles, $(\alpha_j^+ - \alpha_j^-)$, from $(0,\pi/8]$ to $(0,
\pi/4]$, on SKEWED synthetic data sets, where other parameters take
default values. From figures, the minimum reliability is not very
sensitive to the range of worker' moving direction. As shown in
Figure \ref{subfig:AngleDiversitySkewed}, SAMPLING and D\&C can
achieve much higher diversity, $total\_STD$, than GREEDY, and they
have diversities similar to G-TRUTH.

\begin{figure}[ht!]\centering\vspace{-1ex}
     \subfigure[][{\scriptsize Minimum Reliability}]{
       \scalebox{0.18}[0.18]{\includegraphics{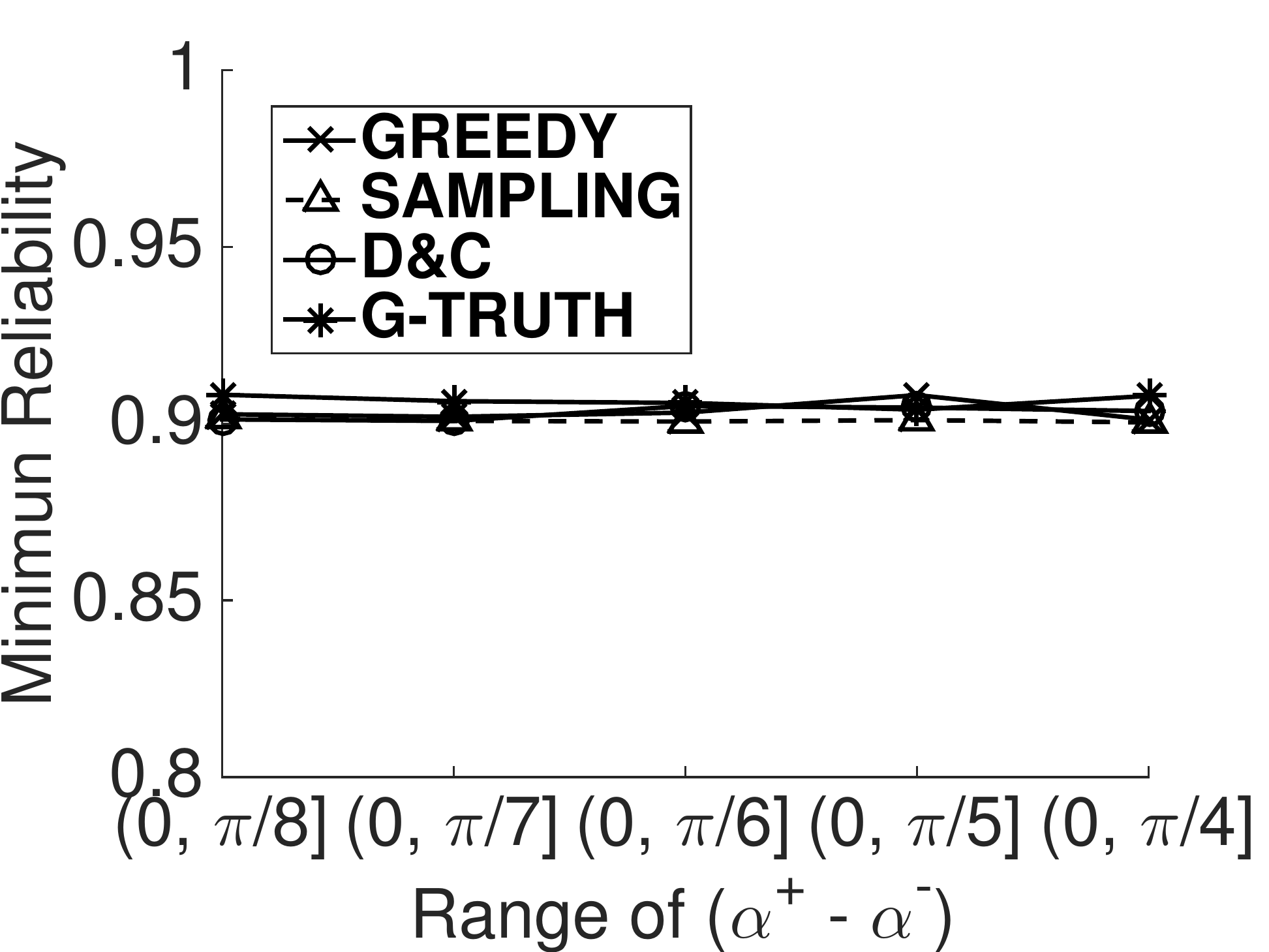}}\vspace{-2ex}
     \label{subfig:AngleReliabilitySkewed}}
     \subfigure[][{\scriptsize Summation of Diversity}]{
       \scalebox{0.18}[0.18]{\includegraphics{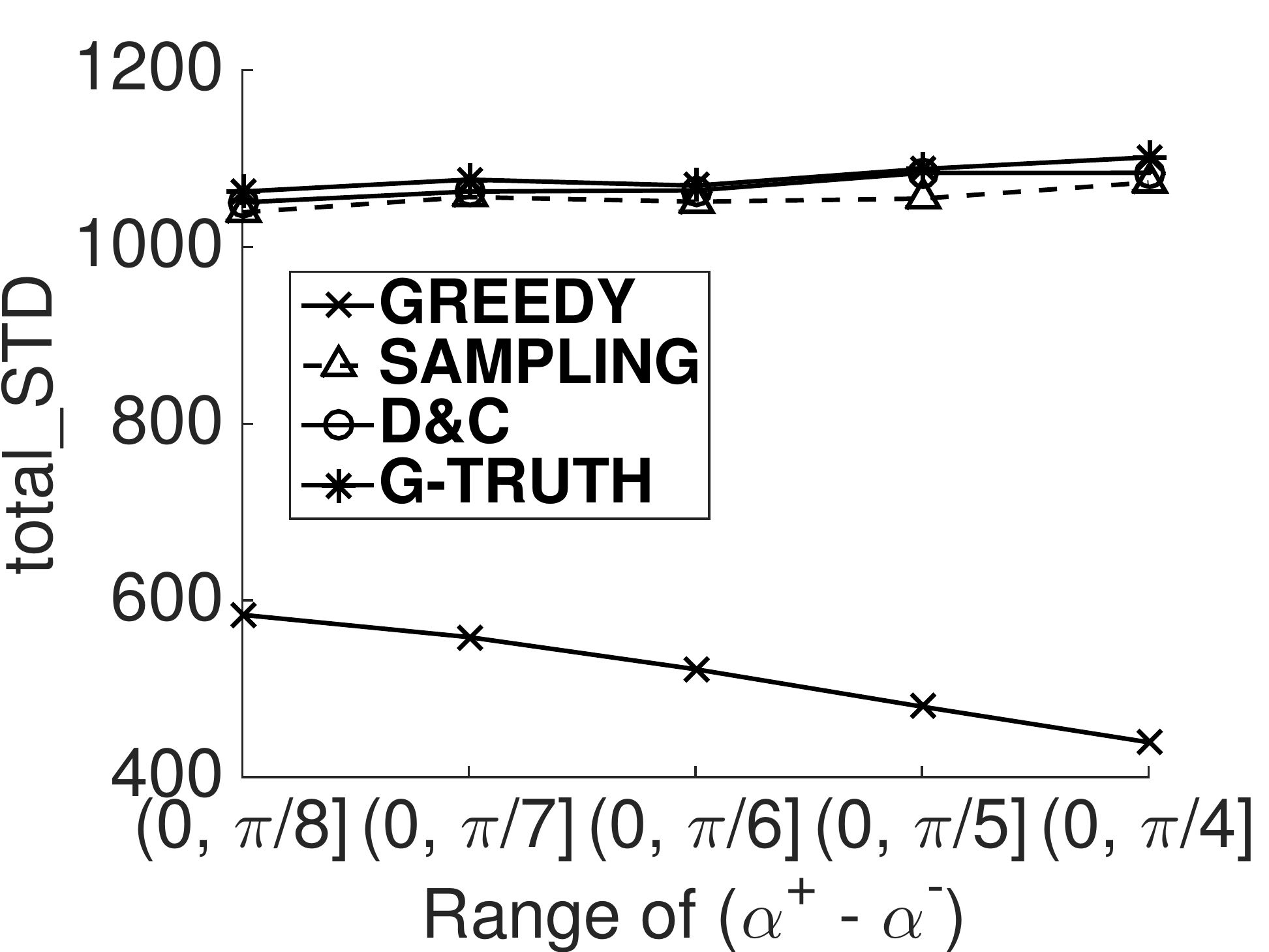}}\vspace{-2ex}
     \label{subfig:AngleDiversitySkewed}}
      \vspace{-3ex}
\caption{\small Effect of the Range of $(\alpha_j^+ - \alpha_j^-)$ (SKEWED)} 
  \label{fig:angleSkewed}
  \end{figure}

\end{document}